 \definecolor{darkblue}{rgb}{0,0,.5}
\newcommand{\C}[1]{{\cal{#1}}}
\newcommand{\bb}[1]{\textbf{#1}}
\newcommand{\mf}[1]{{\mathfrak{#1}}}
\newcommand{\lr}[1]{{\left\langle {#1}\right\rangle}}
\newcommand{\rl}[0]{{\rangle\langle}}
\def\dbar{{\mathchar'26\mkern-12mu d}}
\begin{document}

\title{An operational approach to quantum stochastic thermodynamics}

\author{Philipp Strasberg}
\affiliation{Physics and Materials Science Research unit, University of Luxembourg, L-1511 Luxembourg, Luxembourg}
\affiliation{F\'isica Te\`orica: Informaci\'o i Fen\`omens Qu\`antics, Departament de F\'isica, Universitat Aut\`onoma de Barcelona, ES-08193 Bellaterra (Barcelona), Spain}

\date{\today}

\begin{abstract}
 We set up a framework for quantum stochastic thermodynamics based solely on experimentally controllable, but 
 otherwise arbitrary interventions at discrete times. Using standard assumptions about the system-bath dynamics and 
 insights from the repeated interaction framework, we define internal energy, heat, work and entropy at the trajectory 
 level. The validity of the first law (at the trajectory level) and the second law (on average) is established. The 
 theory naturally allows to treat incomplete information and it is able to smoothly interpolate between a 
 trajectory based and ensemble level description. We use our theory to compute the thermodynamic efficiency of recent 
 experiments reporting on the stabilization of photon number states using real-time quantum feedback control. Special 
 attention is also payed to limiting cases of our general theory, where we recover or contrast it with previous 
 results. We point out various interesting problems, which the theory is able to address rigorously, such as the 
 detection of quantum effects in thermodynamics. 
\end{abstract}

\maketitle

\newtheorem{mydef}{Definition}[section]
\newtheorem{lemma}{Lemma}[section]
\newtheorem{thm}{Theorem}[section]
\newtheorem{crllr}{Corollary}[section]
\theoremstyle{remark}
\newtheorem{rmrk}{Remark}[section]

\section{Introduction}

The nonequilibrium thermodynamics of small Markovian systems is well-studied for decades if we are interested 
only in ensemble averaged quantities of internal energy, heat, work or entropy~\cite{SchnakenbergRMP1976, HillBook1977, 
SpohnLebowitzAdvChemPhys1979, AlickiJPA1979, LindbladBook1983, KosloffEntropy2013}. 
For classical systems it became clear during the past 25 years that also fluctuations in thermodynamic quantities bear 
important information and that those fluctuations are constrained by fundamental symmetry relations valid arbitrary far 
from equilibrium. These symmetry relations are known as fluctuation theorems~\cite{EvansSearlesAdvPhy2002, 
JarzynskiAnnuRevCondMat2011}. For a given realization of a stochastic process an understanding of the 
fluctuation theorem required to extend the ensemble averaged energetic~\cite{SekimotoPTPS1998, SekimotoBook2010} and 
entropic~\cite{SeifertPRL2005} description to the level of single stochastic trajectories. The resulting theoretical 
framework is called stochastic thermodynamics~\cite{SeifertRPP2012, VandenBroeckEspositoPhysA2015}. 

Quantum stochastic thermodynamics tries to generalize classical stochastic thermodynamics to systems whose quantum 
nature cannot be neglected. Obviously, the very definition of a trajectory dependent quantity is non-trivial as any 
measurement disturbs the system and the meaning of a `trajectory' is \emph{a priori} not clear. We note that 
incomplete and disturbing measurements are also prevalent in classical systems~\cite{BechhoeferRMP2005}, but exploring 
their consequences for classical stochastic thermodynamics has raised relatively little attention so 
far~\cite{RibezziCrivellariRitortPNAS2014, AlemanyRibezziCrivellariRitortNJP2015, BechhoeferNJP2015, 
GarciaGarciaLahiriLacostePRE2016, WaechtlerStrasbergBrandesNJP2016, PolettiniEspositoPRL2017, 
PolettiniEspositoJSP2019}. 

Soon after the discovery of classical fluctuation theorems, much effort was devoted to derive fluctuation theorems for 
quantum systems. A theoretically successful strategy is the two-point measurement 
approach~\cite{EspositoHarbolaMukamelRMP2009, CampisiHaenggiTalknerRMP2011}. It requires to measure the 
energy of the system \emph{and} the bath at the beginning and at the end of the thermodynamic 
process. Obviously, for a bath with its prosaic $10^{23}$ degrees of freedom such a scheme is not even for a classical 
system practically feasible. In addition, the resulting statistics for internal energy and work cannot fulfill the 
first law if the initial state is not diagonal in the energy eigenbasis~\cite{PerarnauLlobetEtAlPRL2017}. 
Nevertheless, within this approach quantum fluctuation theorems can be derived, which are formally identical to their 
classical counterpart. Thus, by measuring the whole universe (system plus bath), the two-point measurement approach 
circumvents the need to define thermodynamic quantities along a specific system trajectory. Also alternative and 
complementary approaches based on interferometric measurements~\cite{MazzolaDeChiaraPaternostroPRL2013, 
DornerEtAlPRL2013, BatalhaoEtAlPRL2014, SolinasGasparinettiPRE2015, SolinasGasparinettiPRA2016}, a single projective 
measurement~\cite{CerrilloBuserBrandesPRB2016, CerisolaEtAlNatComm2017} or no measurement at all~\cite{AbergPRX2018, 
WhitneyPRB2018} have been put forward and the semiclassical limit was studied too~\cite{JarzynskiQuanRahavPRX2015, 
ZhuEtAlPRE2016, GarciaMataEtAlPRE2017}. To conclude, even though those approaches are theoretically powerful, they are 
experimentally hard to confirm and an important feature of classical stochastic thermodynamics is still missing, namely 
the definition of internal energy and entropy along a given `quantum trajectory'. 

Exceptions are quantum systems which, when perfectly observed in the energy eigenbasis, follow a Markovian rate master 
equation. This is approximately the case in electronic nanostructures (quantum dots) in the sequential tunneling 
regime~\cite{UtsumiEtAlPRB2010, KungEtAlPRX2012, SairaEtAlPRL2012, SchallerBook2014}, where the framework of classical 
stochastic thermodynamics was carried over one by one. Interestingly, trying to adopt this picture to more general 
quantum dynamics results in unconventional definitions for thermodynamic quantities~\cite{EspositoMukamelPRE2006}, not 
to mention the measurement problem. This further demonstrates the need for a radically different approach to quantum 
stochastic thermodynamics. 

One such approach makes use of the framework of repeated interactions~\cite{HorowitzPRE2012, HorowitzParrondoNJP2013, 
BenoistEtAlCM2018}. In there, the static bath is replaced by an external stream of ancilla systems, which are put into 
contact with the system one by one and are designed to simulate a thermal bath (arbitrary initial states of the bath 
were recently treated in Ref.~\cite{ManzanoHorowitzParrondoPRX2018}). If the external systems are projectively measured 
before and after the interaction, a trajectory based formulation becomes possible similar to classical stochastic 
thermodynamics. Although such a description yields theoretical insights, in experimental reality a system is usually 
also in permanent contact with a bath. 

An experimentally closer approach uses a technique, which was discovered in quantum optics in order to describe the 
stochastic evolution of a quantum system based on monitoring the environment of the 
system~\cite{DalibardCastinMolmerPRL1992, GardinerParkinsZollerPRA1992, CarmichaelBook1993}. Given such a measurement 
scheme, the system dynamics can be `unraveled' by describing it in terms of a stochastic Schr\"odinger or master 
equation. Combined with this dynamical description, researchers recently applied the ideas of stochastic thermodynamics 
to such quantum systems~\cite{HekkingPekolaPRL2013, AlonsoLutzRomitoPRL2016, ElouardEtAlQInf2017, DresselEtAlPRL2017, 
ElouardEtAlNJP2017, ManikandanElouardJordanPRA2019, ElouardMohammadyBook2018}; a completely general picture is, however, 
still missing. For instance, a trajectory dependent system entropy was never introduced making it hard to study entropy 
production along a single trajectory or on average (specific fluctuation theorems based on a particular choice of the 
backward dynamics were studied in Refs.~\cite{ElouardEtAlQInf2017,  ElouardEtAlNJP2017, ManikandanElouardJordanPRA2019, 
ElouardMohammadyBook2018}; we will come back to this at the end). Furthermore, the above publications focused only on 
efficient measurements in which the state of the system along a particular trajectory is always pure (for some specific 
scenarios first steps were already undertaken to overcome this limitation~\cite{AlonsoLutzRomitoPRL2016, 
ElouardEtAlNJP2017}). Finally, only simple protocols excluding feedback control have been studied so far 
(Refs.~\cite{AlonsoLutzRomitoPRL2016, ElouardEtAlQInf2017} consider also very simple 
feedback schemes for specific systems). 

To conclude, apart from a few model specific studies, a common feature of \emph{all} previous approaches 
is the reliance on a perfectly monitored system and environment such that the system is always in a pure state 
along every trajectory. In this sense, there is no essential departure from the two-point measurement scheme in which 
perfect knowledge of every involved degree of freedom is crucial. 

\subsection{Results and outline}

We here put forward a novel approach, which we propose to call \emph{operational} quantum stochastic thermodynamics 
because it places the experimenter in the foreground. A `stochastic trajectory' -- and the corresponding thermodynamic 
quantities internal energy, heat, work and entropy along such a trajectory -- are defined \emph{solely} in terms of 
experimentally meaningful interventions or control operations of the system dynamics. Dynamically, our description 
rests on recent theoretical progress in describing `quantum causal models' or `quantum stochastic 
processes'~\cite{ChiribellaDArianoPerinottiPRL2008, ChiribellaDArianoPerinottiPRA2009, CostaShrapnelNJP2016, 
OreshkovGiarmatziNJP2016, AllenEtAlPRX2017, PollockEtAlPRL2018,  PollockEtAlPRA2018, MilzPollockModiPRA2018, 
MilzEtAlArXiv2017, SakuldeeEtAlJPA2018}. Within this picture it is possible to describe the effect of arbitrary 
control operations happening at arbitrary discrete times applied to an arbitrary quantum system in an experimentally 
measurable way. It is different from conventional quantum trajectory approaches and we will start the paper by 
discussing it in Sec.~\ref{sec process tensor}. 

In Sec.~\ref{sec process tensor repeated interactions} we then connect this approach to the framework of repeated 
interactions. Partially based on insights from earlier work~\cite{StrasbergEtAlPRX2017}, we will see in 
Sec.~\ref{sec general stoch thermo} that this allows us to find an unambiguous first and second law of 
thermodynamics for each single control operation. 

The only standard assumption we are here using is that the system in \emph{absence} of control operations can be modeled 
by a quantum master equation with a transparent thermodynamic interpretation describing a driven system coupled to a 
single heat bath.\footnote{An extension beyond this 
Markovian picture is, however, possible in some cases, see Sec.~\ref{sec outlook}.} Based on the repeated interaction 
picture, we will then see in Sec.~\ref{sec general stoch thermo} that the definitions of internal energy and system 
entropy emerge naturally out of the framework if we properly take into account all interacting subsystems. In fact, 
following the credo ``information is physical''~\cite{LandauerPhysTod1991}, we will see that it is necessary to include 
the full information generated by the measurements into the entropic balance from the beginning on. With this step 
we also depart from the approaches reviewed above, which need to be modified in presence of feedback control (see 
Ref.~\cite{ParrondoHorowitzSagawaNatPhys2015} for an introduction). The first law at the trajectory level 
and the second law on average is finally verified. 

This concludes the first part of the manuscript, which is about the basic framework of operational quantum stochastic 
thermodynamics. Its novelties are: 

(1) It does neither rely on the ability to have control about the environment nor does it require continuous 
measurements. 

(2) By allowing to treat any kind of incomplete information, it respects experimental reality where every measurement is imprecise and imperfect. 

(3) It shows that any conceivable feedback scenario has a consistent thermodynamic interpretation.\footnote{This 
includes the case of real-time feedback control, where -- in contrast to deterministic feedback control where the time 
of measurement and feedback are pre-determined~\cite{ParrondoHorowitzSagawaNatPhys2015} -- the control strategy is 
adapted during the run of the experiment. It also includes the case of time-delayed feedback control. }

(4) The notion of stochastic entropy for a quantum system is defined and the second law follows without the need to 
introduce any `backward' dynamics. 

(5) The framework reveals that quantum stochastic thermodynamics is \emph{more} than a mere extension of classical 
stochastic thermodynamics. Any measurement strategy has in general a non-trivial impact on the quantum system and hence, 
there is a plurality of first and second laws in quantum thermodynamics depending on how we measure the system. Notice 
that these many laws of thermodynamics are conceptually different from the many second laws of 
Ref.~\cite{BrandaoEtAlPNAS2014}. 

The rest of the paper is about illuminating applications and special cases of the general theory: 

(6) To illustrate point (2) and (3), we analyze in Sec.~\ref{sec example} the quantum stochastic thermodynamics 
of recent experiments reporting on the preparation and stabilization of photon number states~\cite{SayrinEtAlNature2011, 
ZhouEtAlPRL2012}. We uncover that the efficiency to \emph{prepare} such states is remarkably high. 

(7) We consider the case of projective measurements in detail and compare our definitions with the recently 
introduced notion of ``quantum heat''~\cite{ElouardEtAlQInf2017} in Sec.~\ref{sec projective measurements}. 

(8) In Sec.~\ref{sec two point meas approach} we provide a resolution to the no-go theorem derived by Perarnau-Llobet 
\emph{et al.}~\cite{PerarnauLlobetEtAlPRL2017}, which (in a nutshell) shows that the conventional definition of work 
used in the two-point measurement scheme~\cite{EspositoHarbolaMukamelRMP2009, CampisiHaenggiTalknerRMP2011} is 
doubtful. Indeed, we show that it is inconsistent with our definition of stochastic work. 

(9) Secs.~\ref{sec standard quantum thermo},~\ref{sec repeated interactions ensemble level} 
and~\ref{sec standard stoch thermo} provide important consistency checks. We show that the definitions of standard 
quantum thermodynamics~\cite{SpohnLebowitzAdvChemPhys1979, AlickiJPA1979, LindbladBook1983, KosloffEntropy2013} and 
the repeated interaction framework~\cite{StrasbergEtAlPRX2017} are contained in our general approach. They arise, 
however, \emph{not} by averaging over many trajectories, but by deciding not to do any measurements at all. 
In the limit of a perfectly observed classical system we recover the definitions of internal energy, heat and work 
of standard stochastic thermodynamics. Only our second laws differ because our framework remains valid in case of 
feedback control, whereas the conventional framework~\cite{SekimotoBook2010, SeifertRPP2012, 
VandenBroeckEspositoPhysA2015} needs to be modified then~\cite{ParrondoHorowitzSagawaNatPhys2015}. 

(10) In Secs.\ref{sec getting rid of units} and~\ref{sec quantum stoch thermo without theory} we discuss particularly 
interesting cases, which allow to reduce the complexity of our general framework. 

The paper ends with some remarks and an outlook. Sec.~\ref{sec final remarks} discusses the case of multiple heat 
baths, possible `second laws' that follow from a time-reversed process, and the necessity to use the repeated 
interaction framework and to focus on incomplete information from the beginning on. In Sec.~\ref{sec outlook} we point 
out to interesting future applications such as finding true quantum features in quantum heat engines, relations to 
Leggett-Garg inequalities and the detection of non-Markovian effects in thermal machines. 

\subsection{Basic notation}

The state of a system $X$ at time $t$ is described by a density operator $\rho_X(t)$. The corresponding Hilbert space 
of the system is denoted by $\C H_X$ and the Hamiltonian by $H_X$ or $H_X(\lambda_t)$ if it depends on an externally 
controlled time-dependent parameter $\lambda_t$. The von Neumann entropy of an arbitary state $\rho_X$ is defined as 
$S_\text{vN}(\rho_X) \equiv -\mbox{tr}_X\{\rho_X\ln\rho_X\}$ and the Shannon entropy of an arbitrary probability 
distribution $p(x)$ is $S_\text{Sh}[p(x)] \equiv -\sum_x p(x)\ln p(x)$. To characterize the correlations of a 
bipartite system $XY$ in state $\rho_{XY}$, we use the always positive mutual information 
$I_{X:Y} \equiv S_\text{vN}(\rho_X) + S_\text{vN}(\rho_Y) - S_\text{vN}(\rho_{XY})$. It is closely related to the 
always positive relative entropy $D[\rho||\sigma]\equiv\mbox{tr}\{\rho(\ln\rho-\ln\sigma)\}$ by noting that 
$I_{X:Y} = D[\rho_{XY}\|\rho_X\otimes\rho_Y]$ where $\rho_{X/Y} \equiv \mbox{tr}_{Y/X}\{\rho_{XY}\}$ 
denotes the marginal state. Furthermore, we denote superoperators, which map operators onto operators, by calligraphic 
letters, e.g., $\C U, \C V, \C P$, etc.

Below, we will see that a stochastic trajectory is specified by a sequence of measurement results or outcomes 
$r_n,\dots,r_1$, which were obtained at times $t_n> \dots> t_1$. The sequence of outcomes will be denoted by 
$\bb r_n \equiv (r_n,\dots,r_1)$. The state of a system $X$ at time $t>t_n$ conditioned on such a sequence will be 
denoted by $\rho_X(t,\bb r_n)$. The ensemble averaged state is given by 
$\rho_X(t) = \sum_{\bb r_n} p(\bb r_n) \rho_X(t,\bb r_n)$ where $p(\bb r_n)$ denotes the probability of obtaining 
the sequence of outcomes $\bb r_n$. We will also keep this notation for thermodynamic quantities such as internal 
energy $E$, heat $Q$, work $W$ and entropy $S$ (which possibly have additional sub- and superscripts). This means, for 
instance, that the stochastic internal energy depending on the outcomes $\bb r_n$ is denoted by $E(t,\bb r_n)$ 
whereas the ensemble averaged internal energy is written $E(t) = \sum_{\bb r_n} p(\bb r_n) E(t,\bb r_n)$. 

\section{The process tensor}
\label{sec process tensor}

Classical stochastic thermodynamics is based on the theory of classical stochastic processes. A corresponding 
quantum thermodynamic framework needs to be based on the theory of quantum stochastic processes. There has been 
recently large progress on this topic and we will here use the process tensor to represent a quantum stochastic 
process~\cite{PollockEtAlPRL2018, PollockEtAlPRA2018, MilzPollockModiPRA2018, MilzEtAlArXiv2017, SakuldeeEtAlJPA2018}.
It is the extension of `quantum superchannels'~\cite{ChiribellaDArianoPerinottiEPL2008, ModiSR2012} to multiple control 
operations and it is closely related to the `quantum comb' framework studied in 
Refs.~\cite{ChiribellaDArianoPerinottiPRL2008, ChiribellaDArianoPerinottiPRA2009}. Similar frameworks have been 
also developed within the emergent field of quantum causal modelling~\cite{CostaShrapnelNJP2016, 
OreshkovGiarmatziNJP2016, AllenEtAlPRX2017} and even earlier attempts in that direction can be found in 
Refs.~\cite{LindbladCMP1979, AccardiFrigerioLewis1982}. The basic insight behind this formulation is to treat 
the control operations performed on the system as the elementary objects and not the state of the system itself 
because the latter can in general not be fully controlled. Here, the terminology `control operation' is used in a wide 
sense and could describe any action of an external agent such as measurements, unitary kicks, state preparations, noise 
addition, feedback control operations, etc. Mathematically, we only require that each control operation is described by 
a completely positive (CP) map. The following review about the basics of the process tensor requires some knowledge 
about quantum operations and quantum measurement theory, see Refs.~\cite{KrausBook1983, NielsenChuangBook2000, 
HolevoBook2001b, WisemanMilburnBook2010, JacobsBook2014} for introductory texts. 

As usual we consider a system $S$ coupled to a bath $B$ described by an arbitrary initial system-bath state 
$\rho_{SB}(t_0)$. The composite system-bath state evolves unitarily up to time $t_1\ge t_0$ according to the 
Liouville-von Neumann equation $\partial_t\rho_{SB}(t) = -i[H_\text{tot}(\lambda_t),\rho_{SB}(t)]$ ($\hbar\equiv1$) 
with global Hamiltonian 
\begin{equation}
 H_\text{tot}(\lambda_t) = H_S(\lambda_t) + H_{SB} + H_B.
\end{equation}
Here, the system Hamiltonian $H_S$ might depend on some arbitrary time dependent control protocol $\lambda_t$, but not 
the interaction Hamiltonian $H_{SB}$ and the bath Hamiltonian $H_B$. 
The resulting unitary evolution is described by the superoperator 
\begin{equation}\label{eq superoperator unitary}
 \C U_{1,0}\rho_{SB}(t_0) \equiv U(t_1,t_0)\rho_{SB}(t_0) U^\dagger(t_1,t_0)
\end{equation}
where $U(t_1,t_0) \equiv \C T_+\exp[-i\int_{t_0}^{t_1} dt H_\text{tot}(\lambda_t)]$ with the time ordering operator 
$\C T_+$. 

Then, at time $t_1 > t_0$ we interrupt the evolution by a CP operation $\C A(r_1)$, which only acts on the system and 
yields `outcome' $r_1$ (for instance, the result of a projective measurement). Mathematically, we write the operation as 
\begin{equation}\label{eq step 1}
 \tilde\rho_{SB}(t_1^+,r_1) = [\C A(r_1)\otimes\C I_B]\rho_{SB}(t_1^-).
\end{equation}
Here, $t_1^\pm = \lim_{\epsilon\searrow0} (t_1\pm\epsilon)$ denotes a time shortly after or before $t_1$ and $\C I_B$ 
denotes the identity superoperator acting on $B$. Note that we assume the control operation to happen instantaneously. 
It ensures that the experimenter has complete control 
over the operation: if the control operations takes longer, it would also affect the bath and a clear separation of the 
dynamics into a dynamics induced by the bath or the external agent becomes problematic. The final state of 
knowlegde after the operation $\tilde\rho_{SB}(t_1^+,r_1)$ can explicitly depend on the outcome $r_1$. Since $\C A(r_1)$ 
is CP, it admits an operator-sum (Kraus) representation of the form 
\begin{equation}\label{eq CP map}
 \C A(r_1)\rho_S = \sum_\alpha A_\alpha(r_1)\rho_SA_\alpha^\dagger(r_1),
\end{equation}
but we do not require it to be trace perserving (TP). For this reason we have used a `tilde' in Eq.~(\ref{eq step 1}) 
to emphasize that the state is not normalized. The probability to observe outcome $r_1$ at time $t_1$ is 
$p(r_1) = \mbox{tr}_{SB}\{\tilde\rho_{SB}(t_1^+,r_1)\}$. Then, the normalized system state after the control operation 
at time $t_1$ becomes $\rho_{S}(t_1^+,r_1) = \C A(r_1)\rho_S(t_1^-)/p(r_1)$. Notice that the map $\C A(r_1)/p(r_1)$ is 
CPTP, but non-linear in the state $\rho_S(t_1^-)$. It is the quantum analog of Bayes' rule. The average system state is 
accordingly 
\begin{equation}
 \rho_S(t_1^+) = \sum_{r_1} p(r_1)\rho_{S}(t_1^+,r_1) = \sum_{r_1}\C A(r_1)\rho_S(t_1^-).
\end{equation}
This would also correspond to our state of knowledge if we ignore the outcome $r_1$. Notice that the average 
control operation $\sum_{r_1} \C A(r_1)$ is now a CPTP map and can be written as 
\begin{equation}
 \sum_{r_1} \C A(r_1)\rho_S = \sum_{r_1,\alpha} A_\alpha(r_1)\rho_S A_\alpha^\dagger(r_1)
\end{equation}
with $\sum_{r_1,\alpha} A_\alpha^\dagger(r_1) A_\alpha(r_1) = 1_S$. 

We then iterate the above procedure by letting the joint system-bath state evolve unitarily up to time $t_2\ge t_1$: 
$\rho_{SB}(t_2^-,r_1) = \C U_{2,1}(r_1)\rho_{SB}(t_1^+,r_1)$. Now, however, the unitary operation is allowed to depend 
on $r_1$ by changing the control protocol of the system Hamiltonian $H_S[\lambda_t(r_1)]$. This actually corresponds to 
the simplest form of measurement-based quantum feedback control. Then, at time $t_2$ we subject the system to another 
CP control operation $\C A(r_2|r_1)$, which is also allowed to depend on $r_1$ and which gives outcome $r_2$. Thus, 
$\rho_{SB}(t_2^+,\bb r_2) = [\C A(r_2|r_1)\otimes\C I_B]\rho_{SB}(t_2^-,r_1)$, where $\bb r_2 = (r_2,r_1)$. 

We can re-iterate the above procedure by letting the external agent interrupt the unitary system-bath 
evolution at times $t_n> t_{n-1}> \dots> t_1$. Let us denote by $t$ an arbitrary time after the $n$'th but before 
the $(n+1)$'th control operation, i.e., $t_{n+1} > t > t_n$. The unnormalized state of the system conditioned on the 
sequence of outcomes $\bb r_n$ at such a time $t$ is then given by 
\begin{align}
  & \tilde\rho_{S}(t,\bb r_n) = \mf T[\C A(r_n|\bb r_{n-1}),\dots,\C A(r_1)]	\\
  & \equiv \mbox{tr}_B\left\{\C U_{t,n}(\bb r_n)\C A(r_n|\bb r_{n-1})\dots\C U_{2,1}(r_1)\C A(r_1)\C U_{1,0}\rho_{SB}(t_0)\right\}. \nonumber
\end{align}
Here, we have introduced the \emph{process tensor} $\mf T$. Its variable inputs are the set of control operations 
$\{\C A(r_i|\bb r_{i-1})\}_{i=1}^n$, but \emph{not} the initial state of the system, the bath or the composite. 
The trace of the process tensor gives the probability to observe the sequence of outcomes $\bb r_n$, 
\begin{equation}\label{eq probability process tensor}
 p(\bb r_n) = \mbox{tr}_S\{\mf T[\C A(r_n|\bb r_{n-1}),\dots,\C A(r_1)]\}
\end{equation}
such that the normalized state of the system can be written as 
\begin{equation}
 \rho_{S}(t,\bb r_n) = \frac{\mf T[\C A(r_n|\bb r_{n-1}),\dots,\C A(r_1)]}{p(\bb r_n)}.
\end{equation}
The process tensor is an operationally well-defined object for any open system dynamics (in particular for any 
environment) for any possible, physically admissible form of interventions in an experiment. It is different from 
typical quantum trajectory methods or quantum jump expansions~\cite{DalibardCastinMolmerPRL1992, 
GardinerParkinsZollerPRA1992, CarmichaelBook1993, HolevoBook2001b, WisemanMilburnBook2010, JacobsBook2014}, which rely 
on \emph{continuously} monitoring the \emph{environment} of the system. This framework is included as a limiting case 
in the process tensor, but it does not rely on it: any set of discrete times is allowed and the (often uncontrollable) environment does not need to be monitored. For further research on this topic see 
Refs.~\cite{ChiribellaDArianoPerinottiPRL2008, ChiribellaDArianoPerinottiPRA2009, CostaShrapnelNJP2016, 
OreshkovGiarmatziNJP2016, AllenEtAlPRX2017, PollockEtAlPRL2018,  PollockEtAlPRA2018, MilzPollockModiPRA2018, 
MilzEtAlArXiv2017, SakuldeeEtAlJPA2018}.

\section{Process tensor from repeated interactions}
\label{sec process tensor repeated interactions}

In practise the control operations $\C A(r_n|\bb r_{n-1})$ do not happen spontaneously, but require an active 
intervention from the outside. They are typically implemented by letting the system interact for a short 
time with an externally prepared apparatus (e.g., a memory or detector). It is the interaction time and the initial 
state of the apparatus, which can be usually well-controlled experimentally. This insight will naturally 
lead us to the framework of repeated interactions, in which we will model at least parts of the external apparatus 
explicitly. 

The main mathematical insight of this section rests on Stinespring's theorem~\cite{StinespringPAMS1955}, which states 
that any CPTP map $\C A$ can be seen as the reduced dynamics of some unitary evolution in an extended space. More 
precisely, we can always write 
\begin{equation}\label{eq Stinespring CPTP}
 \C A\rho_S = \mbox{tr}_U\{V \rho_S\otimes\rho_U V^\dagger\},
\end{equation}
where we labeled the additional subsystem by $U$ for `unit' in view of the thermodynamic framework considered later on 
and in unison with Ref.~\cite{StrasbergEtAlPRX2017}. The unit is in an initial state $\rho_U$ and $V$ denotes the 
unitary operator which acts jointly on $SU$. Furthermore, any non-trace preserving CP map $\C A(r)$ with outcome 
$r$ can be modeled as~\cite{HolevoBook2001b} 
\begin{equation}\label{eq Stinespring CP}
 \C A(r)\rho_S = \mbox{tr}_U\{P_U(r)V \rho_S\otimes\rho_U V^\dagger P_U(r)\},
\end{equation}
where each positive operator $P_U(r)$ acts only on $\C H_U$ and fulfills $\sum_r P^2_U(r) = 1_U$. Notice that 
Eq.~(\ref{eq Stinespring CPTP}) can be recovered from Eq.~(\ref{eq Stinespring CP}) either by choosing $P_U(r) = 1_U$ 
or by summing over $r$. In accordance with our previous superoperator notation, we introduce 
$\C P_U(r)\rho_U \equiv P_U(r)\rho_U P_U(r)$ and $\C V\rho_{SU} \equiv V\rho_{SU} V^\dagger$ such that we can write 
Eq.~(\ref{eq Stinespring CP}) in the shorter form $\C A(r)\rho_S = \mbox{tr}_U\{\C P_U(r)\C V \rho_S\otimes\rho_U\}$. 

It is worth to remark that the above representation of the control operation is not unique. What we are aiming at 
here is a \emph{minimal} consistent thermodynamic description for any given set of control operations. If additional 
physical insights are available, they have to be taken into account (see Sec.~\ref{sec example} for a clear 
experimental example). The only important point, however, is that the general operator-sum 
representation~(\ref{eq CP map}) can be decomposed into more primitive operations (a unitary and a measurement of the 
unit). 

The whole process tensor $\mf T[\C A(r_n|\bb r_{n-1}),\dots,\C A(r_1)]$ can then be seen as describing the reduced 
dynamics of a system coupled to a stream of units, which interact sequentially at times $t_n > \dots > t_1$ with the 
system, see Fig.~\ref{fig setup}. This constitutes the \emph{framework of repeated interactions}. Then, the unnormalized 
joint state of the system and all units, which have interacted with the system up to time $t$ ($t_{n+1} > t > t_n$) with 
outcome $\bb r_n$, can be written as 
\begin{widetext}
 \begin{align}
  & \tilde\rho_{SU(\bb n)}(t,\bb r_n) = \label{eq state SUn general}    \\
  & \mbox{tr}_B\big\{\C U_{t,t_n}(\bb r_n)\C P_{U(n)}(r_n|\bb r_{n-1})\C V_{SU(n)}(\bb r_{n-1}) \dots \C U_{2,1}(r_1)\C P_{U(1)}(r_1)\C V_{SU(1)}\C U_{1,0} \big[\rho_{SB}(t_0)\otimes\rho_{U(n)}(\bb r_{n-1})\otimes\dots\otimes\rho_{U(1)}\big]\big\}.    \nonumber
 \end{align}
\end{widetext}
Except for the unitary system-bath evolution superoperator $\C U$ (where the subscripts denote time intervals), 
subscripts are used to denote the Hilbert space on which the respective (super-) operator is acting. In this respect, 
the joint space of all $n$ units is denoted by $U(\bb n)$. Notice that $\C V_{SU(n)}(\bb r_{n-1})$ depends on all 
previous outcomes $\bb r_{n-1}$, but due to causality it cannot depend on the $n$'th outcome $r_n$. The same holds true 
for the initial state $\rho_{U(n)}(\bb r_{n-1})$ of the $n$'th unit and also the chosen projection operator 
$\C P_{U(n)}(r_n|\bb r_{n-1})$ can depend on $\bb r_{n-1}$. Therefore, the external agent has all the freedom 
she needs to engineer a desired control operation $\C A(r_n|\bb r_{n-1})$. By construction, after tracing out the 
units, we obtain the process tensor for the system 
$\mf T[\C A(r_n|\bb r_{n-1}),\dots,\C A(r_1)] = \mbox{tr}_{U(\bb n)}\{\tilde\rho_{SU(\bb n)}(t,\bb r_n)\}$. 
As it is in most situations obvious from the context which superoperator acts on which object living in which space, 
we will usually drop the subscripts $S, U(n), \dots$ on superoperators. 

\begin{figure}
 \centering\includegraphics[width=0.38\textwidth,clip=true]{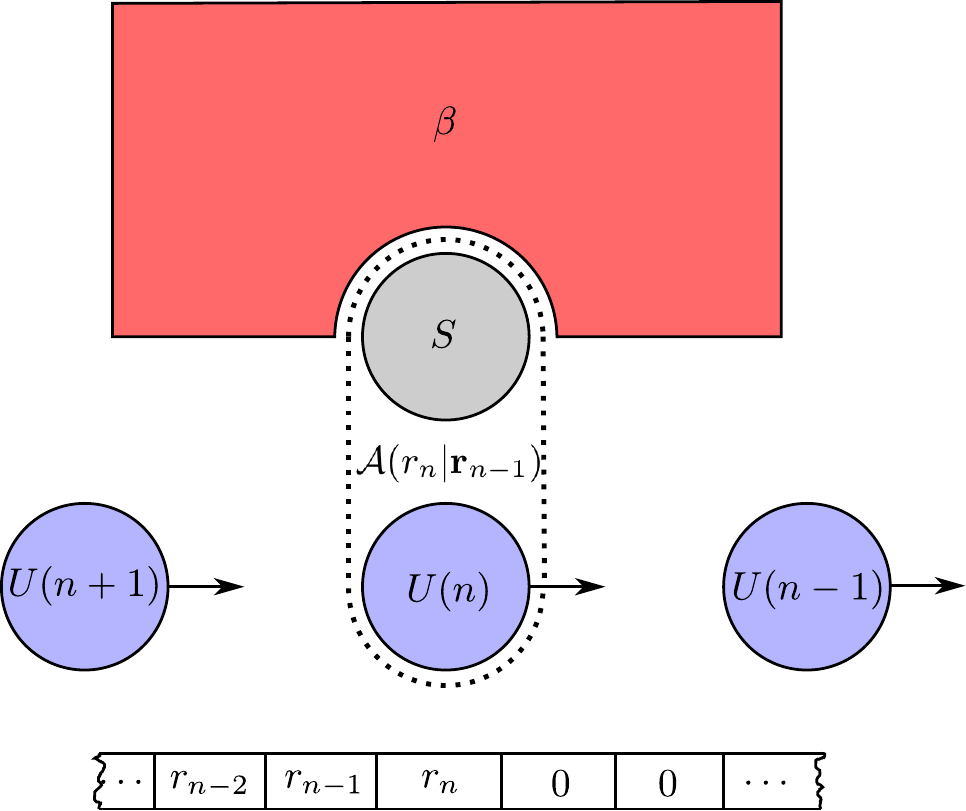}
 \label{fig setup} 
 \caption{Sketch of the setup: A system $S$ (grey circle) is in contact with a bath $B$ (red box, later taken to be at 
 inverse temperature $\beta$) undergoing in general dissipative dynamics. The evolution of the open quantum system 
 is interrupted at times $t_n$ by control operations $\C A(r_n|\bb r_{n-1})$, which are triggered by the interaction 
 with an external ancilla system called the unit $U(n)$ (blue circles). Each control operation has an outcome $r_n$, 
 which is recorded in a memory (e.g., a tape of bits) and future control operations are allowed to depend on previous 
 outcomes. The memory for future outcomes is set in a standard state `0'. }
\end{figure}

\section{Operational quantum stochastic thermodynamics}
\label{sec general stoch thermo}

\subsection{Preliminary considerations}
\label{sec preliminary considerations}

The process tensor is a formal object which does not make any assumptions about the system-bath dynamics. On the 
contrary, the standard ensemble averaged (or better: \emph{unmeasured}) framework of quantum thermodynamics relies on 
a weakly coupled, memoryless and macroscopic bath~\cite{SpohnLebowitzAdvChemPhys1979, AlickiJPA1979, LindbladBook1983, 
KosloffEntropy2013}. In this section we remain within this weak-coupling paradigm 
because possible extensions beyond the weak-coupling and Markovian assumption have only recently raised attention 
(see also Sec.~\ref{sec outlook}). Furthermore, we consider in this section only the case of a single heat bath at 
inverse temperature $\beta = 1/T$ ($k_B\equiv1$). The extension to multiple heat baths is subtle, see 
Sec.~\ref{sec final remarks}. 

Let us focus on the interval $(t_{n-1},t_n)$ (excluding the control operations at the boundaries) and let $\rho_S(t)$ be 
the system state at time $t\in(t_{n-1},t_n)$ (which is later on allowed to depend on $\bb r_{n-1}$). The state functions 
internal energy and system entropy for an arbitrary system state $\rho_S(t)$ are defined as
\begin{align}
 E_S(t) &\equiv  \mbox{tr}_S\{H_S(\lambda_t)\rho_S(t)\}, \\
 S_S(t) &\equiv  S_\text{vN}[\rho_S(t)].
\end{align}
According to the first law, the change in system energy $\Delta E_S^{(n)} \equiv E_S(t_n^-) - E_S(t^+_{n-1})$ can be 
split into heat and work, $\Delta E_S^{(n)} = W_S^{(n)} + Q_S^{(n)}$, by defining 
\begin{align}
 W_S^{(n)}  &\equiv \int_{t^+_{n-1}}^{t^-_n} dt \mbox{tr}_S\left\{\frac{\partial H_S(\lambda_t)}{\partial t}\rho_S(t)\right\},  \label{eq W standard}  \\
 Q_S^{(n)}  &\equiv \int_{t^+_{n-1}}^{t^-_n} dt \mbox{tr}_S\left\{H_S(\lambda_t)\frac{\partial \rho_S(t)}{\partial t}\right\}.  \label{eq Q standard}
\end{align}
Furthermore, the validity of the second law can be also derived and states that the entropy 
production is always positive: 
\begin{equation}\label{eq 2nd law standard}
 \Sigma^{(n)} \equiv \Delta S_S^{(n)} - \beta Q_S^{(n)} \ge 0,
\end{equation}
where $\Delta S_S^{(n)} \equiv S_S(t^-_n) - S_S(t^+_{n-1})$. 

Our goal in the rest of this section is to find definitions of internal energy, work, heat and system entropy 
along a single trajectory, where a trajectory is \emph{defined} by the observed sequence of outcomes $\bb r_n$. 
The sought-after definitions are required to be intuitively meaningful, to fulfill the first law at the trajectory level 
and the second law on average. Further appeal to our definitions will be added in 
Secs.~\ref{sec example},~\ref{sec limiting cases} and~\ref{sec final}. 

Note that, after tomographic reconstruction of the process tensor (see Sec.~\ref{sec process tensor}), we know the 
conditional system states $\rho_S(t_n^\pm,\bb r_n)$ only right before or right after the $n$'th control operation, but 
not in between for $t_{n-1} < t < t_n$. To compute the work~(\ref{eq W standard}) or heat~(\ref{eq Q standard}) 
in between two control operations, additional \emph{theoretical} input is required, e.g., by solving the master 
equation for the system or by other forms of inference. This ensures that we recover the standard weak coupling 
framework of quantum thermodynamics in absence of any control operations (see Sec.~\ref{sec standard quantum thermo}). 
Nevertheless, as it increases the computational effort, we present in 
Sec.~\ref{sec quantum stoch thermo without theory} possible ways to avoid any additional theory input. 

For definiteness, we aim at a stochastic thermodynamic description in the time interval $(t_{n-1},t_n]$ starting shortly 
after the $(n-1)$'th control operation and ending shortly after the $n$'th control operation. The change in any state 
function $X$ over the complete interval is denoted by $\Delta X^{(n]}$, whereas $\Delta X^{(n)}$ denotes the change in 
$(t_{n-1},t_n)$ (excluding the $n$'th control operation) and $\Delta X^\text{ctrl}$ the change due to the control 
operation only. Changes in the respective time intervals of any quantity which is not a state function are denoted 
without a delta ($X^{(n]}$, $X^{(n)}$ or $X^\text{ctrl}$). 

\subsection{Stochastic energy and first law}
\label{sec stochastic energetics}

To formulate the first law at the trajectory level correctly, we need to take into account the internal energy of the 
system and all units. Thus, we define the trajectory dependent internal energy 
\begin{equation}\label{eq internal energy}
  E_{SU(\bb n)}(t,\bb r_n) \equiv \mbox{tr}_{SU(\bb n)}\left\{H_{SU(\bb n)}(\lambda_t,\bb r_n) \rho_{SU(\bb n)}(t,\bb r_n)\right\},
\end{equation}
where $H_{SU(\bb n)}(\lambda_t,\bb r_n) = H_S(\lambda_t,\bb r_n) + \sum_{i=1}^n H_{U(i)}$ denotes the sum of the 
system and all unit Hamiltonians. Since the Hamiltonian is additive, the internal energy splits into its marginal 
contributions in the obvious way, 
\begin{equation}
 E_{SU(\bb n)}(t,\bb r_n) = E_S(t,\bb r_n) + \sum_{i=1}^n E_{U(i)}(t,\bb r_n).
\end{equation}
Notice that it is always simple to get rid of the units in the energetic description by assuming that 
$H_{U(i)} \sim 1_{U(i)}$. However, already the energetic changes of the units can bear some interesting non-trivial 
features. For instance, it is not sufficient to consider only the actual $n$'th unit in the energetic balance: 
in our general theory the energy of previous units can change even though they are \emph{physically decoupled} from the 
system. This phenomenon does not necessarily require quantum entanglement and simply occurs because our state of 
knowlegde about past units $U(i<n)$ can change depending on the outcome $r_n$ (see below). 

In absence of any control operations, the first law simply follows from the preceeding subsection and reads 
\begin{equation}\label{eq 1st law standard}
 \Delta E_S^{(n)}(\bb r_{n-1}) = W_S^{(n)}(\bb r_{n-1}) + Q_S^{(n)}(\bb r_{n-1}),
\end{equation}
because the marginal state of the units does not change and hence, $\Delta E_{U(i)} = 0$ for all $i$. Note that 
the work $W_S^{(n)}(\bb r_{n-1})$ and heat $Q_S^{(n)}(\bb r_{n-1})$ depend on previous outcomes $\bb r_{n-1}$ for two 
reasons: first, the initial system state $\rho_S(t_{n-1}^+,\bb r_{n-1})$ depends on it, and second, the Hamiltonian 
$H(\lambda_t,\bb r_{n-1})$ can be a function of it in case we apply feedback control. 

The first law during the control operation at time $t_n$ is more interesting as the internal energy of both, system and 
units, can change. In total, the energetic cost $E^\text{ctrl}$ of the control operation is defined by 
\begin{equation}\label{eq 1st law ctrl}
E^\text{ctrl}(t_n,\bb r_n) \equiv \Delta E^\text{ctrl}_S(t_n,\bb r_n) + \sum_{i=1}^n \Delta E^\text{ctrl}_{U(i)}(t_n,\bb r_n).
\end{equation}
It is not a state function and can be split into a work and heat like contribution, 
\begin{equation}
 E^\text{ctrl}(t_n,\bb r_n) = W^\text{ctrl}(t_n,\bb r_{n-1}) + Q^\text{ctrl}(t_n,\bb r_n).
\end{equation}
This splitting stems from the convention we used to implement the control operation $\C A(r_n|\bb r_{n-1})$ in the 
repeated interaction framework: we first applied the unitary operation $\C V(\bb r_{n-1})$ to the joint system-unit 
state and afterwards measured the unit via $\C P(r_n)$. In general, we therefore use the 
definitions 
\begin{widetext}
 \begin{align}
  W^\text{ctrl}(t_n,\bb r_{n-1})    &=  \mbox{tr}_{SU(\bb n)}\left\{H_{SU(\bb n)}(\lambda_n,\bb r_{n-1})\left[\C V(\bb r_{n-1})\rho_{SU(\bb{n})}(t_n^-,\bb r_{n-1}) - \rho_{SU(\bb{n})}(t_n^-,\bb r_{n-1})\right]\right\}, \label{eq W ctrl}  \\
  Q^\text{ctrl}(t_n,\bb r_n)        &=  \mbox{tr}_{SU(\bb n)}\left\{H_{SU(\bb n)}(\lambda_n,\bb r_{n-1})\left[\rho_{SU(\bb{n})}(t_n^+,\bb r_n) - \C V(\bb r_{n-1})\rho_{SU(\bb{n})}(t_n^-,\bb r_{n-1})\right]\right\} \label{eq Q ctrl}
 \end{align}
\end{widetext}
with $\lambda_n \equiv \lambda_{t_n}$. Notice that the work-like contribution does not depend on the actual measurement 
outcome $r_n$ and corresponds to the energetic changes caused by a reversible (unitary) operation. The meaning of the 
heat injected during the control operation $Q^\text{ctrl}(t_n,\bb r_n)$ will be discussed further below, but we remark 
that a very similar construction was called `quantum heat' in Ref.~\cite{ElouardEtAlQInf2017}. 
A difference, which turns out to be crucial, is the fact that Elouard \emph{et al.}~applied this definition for 
the system only without including the unit in the description~\cite{ElouardEtAlQInf2017}, which causes different 
interpretations. Furthermore, we are more cautious and do not call it `quantum' heat. For further discussion on this 
topic see Sec.~\ref{sec projective measurements}. 

For now, let us notice that both quantities have some additional important properties. First of all, both can be split 
additively into changes affecting the system or the units, 
\begin{align}
 W^\text{ctrl}(\bb r_{n-1}) &=  W^\text{ctrl}_S(\bb r_{n-1}) + \sum_{i=1}^n W^\text{ctrl}_{U(i)}(\bb r_{n-1}),  \\
 Q^\text{ctrl}(\bb r_n) &=  Q^\text{ctrl}_S(\bb r_n) + \sum_{i=1}^n Q^\text{ctrl}_{U(i)}(\bb r_n).
\end{align}
Especially, the part affecting the system can be expressed solely in terms of the control operation 
$\C A(r_n|\bb r_{n-1})$ and its average $\C A_n \equiv \sum_{r_n} \C A(r_n|\bb r_{n-1})$ and is thus independent 
of the details of the unit $U(n)$, see also Ref.~\cite{StrasbergWinterArXiv2019}. Specifically, 
\begin{align}
 & W_S^\text{ctrl}(\bb r_{n-1}) \label{eq W ctrl S only} \\
 & ~= \mbox{tr}_S\{H_S(\lambda_n,\bb r_{n-1})(\C A_n - \C I)\rho_S(t_n^-,\bb r_{n-1})\}, \nonumber \\
 & Q_S^\text{ctrl}(\bb r_n) \label{eq Q ctrl S only} \\
 & ~= \mbox{tr}_S\left\{H_S(\lambda_n,\bb r_{n-1})\left[\frac{\C A(r_n|\bb r_{n-1})}{p(r_n|\bb r_{n-1})} - \C A_n\right]\rho_S(t_n^-,\bb r_{n-1})\right\}, \nonumber
\end{align}
where $p(r_n|\bb r_{n-1}) \equiv p(\bb r_n)/p(\bb r_{n-1})$. Furthermore, if we use that the marginal state of the 
previous $n-1$ units does not change during the unitary operation $\C V(\bb r_{n-1})$, we can deduce that the work 
actually depends only on the energetic changes of the system and the $n$'th unit, 
\begin{equation}\label{eq W ctrl simplified}
 W^\text{ctrl}(\bb r_{n-1}) = W^\text{ctrl}_S(\bb r_{n-1}) + W^\text{ctrl}_{U(n)}(\bb r_{n-1}).
\end{equation}
The previous properties allow us to deduce \emph{two separate} first laws for the control operation: 
\begin{align}
 \Delta E_S^\text{ctrl}(\bb r_n)            &=  W^\text{ctrl}_S(\bb r_{n-1}) + Q^\text{ctrl}_S(\bb r_n),    
 \label{eq 1st law ctrl local S}    \\
 \Delta E_{U(\bb n)}^\text{ctrl}(\bb r_n)   &=  W^\text{ctrl}_{U(n)}(\bb r_{n-1}) + Q^\text{ctrl}_{U(\bb n)}(\bb r_n).
\end{align}
Finally, we can deduce that the \emph{average} heat injected into the system or the previous units 
$U(i)$ ($i<n$) is always zero. Specifically, 
\begin{equation}\label{eq prop Q ctrl}
 Q^\text{ctrl}_{S,U(i<n)}(t_n,\bb r_{n-1}) \equiv \sum_{r_n} p(r_n|\bb r_{n-1})Q_{S,U(i<n)}^\text{ctrl}(\bb r_n) = 0,
\end{equation}
Note that this equation implies 
$Q^\text{ctrl}_{S,U(i<n)}(t_n) = \sum_{\bb r_n} p(\bb r_n)Q_S^\text{ctrl}(\bb r_n) = 0$. In contrast, for 
the actual unit we have $Q^\text{ctrl}_{U(n)}(t_n) = 0$ if and only if $[H_{U(n)},P(r_n|\bb r_{n-1})] = 0$. We remark 
that it also appears reasonable to call $Q^\text{ctrl}$ `heat' because the emergence of a projector $\C P(r_n)$ 
requires in a microscopic picture to couple the unit to some macroscopic and classical device, which allows the unit to 
lose information irreversibly due to dissipation and decoherence~\cite{ZurekRMP2003}. This last phenomenological step 
in quantum measurement theory is sometimes refered to as the `Heisenberg cut'~\cite{WisemanMilburnBook2010}. It 
necessarily entails a certain level of arbitrariness because we do not explicitly model the microscopic interaction 
between the unit and the final classical environment. It therefore remains unclear how far any notion of temperature is 
associated to the heat $Q^\text{ctrl}$ and we will investigate this in the next section further. 

To conclude, after adding the first laws with and without control operation together, we obtain for the changes over a 
complete interval  
\begin{equation}\label{eq 1st law}
 \Delta E_S^{(n]}(\bb r_n) + \Delta E_{U(\bb n)}^{(n]}(\bb r_n) = W^{(n]}(\bb r_{n-1}) + Q^{(n]}(\bb r_n),
\end{equation}
where we can split the work and heat into 
$W^{(n]}(\bb r_{n-1}) = W^\text{ctrl}(\bb r_{n-1}) + W_S^{(n)}(\bb r_{n-1})$ and 
$Q^{(n]}(\bb r_{n}) = Q^\text{ctrl}(\bb r_{n}) + Q_S^{(n)}(\bb r_{n-1})$. If we assume trivial Hamiltonians for the 
units ($H_{U(i)}\sim1_U$), we get the simplified first law 
\begin{equation}
 \Delta E_S^{(n]}(\bb r_n) = W_S^{(n]}(\bb r_{n-1}) + Q_S^{(n]}(\bb r_n).
\end{equation}
For the entropic balance, it will be in general not that simple. 

\subsection{Stochastic entropy and second law}
\label{sec stochastic entropy}

To account for all entropic changes, we do not only need to consider the system and all units, but also the entropy 
of the outcomes $\bb r_n$ stored in a classical memory (see Fig.~\ref{fig setup}). This is a crucial point, which 
distinguishes our theory from standard stochastic thermodynamics where the entropic contribution of the measurement 
results is neglected (this will play an important role in Sec.~\ref{sec standard stoch thermo}). In general, 
the process tensor depends explicitly on the knowledge of $\bb r_n$, which cannot be neglected. Furthermore, it is 
important to also keep the past information of all previous units $U(i<n)$ and outcomes $\bb r_{n-1}$ because we 
explicitly allow the current unit and Hamiltonian to depend on all earlier outcomes (this is, for instance, essential if 
we apply time-delayed feedback control). Thus, we define the stochastic entropy of the process as 
\begin{equation}\label{eq entropy}
 S_{SU(\bb n)}(t,\bb r_n) \equiv - \ln p(\bb r_n) + S_\text{vN}[\rho_{SU(\bb n)}(t,\bb r_n)].
\end{equation}
Note that the probability $p(\bb r_n)$ of a particular trajectory can be straightforwardly computed from knowing 
the unnormalized state of the system, see Eq.~(\ref{eq probability process tensor}). If this state is not known, 
evaluation of Eq.~(\ref{eq entropy}) requires knowledge of many experimentally sampled trajectories first. Notice 
that the same is true for the definition of the trajectory dependent entropy in classical stochastic 
thermodynamics~\cite{SeifertPRL2005, SeifertRPP2012, VandenBroeckEspositoPhysA2015}. 

Next, we define the entropy production along a single trajectory over a time interval $(t_{n-1},t_n]$ by adding to the 
change in stochastic entropy the heat flow into the \emph{system}, 
\begin{equation}\label{eq entropy production}
 \Sigma^{(n]}(\bb r_n) \equiv \Delta S^{(n]}_{SU(\bb n)}(\bb r_n) - \beta Q_S^{(n]}(\bb r_n).
\end{equation}
As in classical stochastic thermodynamics, this expression can have either sign, but on average it is always positive 
as we will show below. Crucially, we have 
only taken into account the heat accociated with system changes whereas we did not include $Q^\text{ctrl}_{U(\bb n)}$ 
in the entropic balance. This will give us the correct result in all limiting cases and, if we use the commonly made 
assumption that $H_{U(i)} \sim 1_{U(i)}$, we anyway have $Q^\text{ctrl}_{U(\bb n)} = 0$ always. Furthermore, as we do 
not microscopically model the final projective measurement step of the units, it is also unclear which temperature we 
should associate to heat changes in the units and hence, including $Q^\text{ctrl}_{U(\bb n)}$ in the second law would 
necessarily imply some ambiguity. While these are all good \emph{a posteriori} arguments, the question whether there 
exist good \emph{a priori} arguments remains. 

To show the positivity of the average entropy production, it is useful to split it into two contributions similar to the 
first law: 
\begin{equation}\label{eq ent prod splitting}
 \Sigma^{(n]}(\bb r_n) \equiv \Sigma^\text{ctrl}(\bb r_n) + \Sigma^{(n)}(\bb r_{n-1})
\end{equation}
with 
\begin{align}
 \Sigma^\text{ctrl}(\bb r_n) =&~ 
 \Delta S^\text{ctrl}_{SU(\bb n)}(\bb r_n) - \beta Q^\text{ctrl}_S(\bb r_n), \label{eq ent prod ctrl}    \\
 \Sigma^{(n)}(\bb r_{n-1}) =&~ 
 \Delta S^{(n)}_{SU(\bb n)}(\bb r_{n-1}) - \beta Q_S^{(n)}(\bb r_{n-1}).  \label{eq ent prod no control}
\end{align}
We will now show that the second contribution $\Sigma^{(n)}$ is positive even along a single trajectory, whereas the 
first contribution $\Sigma^\text{ctrl}$ is positive only on average. 

To show $\Sigma^{(n)}(\bb r_{n-1}) \ge 0$ we will use Eq.~(\ref{eq 2nd law standard}), which holds for an arbitrary 
initial state $\rho_S(t_{n-1}^+,\bb r_{n-1})$, together with the fact that the system evolution in between two control 
operations can be described by a CPTP map independent of the initial state. This is true within the weak couling 
paradigm of quantum thermodynamics~\cite{SpohnLebowitzAdvChemPhys1979, AlickiJPA1979, LindbladBook1983, 
KosloffEntropy2013} where the time evolution is governed by a (possible time dependent) master equation in 
Lindblad-Gorini-Kossakowski-Sudarshan form. Let us denote the CPTP map by $\C E_n = \C E_n(\bb r_{n-1})$ such that 
\begin{equation}\label{eq dynamical map}
 \rho_S(t_{n}^-,\bb r_{n-1}) = \C E_n \rho_S(t_{n-1}^+,\bb r_{n-1}). 
\end{equation}
The inequality $\Sigma^{(n)}(\bb r_{n-1}) \ge 0$ can then be derived along the following lines:  

\begin{widetext}
 First, by using the mutual information $I_{S:U(\bb n)}$ between the system and the stream of units, we can split the 
 change in joint entropy as 
 \begin{equation}
  \begin{split}
  \Delta S^{(n)}_{SU(\bb n)}(\bb r_n) =&~ 
    S_\text{vN}[\rho_S(t_n^-,\bb r_{n-1})] + S_\text{vN}[\rho_{U(\bb n)}(t_n^-,\bb r_{n-1})] - I_{S:U(\bb n)}(t_n^-)	\\
  & - S_\text{vN}[\rho_S(t_{n-1}^+,\bb r_{n-1})] - S_\text{vN}[\rho_{U(\bb n)}(t_{n-1}^+,\bb r_{n-1})] + I_{S:U(\bb n)}(t_{n-1}^+).
  \end{split}
 \end{equation}
 Since the marginal state of the units does not change under the action of the CPTP map $\C E_n$, their entropic 
 contribution cancels out and we can write in short 
 $\Delta S^{(n)}_{SU(\bb n)}(\bb r_{n-1}) = \Delta S_{S}^{(n)}(\bb r_{n-1}) - \Delta I^{(n)}_{S:U(\bb n)}(\bb r_{n-1})$.
 Let us now add the entropy flow $-\beta Q_S^{(n)}(\bb r_{n-1})$ into the bath to the entropy balance. From the 
 second law~(\ref{eq 2nd law standard}) we can then infer that 
 \begin{equation}
  \Delta S^{(n)}_{SU(\bb n)}(\bb r_{n-1}) -\beta Q_S^{(n)}(\bb r_{n-1}) \ge - \Delta I^{(n)}_{S:U(\bb n)}(\bb r_{n-1}).
 \end{equation}
 The positivity of the right hand side is then guaranteed by contractivity of relative entropy under CPTP 
 maps~\cite{UhlmannCMP1977, OhyaPetzBook1993}. More specifically, the following chain of (in)equalities applies: 
 \begin{equation}\label{eq mutual info no control}
  \begin{split}
   I_{S:U(\bb n)}(t_{n-1}^+,\bb r_{n-1}) 
   &= D[\rho_{SU(\bb n)}(t_{n-1}^+,\bb r_{n-1})\|(\rho_S\otimes\rho_{U(\bb n)})(t_{n-1}^+,\bb r_{n-1})]   \\
   &\ge D[\C E_n\rho_{SU(\bb n)}(t_{n-1}^+,\bb r_{n-1})\|\C E_n(\rho_S\otimes\rho_{U(\bb n)})(t_{n-1}^+,\bb r_{n-1})] 
   = I_{S:U(\bb n)}(t_n^-,\bb r_{n-1}),
  \end{split}
 \end{equation}
 where it was essential that $\C E_n$ acts only on $S$ and not on $U(\bb n)$. This concludes the proof of positivity 
 of $\Sigma^{(n)}(\bb r_{n-1})$. 
\end{widetext}

Next, we will show that $\Sigma^\text{ctrl}(\bb r_n)$ is positive on average. More specifically, we will show that 
\begin{equation}\label{eq positivity EP ctrl}
 \Sigma^\text{ctrl}(\bb r_{n-1}) \equiv \sum_{r_n} p(r_n|\bb r_{n-1}) \Sigma^\text{ctrl}(\bb r_n) \ge 0.
\end{equation}
If this holds, then it also follows that 
$\Sigma^\text{ctrl}(t_n) = \sum_{\bb r_n} p(\bb r_n) \Sigma^\text{ctrl}(\bb r_n) \ge 0$. 
After taking the average and using Eq.~(\ref{eq prop Q ctrl}), we are left with three terms 
\begin{align}
 \Sigma^\text{ctrl}(t_n,\bb r_{n-1})   =&~ S_\text{Sh}[p(r_n|\bb r_{n-1})]  \nonumber   \\
                                       &+  \sum_{r_n} p(r_n|\bb r_{n-1})S_\text{vN}[\rho_{SU(\bb n)}(t_n^+,\bb r_n)]    \nonumber \\
                                       &-  S_\text{vN}[\rho_{SU(\bb n)}(t_n^-,\bb r_{n-1})], \label{eq EP help}
\end{align}
where $S_\text{Sh}[p(r_n|\bb r_{n-1})]$ is the Shannon 
entropy of the conditional probability $p(r_n|\bb r_{n-1})$.\footnote{To be distinguished from the conventional 
conditional entropy given by $\sum_{\bb r_{n-1}} p(\bb r_{n-1}) S_\text{Sh}[p(r_n|\bb r_{n-1})]$. }
The positivity of $\Sigma^\text{ctrl}(t_n,\bb r_{n-1})$ then follows from combining two theorems in quantum 
measurement theory: 

\begin{lemma}
 Let $\rho$ be an arbitrary state, $\{P_n\}_n$ a set of positive operators fulfilling $\sum_n P^2_n = 1$, 
 $p_n = \mbox{tr}\{P_n \rho P_n\}$ the probability to obtain 
 outcome $n$ and $\rho^{(n)} = P_n\rho P_n/p_n$ the post-measurement state conditioned on outcome $n$. Then, 
 \begin{equation}\label{eq lemma}
  S_\text{\normalfont vN}(\rho) \le S_\text{\normalfont Sh}(p_n) + \sum_n p_n S_\text{\normalfont vN}(\rho^{(n)}). 
 \end{equation}
\end{lemma}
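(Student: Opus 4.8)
The plan is to reduce the inequality to two standard facts about the non-selective post-measurement state. First I would introduce the averaged (non-selective) state
\begin{equation}
 \C M\rho \equiv \sum_n P_n\rho P_n = \sum_n p_n\rho^{(n)},
\end{equation}
and then establish the chain $S_\text{vN}(\rho) \le S_\text{vN}(\C M\rho) \le S_\text{Sh}(p_n) + \sum_n p_n S_\text{vN}(\rho^{(n)})$, reading the target inequality off the two ends.

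For the first inequality I would observe that $\C M$ is a CPTP map, since $\mbox{tr}\{\C M\rho\} = \mbox{tr}\{\rho\sum_n P_n^2\} = \mbox{tr}\{\rho\}$ by the completeness relation $\sum_n P_n^2 = 1$. Crucially, because each $P_n$ is positive and hence Hermitian, the \emph{same} relation gives $\C M 1 = \sum_n P_n^2 = 1$, so $\C M$ is also \emph{unital}. A unital CPTP (doubly stochastic) map never decreases the von Neumann entropy; the cleanest way to see this is to apply monotonicity of the relative entropy under $\C M$ to $D[\rho\|1/d] = \ln d - S_\text{vN}(\rho)$, which yields $S_\text{vN}(\C M\rho) \ge S_\text{vN}(\rho)$. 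This is the first of the two invoked theorems.

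For the second inequality I would use the mixing bound for the entropy of a convex combination of states. The quickest argument starts from the operator inequality $\C M\rho = \sum_m p_m\rho^{(m)} \ge p_n\rho^{(n)}$; operator monotonicity of the logarithm then gives $\ln(\C M\rho) \ge \ln p_n + \ln\rho^{(n)}$, and taking the expectation value in $\rho^{(n)}$ yields $D[\rho^{(n)}\|\C M\rho] \le -\ln p_n$. Averaging over $n$ produces $\sum_n p_n D[\rho^{(n)}\|\C M\rho] \le S_\text{Sh}(p_n)$, and since the left-hand side equals the Holevo-type quantity $S_\text{vN}(\C M\rho) - \sum_n p_n S_\text{vN}(\rho^{(n)})$, this is exactly $S_\text{vN}(\C M\rho) \le S_\text{Sh}(p_n) + \sum_n p_n S_\text{vN}(\rho^{(n)})$. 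Chaining the two inequalities completes the argument.

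The main obstacle, and the place where the hypotheses genuinely enter, is the unitality step: the entropy-increasing property of $\C M$ fails for a generic CP instrument and relies specifically on the $P_n$ being \emph{positive} (so that $P_n^\dagger = P_n$), which is what turns the single completeness relation $\sum_n P_n^2 = 1$ into simultaneous trace preservation \emph{and} unitality. I would also flag that the relative-entropy argument for entropy increase is phrased in finite dimension $d$; for an infinite-dimensional $\C H$ one should either restrict to the (here physically adequate) finite-dimensional unit Hilbert space or replace this step by a limiting argument, but no new idea is needed.
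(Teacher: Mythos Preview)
Your proof is correct and follows exactly the same two-step chain as the paper: first $S_\text{vN}(\rho)\le S_\text{vN}(\sum_n p_n\rho^{(n)})$, then the mixing bound $S_\text{vN}(\sum_n p_n\rho^{(n)})\le S_\text{Sh}(p_n)+\sum_n p_n S_\text{vN}(\rho^{(n)})$. The only difference is that the paper simply cites these two facts (Ando/Jacobs for the first, Lanford--Robinson/Nielsen--Chuang for the second), whereas you supply explicit arguments---correctly isolating unitality of $\C M$ (from $P_n=P_n^\dagger$) as the reason the first step works, and giving the standard operator-monotonicity-of-log proof for the second.
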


\begin{proof}
 We first use that for any such set $\{P_n\}_n$ (see Theorem~11 in Ref.~\cite{JacobsBook2014} or Ref.~\cite{Ando1989}) 
 \begin{equation}
  S_\text{vN}(\rho) \le S_\text{vN}\left(\sum_n p_n\rho^{(n)}\right),
 \end{equation}
 i.e., the average uncertainty after the measurement can only increase. Next, we use (see Theorem~11.10 in 
 Ref.~\cite{NielsenChuangBook2000} or Refs.~\cite{LanfordRobinson1968, NielsenPRA2001}) 
 \begin{equation}\label{eq lemma inequality 2}
  S_\text{vN}\left(\sum_n p_n\rho^{(n)}\right) \le S_\text{Sh}(p_n) + \sum_n p_n S_\text{vN}(\rho^{(n)}). 
 \end{equation}
 This concludes the proof. 
\end{proof}

We now apply the lemma to Eq.~(\ref{eq EP help}). If we identify  $\{P_n\}$ with $\{P(r_n|\bb r_{n-1})\}$ acting in 
the joint system-unit space, the probability $p_n$ with the conditional probability $p(r_n|\bb r_{n-1})$ and the 
post-measurement state $\rho^{(n)}$ with $\rho_{SU(\bb n)}(t_n^+,\bb r_n)$, we can deduce that 
\begin{equation}
 \begin{split}
  & S_\text{Sh}[p(r_n|\bb r_{n-1})] + \sum_{r_n} p(r_n|\bb r_{n-1})S_\text{vN}[\rho_{SU(\bb n)}(t_n^+,\bb r_n)] \\
  & \ge S_\text{vN}[\C V\rho_{SU(\bb n)}(t_n^-,\bb r_{n-1})].
 \end{split}
\end{equation}
Using that the von Neumann netropy is invariant under unitary transformations, we deduce our desired 
result. Finally, we remark that inequality~(\ref{eq lemma inequality 2}) was used before 
in quantum thermodynamics to show the positivity of the second law for a Maxwell demon employing quantum 
measurements~\cite{JacobsPRA2009}. 

\section{Real-time preparation and stabilization of photon number states via quantum feedback}
\label{sec example}

The ability to control individual quantum systems and to protect them against decoherence has become a key challenge 
in modern quantum science. Recently, experiments in quantum optics reported on the preparation and stabilization of 
photon number states by using quantum feedback control~\cite{SayrinEtAlNature2011, ZhouEtAlPRL2012}; see also 
Ref.~\cite{DotsenkoEtAlPRA2009} for preceeding theoretical work. We will here analyse Ref.~\cite{ZhouEtAlPRL2012} 
(which is very similar to Ref.~\cite{SayrinEtAlNature2011}) within the operational framework of quantum stochastic 
thermodynamics. We will give unique insights into the energetic and entropic balances of these experiments by using 
the time- and energyscales as reported in Ref.~\cite{ZhouEtAlPRL2012}. Moreover, we will see that the efficiency 
to \emph{prepare} a pure photon number state is surprisingly high in the experiment (the efficiency to 
\emph{stabilize} the pure photon state is zero). However, in order not to overburden 
the paper, we will leave some experimental imperfections aside. These additional imperfections are listed at the end 
of this section, but we emphasize already here that all of them can be included into the 
operational framework of quantum stochastic thermodynamics. We will further assume some familiarity of the reader 
with concepts from quantum optics, for a basic introduction see Ref.~\cite{HarocheRMP2013} and references therein. 
The notation is chosen close to the original references~\cite{SayrinEtAlNature2011, ZhouEtAlPRL2012}.

\subsection{Setup and dynamics}

\begin{figure}[t]
 \centering\includegraphics[width=0.45\textwidth,clip=true]{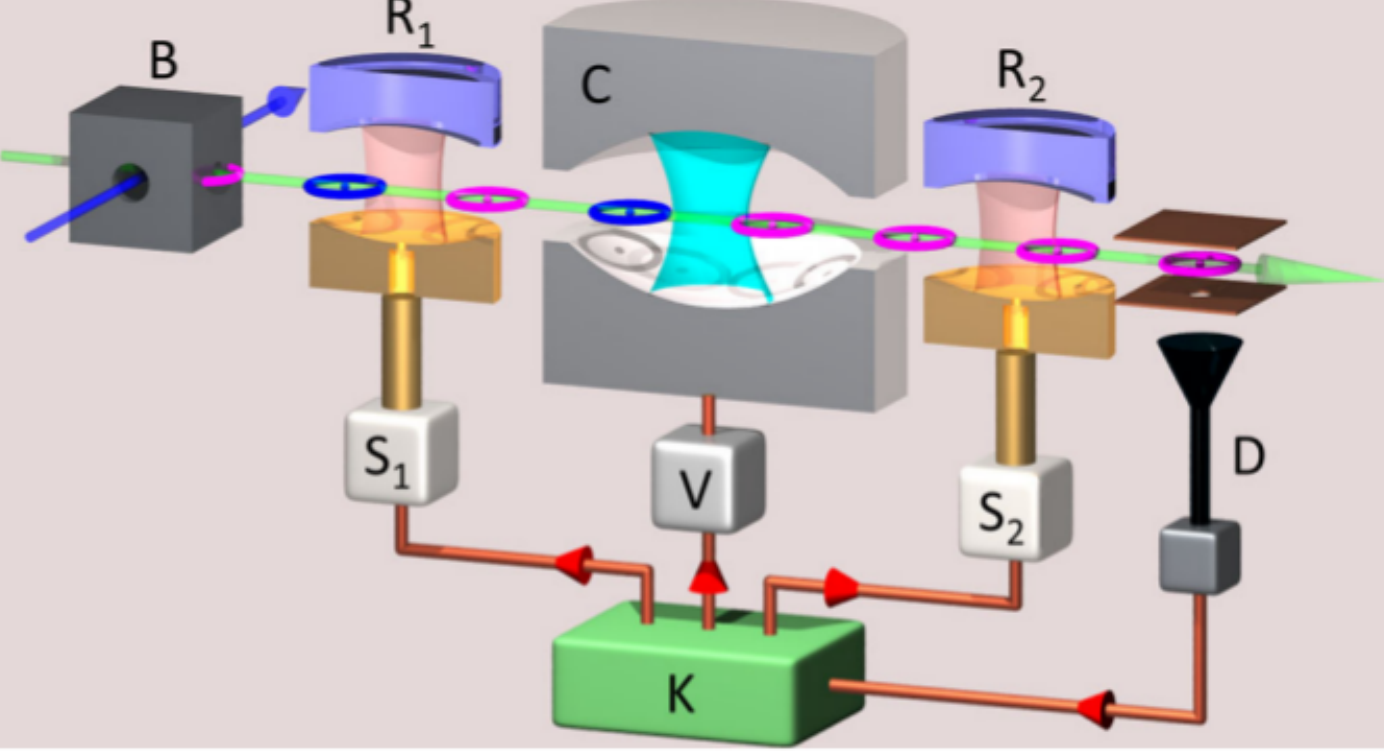}
 \label{fig exp setup} 
 \caption{Sketch of the experimental setup, compare also with Fig.~1 from Ref.~\cite{ZhouEtAlPRL2012}. We wish to control 
 the central microwave cavity {\tt C} by a beam of atoms prepared in {\tt B}. The atoms can be manipulated by the 
 Ramsey cavities {\tt R}$_1$ and {\tt R}$_2$ and read out by the detector {\tt D}. The measurement results are sent 
 to a controller {\tt K}, which decides in real time whether to send a sensor atom to measure the state of the cavity 
 (pink circles) or an emitter or absorber atom to manipulate the state of the cavity (blue circle). In the latter 
 case the atoms are brought into exact resonance with the cavity by applying a voltage {\tt V}.}
\end{figure}

A sketch of the experimental setup is shown in Fig.~\ref{fig exp setup}. The system we want to control is a 
superconducting Fabry-Perot cavity {\tt C} with Hamiltonian $\hbar\omega_c a^\dagger a$, where $a^\dagger$ and $a$ 
denote photon creation and annihilation operators and $\omega_c/2\pi = 51.1$~GHz is the experimentally measured 
frequency of the cavity (in this section we do \emph{not} set $\hbar\equiv1$). The cavity is coupled to an outside 
environment at temperature $T = 0.8$~K, which implies a Bose-Einstein distribution of 
$N_\text{th} = (e^{\beta\hbar\omega_c} - 1)^{-1} \approx 0.05$ (we also do not set $k_B\equiv 1$). The dynamics of the 
cavity are described by the master equation (in a rotating frame) 
\begin{equation} 
 \begin{split}
  \partial_t\rho_S(t)	&=	\C L_0\rho_S(t) \label{eq ME thermal}	\\
                        &\equiv \frac{1+N_\text{th}}{2T_c}\C D[a]\rho_S(t) + \frac{N_\text{th}}{2T_c}\C D[a^\dagger]\rho_S(t).
 \end{split}
\end{equation}
Here, the dissipator is defined as $D[a]\rho \equiv a\rho a^\dagger - \{a^\dagger a,\rho\}/2$ and the experimental 
cavity lifetime is $T_c = 65$~ms. 

Due to the interaction with the environment the cavity tends to thermalize to a Gibbs state, which, for the present 
parameters, means with probability 0.95 the vacuum state $|0\rangle$ with zero photons. The goal of the feedback loop 
is to reverse the effect of the dissipation and to stabilize a photon number state $|n\rangle = |n_t\rangle$ 
where $n_t > 0$ denotes the target number of photons in the following (we will choose $n_t = 2$ in the numerics). 
To achieve this goal, a beam of atoms created in {\tt B} via velocity selection and laser excitation is used. 
The atoms are repeatedly prepared at regular intervals of duration $T_a = 82~\mu$s and they leave {\tt B} with a 
velocity of $v = 250$~m/s. The interaction time of each atom with the cavity can be estimated as 
$t_\text{int} = \sqrt{\pi/2}\cdot\omega_0/v$ where $\omega_0 = 6$~mm is the  waist of the Gaussian cavity mode. This 
results in an interaction time of roughly $t_\text{int} \approx 30$~$\mu$s such that $T_c \approx 2000~t_\text{int}$. 
Thus, within very good approximation we can treat the interactions with the atoms as happening instantaneously as we 
have assumed in the formal development of our theory. Furthermore, the cavity lifetime is much larger than $T_a$ 
($T_c\approx 800~T_a$) such that we will approximate the dissipative time evolution in between two interactions by 
\begin{equation}\label{eq dissipative cavity}
 \C E = e^{\C L_0T_a} \approx 1 + \C L_0 T_a.
\end{equation}

\begin{figure}[t]
 \centering\includegraphics[width=0.29\textwidth,clip=true]{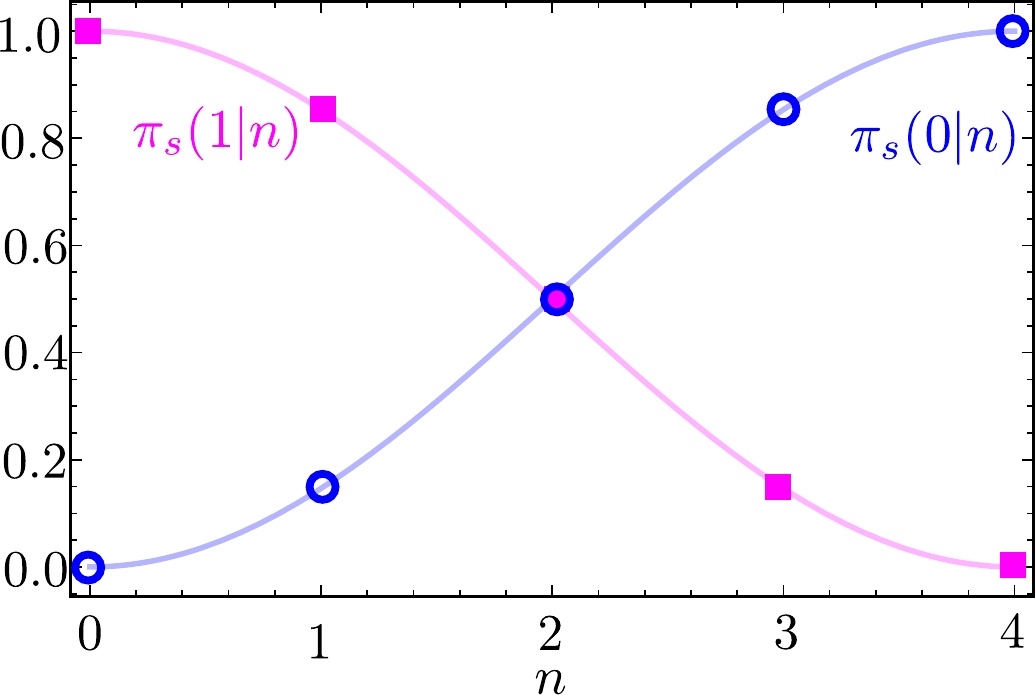}
 \label{fig cond prob} 
 \caption{Plot of the conditional probabilities as a discrete function of $n$: $\pi_s(0|n)$ (blue circles) and 
 $\pi_s(1|n)$ (pink filled squares). The solid lines serve only as a `guide for the eye'. }
\end{figure}

To counteract the dissipation by quantum feedback control, we first of all need to measure the state of the cavity. 
Importantly, this is done in a non-desctructive way without absorbing or emitting a photon using a modified Ramsey 
interferometry scheme. A brief theoretical description works as follows. First of all, the atoms are well-described as 
two-level systems with an energy gap $\hbar\omega_a\approx\hbar\omega_c$ close to the single photon energy in the 
cavity. We will denote the two levels as $|g\rangle$ and $|e\rangle$ for ground and excited state, respectively, albeit 
both states correspond to highly excited states of the atom, where the orbit of the outer electron is far away from 
the nucleus creating in turn a large dipole moment~\cite{HarocheRMP2013}. The atoms leave {\tt B} in the ground state 
$|g\rangle$ and are afterwards subjected to a $\pi/2$ pulse in cavity {\tt R}$_1$, which prepares them in the 
superposition $(|g\rangle+|e\rangle)/\sqrt{2}$. Due to an atom-cavity detuning of $\omega_a-\omega_c \approx 1.5~$MHz, 
the atom then interacts dispersively with the cavity field, which changes its state to 
$(|g\rangle+e^{i\phi(n)}|e\rangle)/\sqrt{2}$. Here, the $n$-dependent phase shift $\phi(n) = \Phi_0 n + \varphi_r$ 
is determined by the phase shift $\Phi_0$ per photon and the phase $\varphi_r$, which is adjustable in the Ramsey 
interferometer. Importantly, no energy is exchanged between the cavity and the atom during the interaction. 
Then, the atom is subjected to another $\pi/2$ pulse in cavity {\tt R}$_2$ and finally it is projectively measured in 
the detector {\tt D} revealing it either to be in the ground or excited state. The crux of the setup is that the  probability to find the atom in the ground or excited state depends on the number $n$ of photons in the cavity {\tt C}. 
If we denote by $r = 0$ the result corresponding to an atom found in the ground state and by $r = 1$ for an atom in an 
excited state, the conditional probability to obtain outcome $r$ given that there are $n$ photons in the cavity 
is\footnote{To deduce Eq.~(\ref{eq cond prob sensor}), we neglect experimental imperfections in the preparation 
and readout of the atoms and use in the notation of Ref.~\cite{ZhouEtAlPRL2012} 
$\pi_s(j|n) = [1+\cos(\Phi_0n+\varphi_r-j\pi)]/2$, where (opposite to our notation) $j=0$ ($j=1$) denotes an atom in 
the excited (ground) state. After taking this into account, setting the phase shift per atom to 
$\Phi_0 \approx \pi/4$~\cite{ZhouEtAlPRL2012} and adjusting the variable phase $\varphi_r$ of the Ramsey interferometer 
to the optimal value $\varphi_r+\Phi_0n_t = \pi/2$~\cite{ZhouEtAlPRL2012}, we obtain 
Eq.~(\ref{eq cond prob sensor}). } 
\begin{equation}\label{eq cond prob sensor}
 \pi_s(r|n) = \frac{1}{2}\left\{1 + \cos\left[\frac{\pi}{4}(n-n_t) + \frac{\pi}{2}(2r-1)\right]\right\}.
\end{equation}
For $n_t = 2$ this is exemplarily plotted in Fig.~\ref{fig cond prob} showing that it is clearly possible to 
distinguish between $n>n_t$, $n=n_t$ or $n<n_t$ photons in the cavity, but also demonstrating that we are far 
away from an ideal projective measurement of the cavity. 

\begin{figure*}[t]
 \centering\includegraphics[width=1.00\textwidth,clip=true]{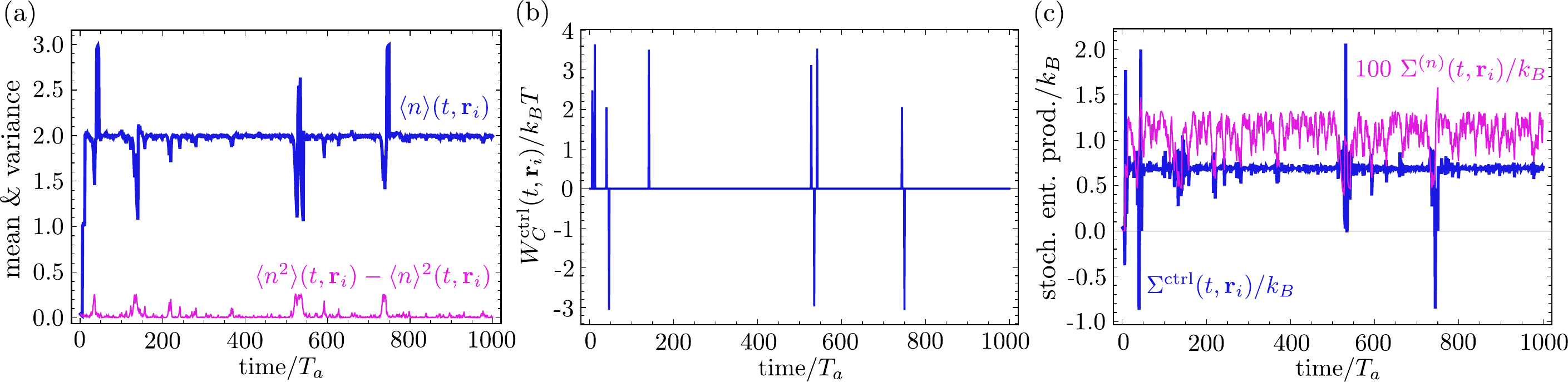}
 \label{fig plot stochastic} 
 \caption{Stochastic dynamics and thermodynamics of a single realization of the (numerical) experiment over 1000 
 time-steps. (a)~Conditional mean value $\lr{n}(t,\bb r_i)$, which fluctuates around the target number of $n_t = 2$ 
 photons (thick blue line on the top), and conditional variance $\langle n^2\rangle(t,\bb r_i) - \lr{n}^2(t,\bb r_i)$, 
 which is most of the time below $0.1$ (thin pink line on the bottom). (b)~(Dimensionless) work invested into the 
 control loop showing spikes exactly at the time when an emitter or absorber atom is sent into the cavity. 
 (c)~(Dimensionless) stochastic entropy production split according to Eq.~(\ref{eq ent prod splitting}) into the part 
 during the control operation, which can become temporarily negative (thick blue line), and the part in between the 
 control operations, which was upscaled by a factor of 100 for better visibility (thin pink line mostly on top).}
\end{figure*}

If we want to change the number of photons in the cavity, we can send an emitter or absorber atom into the cavity 
{\tt C}, which is either prepared in the excited or ground state respectively. For this purpose, the energy gap of the 
atoms is brought in exact resonance with the cavity by applying an external voltage {\tt V} (Stark shift) such 
that the atom-cavity dynamics is well-described by a Jaynes-Cummings Hamiltonian of the form (interaction picture) 
$h(a|e\rangle\langle g| + a^\dagger|g\rangle\langle e|)$. We will then ideally choose an effective interaction 
time $t_e = \pi/2h\sqrt{n_t}$ or $t_a = \pi/2h\sqrt{n_t+1}$ depending on whether we send an emitter or absorber atom 
respectively (this is slightly different from the experimental values). In the emitter case, the conditional 
probability to obtain outcome $r\in\{0,1\}$ and to observe a transition $n'\rightarrow n$ in the state of the cavity 
reads (compare, e.g., with Sec.~6.2.~in Ref.~\cite{ScullyZubairyBook1997}) 
\begin{equation}\label{eq cond prob emitter}
 \pi_e(r,n|n') = \sin^2\left(\frac{\pi}{2}\frac{\sqrt{n+r}}{\sqrt{n_t}}+\frac{\pi}{2}r\right)\delta_{n-1+r,n'},
\end{equation}
where $\delta_{n,n'}$ denotes the Kronecker delta. For the absorber case we get 
\begin{equation}\label{eq cond prob absorber}
 \pi_a(r,n|n') = \cos^2\left(\frac{\pi}{2}\frac{\sqrt{n+r}}{\sqrt{n_t+1}}+\frac{\pi}{2}r\right)\delta_{n+r,n'}.
\end{equation}
Note that, depending on the number $n$ of photons in the cavity, an absorber (emitter) atom will not always absorb 
(emit) a photon. 

Finally, it is important to realize that the atoms are detected \emph{time-delayed}, as indicated also in 
Fig.~\ref{fig exp setup}. This means that, before the $i$'th atom is registered with outcome $r_i$ at the detector 
{\tt D}, there have been already $d=5$ atoms which have been interacted or are about to interact with the cavity 
such that we cannot influence their initial state anymore. This point is important for the design of the feedback 
control law. In order to decide at time $t_i \equiv iT_a$ 
what kind of atom to send into the cavity, we can only use the state estimate 
$\rho_S(t_{i-d}^+,\bb r_{i-d})$ at time $t_{i-d} = (i-d)T_a$. Then, finally, the feedback control law is simply to 
sent an absorber atom as soon as we estimate $\sum_{n>n_t} p_n(t_{i-d},\bb r_{i-d}) > p_{n_t}(t_{i-d},\bb r_{i-d})$ 
and an emitter atom if we estimate $\sum_{n<n_t} p_n(t_{i-d},\bb r_{i-d}) > p_{n_t}(t_{i-d},\bb r_{i-d})$, where 
$p_n(t,\bb r_n)$ denotes the probability to have $n$ photons in the cavity at time $t$ given a measurement record 
$\bb r_n$. Otherwise we keep  measuring the system.  After each 
feedback operation we also wait $d$ time-steps before we apply the feedback control law again. This simple 
feedback control law is slightly different from the experiment, but as we will see now it works well. 

\subsection{Quantum stochastic thermodynamics}

In the previous section we have stated all necessary ingredients to apply our framework. The state of the cavity is 
conveniently described by the probability $p_n(t,\bb r_i)$ because coherences between different photon number states 
never play a role. The control operations $\C A(r_i|\bb r_{i-1})$ are either measurements or feedback operations (which 
can be emitative or absorbative).  Its effect on the cavity field, can be described by the conditional 
probabilities~(\ref{eq cond prob sensor}),~(\ref{eq cond prob emitter}) and~(\ref{eq cond prob absorber}). Due to the 
time-delay, we can set $\C A(r_i|\bb r_{i-1}) = \C A(r_i|\bb r_{i-d-1})$. Furthermore, the atoms always leave 
{\tt B} in the ground state $|g\rangle$ and they are always projected at the end of the interaction such that the 
sequence of outcomes $\bb r_i$ is simply a sequence of zeros and ones. Finally, the evolution in between two 
interactions is modeled by Eq.~(\ref{eq dissipative cavity}). Also numerically all parameters have been fixed in the 
previous section. 

\begin{figure*}[t]
 \centering\includegraphics[width=1.00\textwidth,clip=true]{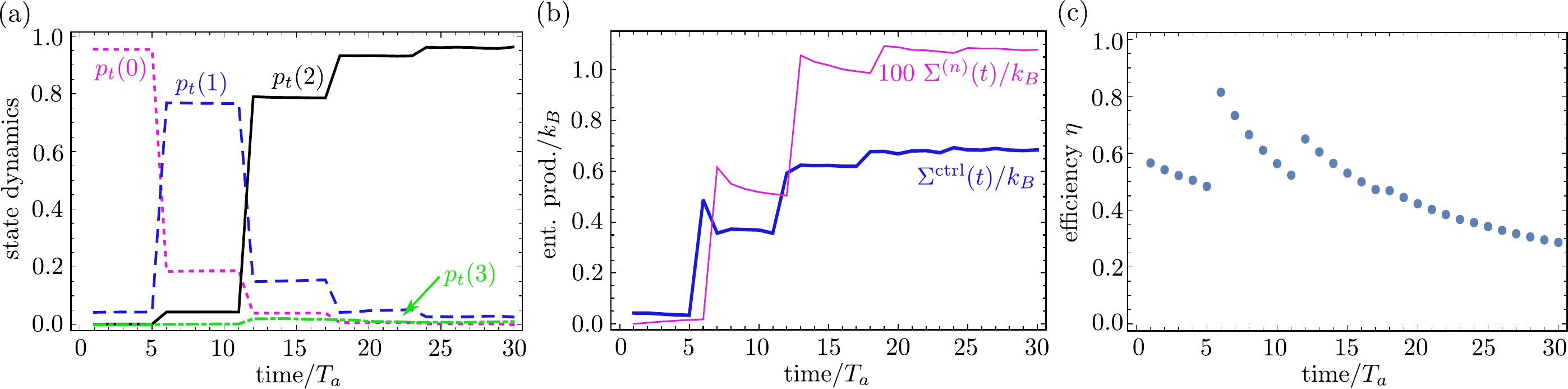}
 \label{fig plot average} 
 \caption{Averaged dynamics and thermodynamics of 2000 repetitions of the (numerical) experiment over 30 time-steps. 
 (a)~Dynamics of the cavity population displaying the probability to find zero photons (pink dotted line), 
 one photon (blue dashed line), two photons (black solid line) and three photons (green dash-dotted line). 
 (b)~(Dimensionless) entropy production again split according to Eq.~(\ref{eq ent prod splitting}). On average the 
 part associated to the control operations is now positive (thick blue line). The part in between the control operations 
 is again upscaled by factor 100 for better visibility (thin pink line). (c)~Time-dependent 
 efficiency~(\ref{eq efficiency Haroche}) of the experiment.}
\end{figure*}

Fig.~\ref{fig plot stochastic} shows various quantities for a single realization of the process over 1000 time 
intervals. The plot on the left shows the evolution of the conditional mean photon number 
$\langle n\rangle(t,\bb r_i) \equiv \sum_n n p_n(t,\bb r_i)$ (blue thick line) and the conditional variance 
$\langle n^2\rangle(t,\bb r_i) - \langle n\rangle^2(t,\bb r_i)$ (pink thin line). For perfect stabilization around 
$n_t = 2$ one would expect $\langle n\rangle(t,\bb r_i) = n_t$ and zero variance. As the plot shows, we are not far from 
that limit. The variance stays most of the time below 0.1 and only significantly deviates from it when our knowledge 
about the mean changes. This can be caused by an emission or absorbtion of a photon into or from the environment or by 
erroneous detection events as it is not perfectly possible to distinguish between 1, 2 or 3 photons (recall 
Fig.~\ref{fig cond prob}). Whenever our estimate about the mean changes significantly, the external agent performs a 
feedback control operation, which changes the energy of the cavity in a determinsitic way and entails a work 
cost $W_C^\text{ctrl}(\bb r_{i-1})$ as depicted in the middle of Fig.~\ref{fig plot stochastic}. Note that the work cost 
associated to the measurement of the cavity is zero, $W_C^\text{ctrl}(\bb r_{i-1}) = 0$, although there is always a 
work cost $W_A^\text{ctrl}(\bb r_{i-1})$ associated with the preparation of the atoms except for the case of 
absorbative feedback (see below; keep in mind that we use the subscripts $C$ and $A$ here instead of $S$ 
and $U$ as in the general Sec.~\ref{sec general stoch thermo}). As the plot demonstrates the experiment is not very 
costly in terms of the work invested into the cavity, which is roughly a few $k_BT$. Note that we also sometimes 
gain work as indicated by negative values and that also the work costs fluctuate due to the fact that the state of the 
system can be different at different times. What is more costly is the generation of the information needed to estimate 
the state of the cavity. This is shown in the plot on the right where the entropy production 
$\Sigma^\text{ctrl}(t,\bb r_i)$ (thick blue line) is roughly 0.7~$k_B$ at \emph{each} time step with some rare 
exceptions and strong fluctuations at the times where we perform feedback control operations. As a 
closer inspection reveals (not shown here), the main cause of this is the generation of information in the memory 
quantified by $-k_B\ln p(r_i|\bb r_{i-1})$. In comparison, the entropy produced in between two control operation 
$\Sigma^{(n)}(t,\bb r_i)$ (thin pink line), upscaled by a factor of 100, is much smaller than 
$\Sigma^\text{ctrl}(t,\bb r_i)$ due to the fact that in the short time interval $T_a$ not much is happening. Also note 
that $\Sigma^{(n)}(t,\bb r_i)$ is always positive as predicted by our theory. 

The previous observations are also confirmed by the average description. Fig.~\ref{fig plot average} shows the 
(thermo)dynamics for 30 time-steps averaged over 2000 numerical realisations. The first plot on the left depicts the 
time-evolution of the probabilities $p_0(t) = \sum_{\bb r_i} p_0(t,\bb r_i) p(\bb r_i)$ (dotted pink line), $p_1(t)$ 
(dashed blue line), $p_2(t)$ (solid black line) and $p_3(t)$ (dash-dotted green line) to have 0, 1, 2 or 3 
photons in the cavity. The effect of the time-delay $d=5$ can be clearly recognized as well as the success of the 
feedback loop to reach a pure photon state with probability $p_{n_t = 2}(t) \approx 0.96$. Note that the shown 
time interval of 30~$T_a\approx 0.04~T_c$ is too small to have a significant probability for a quantum jump 
induced by the environment, but to better see the impact of the time-delay we have decided to show here only a 
short time-window. Furthermore, the plot in the middle shows the entropy production $\Sigma^\text{ctrl}$ 
(thick blue line) and $\Sigma^{(n)}$ (thin pink line, scaled by a factor 100). In accordance with the previous plot we 
can conclude that the maintenance of the measurement and feedback loop is the thermodynamically most costly part with 
the most dominant contribution steming from the recording of the outcomes $\bb r_i$ in a classical memory (not shown). 
In addition, the plot also demonstrates that $\Sigma^\text{ctrl}$ is positive on average as predicted by our theory. 

Finally, the right plot in Fig.~\ref{fig plot average} shows the efficiency of the experiment in terms of 
generating a nonequilibrium state of the cavity with respect to the resources invested in the feedback loop. If we 
sum up the entropy production~(\ref{eq ent prod splitting}) in each time step and use the first 
laws~(\ref{eq 1st law standard}) and~(\ref{eq 1st law ctrl local S}), we can confirm that the average integrated 
entropy production after $N$ time-steps becomes 
\begin{equation}\label{eq 2nd law Haroche}
 \sum_{i=1}^N \Sigma^{(i]} = \frac{W_C^\text{tot}-\Delta F_C}{T} + k_B S_\text{Sh}[p(\bb r_N)] \ge 0.
\end{equation}
Here, $W_C^\text{tot}$ is the average integrated work invested into the cavity during the feedback loop (also see 
the middle plot in Fig.~\ref{fig plot stochastic}) and $\Delta F_C = F_C(NT_a) + k_B T\ln Z_C$ is the change in 
nonequilibrium free energy starting from a cavity state in equilibrium with partition function 
$Z_C = \mbox{tr}_C\{e^{-\beta\hbar\omega_c a^\dagger a}\}$. The average free energy after $N$ time-steps is computed 
by averaging over the energy and entropy of the cavity state, i.e., 
\begin{equation}
 F_C(NT_a) = \sum_{\bb r_N} p(\bb r_N) \{E_C(\bb r_N) - k_B T S_\text{vN}[\rho_C(\bb r_N)]\}.
\end{equation}
Finally, the last term in Eq.~(\ref{eq 2nd law Haroche}) denotes the entire average information content 
$S_\text{Sh}[p(\bb r_N)]$ associated to the outcomes of the experiment. It follows from the second 
law~(\ref{eq 2nd law Haroche}) that the following efficiency is bounded by one: 
\begin{equation}\label{eq efficiency Haroche}
 \eta \equiv \frac{\Delta F_C}{W_C^\text{tot} + k_B T S_\text{Sh}[p(\bb r_N)]} \le 1.
\end{equation}
As the right plot in Fig.~\ref{fig plot average} demonstrates, we can achieve remarkable high efficiencies peaked 
around values of $0.8$ and $0.65$ before they decay in the long run to zero. This decay is due to the fact 
that stabilizing the photon number state does not change its free energy anymore while the measurement and feedback 
loop still consumes resources. Thus, the \emph{preparation} of the photon number state is very efficient, but 
the \emph{stabilization} of it has by definition an efficiency zero.
While we focused on the average efficiency here, we remark that our framework is also ideally suited to 
study efficiency fluctuations~\cite{VerleyEtAlNC2014}. 

One might wonder why only the work invested into the cavity enters the definition 
of the efficiency~(\ref{eq efficiency Haroche}), but not the work $W_A^\text{ctrl}$ invested to prepare the state of 
the atoms. The latter is non-negligible: since the atoms leave {\tt B} in the ground state and the 
outgoing stream of atoms is roughly an equal mixture of atoms in the excited and ground state (which follows from 
Fig.~\ref{fig cond prob} once we have stabilized the state around $n_t$ photons), the work invested per atom is 
$W_A^\text{ctrl} \approx \hbar\omega_a/2$. This has its origin in the initial creation of the superposition in cavity 
{\tt R}$_1$.\footnote{This simple argument neglects the atoms used for the feedback control, which, however, 
constitute only a small fraction of the atoms used in the experiment. Furthermore, the proportion of outgoing atoms in 
the excited state is not exactly 0.5, but depends on the question whether the target photon number $n_t$ is above or 
below the thermal equilibrium value, which determines whether a state with $n_t$ photons tends to absorb or emit a 
photon into or from the environment.} However, what we are interested in here is how efficiently can we use a 
given amount of nonequilibrium resources (i.e., atoms in a pure state) to perform some task (creation of Fock states). 
That efficiency should be the same independent of, for instance, the question whether the atoms leaving {\tt B} are in 
the ground or excited state (in the latter case we would additionally extract work 
$W_A^\text{ctrl} \approx -\hbar\omega_a/2$ from 
the atoms). The second law of thermodynamics cares only about changes in entropy, which are zero for the incoming and outgoing stream of atoms.
In fact, the outgoing stream of atoms can be re-used again, e.g., for the next experiment. To make sense of this 
argument, it is important to note that the state of the atoms is \emph{not} in a mixture of ground and excited states 
because in each experimental run we exactly know the state of the atoms by looking at the measurement record $\bb r_N$. 
There is thus zero uncertainty associated to their state. 

The thermodynamic description would not be complete if we were to forget to mention that the 
experimental implementation also involves other costs, e.g., the cooling of the environment down to less than 1~K, the 
laser preparation of the atoms in {\tt B} or the electronics associated to the controller {\tt K}. What we have 
provided here is a minimal thermodynamic description of the system, which involves all essential contributions. 
Similar to other idealizations in thermodynamics, it is possible to imagine that the hidden thermodynamic costs of 
running the laboratory equipment can be made arbitrarily small in an ideal world. 

\subsection{Further experimental imperfections}

Finally, we  mention that the experiment is a little more complicated than described here. For instance, the number 
of atoms interacting with the cavity at a given time is not fixed to one, but rather Poisson distributed (with an 
average number of 0.6 atoms) such that there could be 0, 1 or 2 atoms per interaction. Furthermore, it can also 
happen that the detector {\tt D} misses to detect an atom. Those and other small imperfections are the reason why the 
experimentally observed probability $p_t(n_t)$ is around 0.8~\cite{ZhouEtAlPRL2012}. 

\section{Special cases}
\label{sec limiting cases}

\subsection{Projective measurement}
\label{sec projective measurements}

We start this section by considering the case of a single projective measurement. This is not only an illustrative 
example, but we will also need it in Secs.~\ref{sec two point meas approach} and~\ref{sec standard stoch thermo}. 

We denote the outcome of the projective measurement by $r$ and the associated projector by $|r\rl r|_S$. 
We assume no degeneracies in the measured observable here. Within our repeated interaction framework, we use a unit 
with Hilbert space of dimension $\dim\C H_U = \dim\C H_S$, the initial state is taken to be the pure state 
$|1\rl 1|_U$ and we assume a trivial unit Hamiltonian $H_U\sim 1_U$. The dynamical aspects of the Stinespring 
dilation are then fixed by the unitary $V$, 
\begin{equation}\label{eq unitary proj meas}
 V = \sum_{r,u} |r,u+r-1\rl r,u|_{SU}
\end{equation}
(where the sum in the `ket' has to be interpreted modulo $\dim\C H_U$), which is followed by a projective measurement 
of the unit in the basis $\{|r\rangle_U\}$. It is interesting to look at the states at the different steps of the 
process given an arbitrary initial system state $\rho_S(t^-) = \sum_s\lambda_s|s\rl s|_S$. After the unitary 
$V$ the marginal states of $\rho^*_{SU} \equiv V\rho_S(t^-)\rho_U V^\dagger$ are 
\begin{equation}
 \rho^*_S = \sum_r p(r)|r\rl r|_S, ~~~\rho^*_U = \sum_r p(r)|r\rl r|_U.
\end{equation}
Here, we have introduced the probability $p(r) = \sum_s |\langle r|s\rangle|^2 \lambda_s$ to obtain result $r$. 
After the projective measurement, the unnormalized state reads 
\begin{equation}
 \tilde\rho_{SU}(r,t^+) = p(r) |r,r\rl r,r|_{SU},
\end{equation}
from which it is easy to read of the marginal states and the average state after the control operation. 

Let us now look at the thermodynamic interpretation of a projective measurement within our framework. 
The work~(\ref{eq W ctrl}) and heat~(\ref{eq Q ctrl}) of the control operation become 
\begin{align}
 W_S^\text{ctrl}    &=  \sum_r p(r) \langle r|H_S|r\rangle - \sum_s \lambda_s \langle s|H_S|s\rangle,   \\
 Q_S^\text{ctrl}(r) &=  \langle r|H_S|r\rangle - \sum_{r'} p(r') \langle r'|H_S|r'\rangle.
\end{align}
We easily confirm $\sum_r p(r)Q_S^\text{ctrl}(r) = 0$. Moreover, the work vanishes whenever the measured basis 
coincides with the eigenbasis of the initial system state $\rho_S(t^-)$, albeit the fluctuating heat does not. Finally, 
we can also confirm Eq.~(\ref{eq lemma}), which boils down in this case to 
$S_\text{Sh}[p(r)] \ge S_\text{Sh}(\lambda_s)$. 

It is instructive to compare these results with the framework of Ref.~\cite{ElouardEtAlQInf2017}. In there, the 
quantum stochastic thermodynamics of projective measurements was also considered and the authors called the sum 
$W_S^\text{ctrl} + Q_S^\text{ctrl}(r)$ `quantum heat' and justified it by the fact that a quantum measurement 
is intrinsically stochastic unless the measured basis coincides with the basis of $\rho_S(t^-)$ (we add that only 
pure states were considered in Ref.~\cite{ElouardEtAlQInf2017}, thus leaving any classical uncertainty aside). 
Remarkably, we reach exactly the opposite conclusion on average: since $\sum_r p(r)Q_S^\text{ctrl}(r) = 0$, we infer 
that the average energetic change is purely work $W_S^\text{ctrl}$ instead of heat. 

This discrepancy can be traced back to the fact that we model the projective measurement in a larger space using 
Stinespring's theorem, which was not done in Ref.~\cite{ElouardEtAlQInf2017}. Remarkably, in this larger space we also 
called the energetic changes caused by the final measurement $P_U(r)$ `heat' (albeit not `quantum' heat because, as soon 
as classical uncertainty is considered too, it also plays a role, e.g., in classical stochastic thermodynamics, see 
Sec.~\ref{sec standard stoch thermo}). Thus, we applied a somewhat similar philosophy as Elouard 
\emph{et al.}~\cite{ElouardEtAlQInf2017}, but reached the opposite conclusion. This shows that the thermodynamic 
interpretation of a quantum measurement depends on where we put the Heisenberg cut. To defend the present approach, 
we want to highlight a number of key differences. 

First, by using Stinespring's theorem we pay duty to the fact that a quantum measurement does not happen spontaneously, 
but requires an \emph{active} intervention by the experimentalist, who brings two systems (the system to be measured and 
the detector) into contact. But bringing two different physical systems into contact, requires in general work 
(compare also with the `switching work' in Ref.~\cite{StrasbergEtAlPRX2017}). 

Second, our second law differs from the one derived in Ref.~\cite{ElouardEtAlQInf2017} as soon as 
multiple projective measurements are considered. In our case, the entropy production is on average given by the Shannon 
entropy of the entire sequence of measurement results 
$S_\text{Sh}[p(\bb r_n)] =  \sum_\ell S_\text{Sh}[p(r_\ell|\bb r_{\ell-1})]$ [with $p(r_1|r_0)\equiv p(r_1)$]. 
In Ref.~\cite{ElouardEtAlQInf2017} the entropy production is instead quantified by the Shannon entropy of the last 
measurement result only, $S_\text{Sh}[p(r_n)]$, and also the quantum heat does not enter their second law. 

Finally, we mention that the thermodynamic cost of quantum measurements was also explicitly studied 
elsewhere~\cite{SagawaUedaPRL2009, JacobsPRE2012, KammerlanderAndersSciRep2016, DeffnerPazZurekPRE2016, 
AbdelkhalekNakataReebArXiv2016}. In particular, Refs.~\cite{SagawaUedaPRL2009, JacobsPRE2012, 
KammerlanderAndersSciRep2016, AbdelkhalekNakataReebArXiv2016} reached similar conclusions by noting that performing 
a quantum measurement allows the external agent to extract work. Hence, the average energetic cost of the measurement should be counted as work. Also in a recent proposal of a Maxwell demon based only on projective measurements 
it was noted that the fields, which are controlled to implement the measurement, provide the energy for the 
demon~\cite{ElouardEtAlPRL2017}.

\subsection{The two-point measurement approach}
\label{sec two point meas approach}

The two-point measurement approach, which is closely related to the theory of full counting statistics, has become the 
primarily used approach to derive quantum fluctuation relations in various open quantum 
systems~\cite{EspositoHarbolaMukamelRMP2009, CampisiHaenggiTalknerRMP2011, SchallerBook2014}. While theoretically 
powerful, we already discussed the practical weakness of this approach in the introduction: experimental confirmations 
have been so far only achieved for work fluctuation relations in isolated systems~\cite{BatalhaoEtAlPRL2014, 
AnEtALNatPhys2015, CerisolaEtAlNatComm2017} or in electronic nanocircuits when the electrons behave 
according to a classical rate master equation~\cite{UtsumiEtAlPRB2010, KungEtAlPRX2012, SairaEtAlPRL2012}. 

We here critically re-examine the two-point measurement approach from a foundational perspective. We also view it in 
context of Ref.~\cite{PerarnauLlobetEtAlPRL2017}, which proves that there exists no measurement strategy of 
work, whose statistics fulfill (i) a quantum work fluctuation theorem and (ii) reproduce -- when averaged -- the 
unmeasured first law for arbitrary initial states. This important ``no-go theorem'' proves that quantum stochastic 
thermodynamics is distinctively different from its classical counterpart: it is in general impossible to make the 
averaged picture coincide with the unmeasured picture in quantum thermodynamics. Nevertheless, within our 
framework we will find that the no-go theorem does not apply in the sense that the `work' defined in the two-point 
measurement approach is not even work according to our framework. 

We consider the following standard scenario, where the unitary evolution of an isolated system is interrupted by two 
projective measurements. We assume that the projective measurements are described as in 
Sec.~\ref{sec projective measurements} with energetically neutral units. Since the system is isolated, we will also 
drop the subscript `S' on all quantities. 

First, the system is prepared in a Gibbs state such that 
\begin{equation}
 \rho(t_0^-) = \frac{e^{-\beta H(\lambda_0)}}{Z(\lambda_0)} 
 = \frac{1}{Z(\lambda_0)}\sum_{\epsilon_0} e^{-\beta\epsilon_0} |\epsilon_0\rl\epsilon_0|,
\end{equation}
where $Z(\lambda_0) = \mbox{tr}\{e^{-\beta H(\lambda_0)}\}$ denotes the partition function. Its internal energy is 
denotes by $E(t_0^-) = \mbox{tr}\{H(\lambda_0)\rho(t_0^-)\}$. 
Then, at time $t_0$ we projectively measure the energy and obtain outcome $r_0$, which is uniquely associated to 
one energy eigenvalue $\epsilon_0(r_0)$. Since the measurement basis coincides with the eigenbasis, the work during 
this measurement is zero. However, the internal energy clearly changes along a single trajecory and this is due to heat: 
\begin{equation}
 \epsilon_0(r_0) - E(t_0^-) = Q^\text{crtl}(r_0).
\end{equation}
In the next step we let the isolated system evolve according to an arbitrary time-dependent Hamiltonian $H(\lambda_t)$. 
The state at time $t_1>t_0$ is given by $|\psi(t,r_0)\rangle = U(t)|\epsilon_0(r_0)\rangle$ where $U(t)$ denotes the 
unitary time evolution operator generated by $H(\lambda_t)$. As the system is completely isolated, the change in 
internal energy is purely given by work: 
\begin{equation}\label{eq work TPM}
 \langle\psi(t_1,r_0)|H(\lambda_1)|\psi(t_1,r_0)\rangle - \epsilon_0(r_0) = W^{(1)}(r_0).
\end{equation}
Finally, there is another projective measurement in the eigenbasis of $H(\lambda_1)$ with outcome $r_1$, uniquely 
associated to some eigenenergy $\epsilon_1(r_1)$. The change in internal energy now has in general a work and a heat 
contribution: 
\begin{equation}
 \begin{split}
  & \epsilon_1(r_1) - \langle\psi(t_1,r_0)|H(\lambda_1)|\psi(t_1,r_0)\rangle \\
  & = W^\text{ctrl}(r_0) + Q^\text{ctrl}(r_1,r_0).
 \end{split}
\end{equation}
To derive an explicit form for it, we expand the prior state with respect to the final measurement basis: 
$|\psi(t_1,r_0)\rangle = \sum_{\epsilon_1} c_{\epsilon_1}|\epsilon_1\rangle$. Then, we obtain 
\begin{align}
 W^\text{ctrl}(r_0) &= \sum_{\epsilon_1} |c_{\epsilon_1}|^2 \epsilon_1 - \langle\psi(t_1,r_0)|H(\lambda_1)|\psi(t_1,r_0)\rangle \\
 Q^\text{ctrl}(r_1,r_0) &= \epsilon_1(r_1) - \sum_{\epsilon_1} |c_{\epsilon_1}|^2 \epsilon_1.
\end{align}
Both contributions differ from zero unless in the classical case where 
$|\psi(t_1,r_0)\rangle = |\epsilon_1(r_1)\rangle$. Thus, in that scenario it would be justified to call 
$Q^\text{ctrl}(r_1,r_0)$ ``quantum'' heat. 

Now, consider the probability for the sequence of outcomes 
\begin{equation}
 p(r_1,r_0) = |\langle\epsilon_1(r_1)|U(t)|\epsilon_0(r_0)\rangle|^2 \frac{e^{-\beta\epsilon_0(r_0)}}{Z(\lambda_0)}.
\end{equation}
It is a straightforward exercise to show that this probability distribution implies the so-called quantum work theorem 
or quantum Jarzynski equality, first derived in Refs.~\cite{PiechocinskaPRA2000, KurchanArXiv2000, TasakiArXiv2000}: 
\begin{equation}
 \begin{split}
  \lr{e^{-\beta[\epsilon_1(r_1)-\epsilon_0(r_0)]}} 
  &\equiv \sum_{\epsilon_1,\epsilon_0}p(r_1,r_0) e^{-\beta[\epsilon_1(r_1) - \epsilon_0(r_0)]} \\
                    &=      \frac{Z(\lambda_1)}{Z(\lambda_0)}.
 \end{split}
\end{equation}
Now, in analogue to the classical Jarzynski equality, the fluctuating quantity $\epsilon_1(r_1)-\epsilon_0(r_0)$ 
in the exponent was called `work' in the two-point measurement approach~\cite{EspositoHarbolaMukamelRMP2009, 
CampisiHaenggiTalknerRMP2011}. However, our framework reveals that 
\begin{equation}
 \epsilon_1(r_1)-\epsilon_0(r_0) = W^{(1)}(r_0) + W^\text{ctrl}(r_0) + Q^\text{ctrl}(r_1,r_0).
\end{equation}
That is to say, the fluctuating quantity in the exponent is not work alone. Hence, one better calls the 
quantum work theorem a quantum \emph{internal energy} theorem. 

We end this section by pointing out that we are not the first to criticize the notion of work within the two-point 
measurement approach. For instance, Deffner, Paz and Zurek also criticize this approach for not being 
``thermodynamically consistent as it does not account for the thermodynamic cost of 
measurements''~\cite{DeffnerPazZurekPRE2016}. Remarkably, they were able to derive a modified quantum Jarzynski equality 
for the work~(\ref{eq work TPM}) done in between the two projective measurements~\cite{DeffnerPazZurekPRE2016}. 

\subsection{The standard framework of quantum thermodynamics}
\label{sec standard quantum thermo}

If we perform no control operations at all, our framework obviously reproduces the standard framework of quantum 
thermodynamics mentioned at the beginning in Sec.~\ref{sec preliminary considerations}. This fact might seem so obvious 
that it is not worse to stress. However, it is important to realize that the standard framework of quantum 
thermodynamics cannot be recovered by performing an ensemble average over $p(\bb r_n)$, but only by deciding not to 
apply any control operation at all (apart from maybe preparing a certain initial state and reading out the final state). 
That is to say, in order to recover standard quantum thermodynamics, it is important to have a framework which can cope 
with incomplete information and allows to do `nothing' on the system. All previous frameworks of quantum stochastic 
thermodynamics, which rely on a perfectly measured system in a pure state, \emph{fail} to reproduce the picture without 
control operations because any measurement disturbs the process in general. In fact, in almost all previous works the 
notion of a stochastic entropy along a single trajectory is not even defined. To the best of the author's knowlegde, the 
only exceptions are Refs.~\cite{HorowitzPRE2012, HorowitzParrondoNJP2013} where, however, the definition of stochastic 
entropy \emph{depends} on the initial state chosen and therefore, needs to be adapted in each experiment.
The reason why classical stochastic thermodynamics reproduces the average picture 
(see Sec.~\ref{sec standard stoch thermo}) is the fact that there is always one fixed basis and no coherences are 
possible. The current framework therefore fills an important conceptual gap between quantum and classical 
stochastic thermodynamics. 

\subsection{The conventional repeated interaction framework}
\label{sec repeated interactions ensemble level}

The framework of repeated interactions gives rise to a generalized thermodynamic theory by realizing that the stream of 
external units can act in the most general scenario as a resource of nonequilibrium free energy, which encompasses many 
previously considered theories~\cite{StrasbergEtAlPRX2017} (see also Ref.~\cite{BarraSciRep2015} for important earlier 
work). However, the repeated interaction framework considered previously differs from our framework by avoiding to do 
any measurement on the units. In order to recover this thermodynamic framework, it is important to realize (as in 
Sec.~\ref{sec standard quantum thermo}) that a simple ensemble average of the process tensor over the outcomes 
$\bb r_n$ will \emph{not} do the job. The only correct way to recover previous results from our framework is to not 
perform any measurement, i.e., in the language of 
Sec.~\ref{sec process tensor repeated interactions} to choose the `projector' $P(r_n|\bb r_{n-1}) = 1_U$ 
throughout. In this case, the process tensor can be written as $\mf T[\C A_n,\dots,\C A_1]$ where $\C A_i$ is a CPTP map 
acting at time $t_i$. The control operations and hence, also the process tensor, do not depend on any outcome $\bb r_n$  
anymore (alternatively, one could say that each control operation at time $t_i$ has only one possible outcome). 
Furthermore, every incoming unit is decorrelated from the previous units as in Ref.~\cite{StrasbergEtAlPRX2017}. 

Our thermodynamic framework of the process tensor is therefore much more general and flexible than the previous 
framework apart from one important difference. In Ref.~\cite{StrasbergEtAlPRX2017} the units were allowed to interact 
with the system for a \emph{finite} duration whereas we here only consider instantaneous interactions (or more 
precicely,  interaction times where the effect of the bath can be neglected to leading order). From a thermodynamic 
point of view, this is not necessary. However, to be able to clearly distinguish between control operations on the 
system and system-bath dynamics, this assumption is necessary (compare with the discussion in 
Sec.~\ref{sec process tensor}). 

We now show that our thermodynamic framework is not in contradiction to the one of 
Ref.~\cite{StrasbergEtAlPRX2017}, if we avoid any measurements of the units. Since no quantity depends on 
$\bb r_n$ anymore, the internal energy is simply 
\begin{equation}
 E_{SU(\bb n)}(t) = E_S(t) + \sum_{i=1}^n E_{U(i)}(t).
\end{equation}
But the internal energy of all previous units $U(i<n)$ never enters the first law and thus, can be neglected. 
In fact, in absense of any control operation this is evident from Eq.~(\ref{eq 1st law standard}). 
During the control operations, because there is no final measurement, $Q^\text{ctrl}(t_n) = 0$ and only 
$W^\text{ctrl}(t_n)$ can differ from zero. But the work only depends on the state of the $n$'th unit 
and not on previous units [cf.~Eq.~(\ref{eq W ctrl simplified})]. Hence, the first law during the control operation 
becomes $W^\text{ctrl}(t_n) = \Delta E_S(t_n) + \Delta E_{U(n)}(t_n)$ because the marginal state of all other units 
does not change. We therefore obtain the same first law over one interaction period $(t_{n-1},t_n]$: 
\begin{equation}
 \Delta E_S^{(n]} + \Delta E_{U(n)}^{(n]} = W^\text{ctrl}(t_n) + W^{(n)} + Q^{(n)}.
\end{equation}
Finally, note that $W^\text{ctrl}(t_n)$ would be identified in context of Ref.~\cite{StrasbergEtAlPRX2017} 
with the switching work $W_\text{switch}$ required to turn on and off the system-unit interaction. 

We now turn to the second law. Without any outcomes $\bb r_n$ we obtain from Eq.~(\ref{eq entropy}) the entropy 
$S_{SU(\bb n)}(t) = S_\text{vN}[\rho_{SU(\bb n)}(t)]$. Again, this differs from Ref.~\cite{StrasbergEtAlPRX2017} by 
explicitly taking into account the joint entropy of \emph{all} units and the system. To recover 
Ref.~\cite{StrasbergEtAlPRX2017}, we start again with the situation without control operation. From 
Eq.~(\ref{eq 2nd law standard}) we know that $\Delta S_S^{(n)} - \beta Q_S^{(n)} \ge 0$ and, 
since the marginal unit states do not change, we can extend this to 
\begin{equation}\label{eq ent prod repeated interaction}
 \Delta S_S^{(n)} + \Delta S_{U(n)}^{(n)} - \beta Q_S^{(n)} \ge 0.
\end{equation}
Next, our second law during the control operation becomes 
\begin{equation}\label{eq ent prod ctrl repeated interaction}
 \Sigma^\text{ctrl}(t_n) = S_\text{vN}[\rho_{SU(\bb n)}(t_n^+)] - S_\text{vN}[\rho_{SU(\bb n)}(t_n^-)] = 0,
\end{equation}
because the von-Neumann entropy is invariant under unitary transformation. If we use the two facts that the 
unitary $\C V$ acts only locally on the system and the $n$'th unit and that the initial state of the unit is 
decorrelated from the system, we immediately confirm that Eq.~(\ref{eq ent prod ctrl repeated interaction}) can be 
rewritten as 
\begin{equation}
 \Sigma^\text{ctrl}(t_n) 
 = \Delta S^\text{ctrl}_\text{vN}(\rho_S) + \Delta S^\text{ctrl}_\text{vN}(\rho_U) - I_{S:U(n)}(t_n^+) = 0.
\end{equation}
Taking the mutual information to the other side of the equation and combining it with 
Eq.~(\ref{eq ent prod repeated interaction}), we can confirm for an entire interaction interval that 
\begin{equation}
 \Delta S_S^{(n]} + \Delta S_{U(n)}^{(n]} - \beta Q^{(n)} \ge I_{S:U(n)}(t_n^+) \ge 0.
\end{equation}
This reproduces the generalized second law from Ref.~\cite{StrasbergEtAlPRX2017}. The reason why the final mutual 
information between the system and the previous units was discarded in Ref.~\cite{StrasbergEtAlPRX2017} becomes clear 
by recalling that every unit which has already interacted with the system does not have the chance to interact with the 
system again. All final mutual information will therefore be lost. This is in contrast to the general framework 
developed here where we allowed for all kinds of feedback control. Under these more general circumstances, the 
remaining mutual information after the interaction represents a valuable thermodynamic resource, which cannot be 
neglected. 

\subsection{Standard classical stochastic thermodynamics}
\label{sec standard stoch thermo}

A tacitly made assumption in classical stochastic thermodynamics is the ability to measure perfectly (i.e., without 
error and without disturbance) the state of the system~\cite{SeifertRPP2012, VandenBroeckEspositoPhysA2015}. 
These assumptions can be completely overcome by using the operational approach to stochastic thermodynamics, but 
attention has to be payed to the fact that the classical version of Stinespring's theorem does not follow from the 
quantum version stated in Sec.~\ref{sec process tensor repeated interactions}~\cite{StrasbergWinterArXiv2019}. 

Here, we restrict ourselves to study the standard case of stochastic thermodynamics assuming perfect continuous 
measurements and no feedback control. We focus only on a classical discrete system, which makes random jumps between 
a finite set of states $s \in \{1,\dots,d\}$. Its dynamics are described by a rate master equation 
\begin{equation}\label{eq master equation stoch thermo}
 \frac{d}{dt}p_s(t) = \sum_{s'} R_{s,s'}(\lambda_t) p_{s'}(t).
\end{equation}
Here, $p_s(t)$ is the probability to find the system in state $s$ at time $t$, whose energy we denote by 
$H(s,\lambda_t)$ (dropping the subscript $S$ on $H$). The rate matrix $R_{s,s'}(\lambda_t)$ can depend on an external 
control parameter $\lambda_t$. It is required to fulfill the local detailed balance condition 
\begin{equation}
 \frac{R_{s,s'}(\lambda_t)}{R_{s',s}(\lambda_t)} = e^{-\beta[H(s,\lambda_t) - H(s',\lambda_t)]},
\end{equation}
which allows to link energetic changes in the system to entropic changes in the bath. Due to the assumptions of standard 
stochastic thermodynamics one knows at each time $t$ the state $s$ of the system without any uncertainty (denoted $s_t$ 
in the following). The stochastic energy and entropy at time $t$ is then defined by 
\begin{equation}\label{eq E S stoch thermo}
 E_\text{ST}(s_t) \equiv H(s_t,\lambda_t), ~~~ S_\text{ST}(s_t) \equiv -\ln p_{s_t}(t),
\end{equation}
where we used a subscript `ST' to denote definitions used in standard stochastic thermodynamics. Note that the 
stochastic entropy $S_\text{ST}(s_t)$ is determined by evaluating the solution of the rate master equation along a 
particular stochastic trajectory~\cite{SeifertPRL2005}. Work and heat for a sufficiently small time-step $dt$ are 
defined as\footnote{In stochastic thermodynamics, one usually writes $\delta W$ or $\dbar W$ to denote the 
infinitesimal character of the quanity. Often, one also denotes quantities defined for single trajectories with a 
small letter, e.g., $w$. We here decided to stick closer to our notation from Sec.~\ref{sec general stoch thermo} 
keeping in mind that we are only interested in small time steps $dt$. } 
\begin{align}
 W_\text{ST}(s_t)   &\equiv H(s_{t-dt},\lambda_t) - H(s_{t-dt},\lambda_{t-dt}), \label{eq W classical stoch thermo} \\
 Q_\text{ST}(s_t)   &\equiv H(s_t,\lambda_t) - H(s_{t-dt},\lambda_t) \label{eq Q classical stoch thermo}
\end{align}
such that $E_\text{ST}(s_t) - E_\text{ST}(s_{t-dt}) = W_\text{ST}(s_t) + Q_\text{ST}(s_t)$. Furthermore, using rather 
complicated algebraic manipulations, one can compute the change of stochastic entropy along a particular 
trajectory~\cite{SeifertPRL2005, SeifertRPP2012, VandenBroeckEspositoPhysA2015} (we will see below that evaluating the 
quantities in discrete time steps simplifies the algebra significantly). In the resulting expression it is then possible 
to single out a term related to the entropy production, which -- on average -- yields the always positive expression 
\begin{equation}\label{eq ent prod stoch thermo}
 \Sigma_\text{ST}(t) \equiv \Delta S_\text{ST}(t) - \beta Q_\text{ST}(t) \ge 0,
\end{equation}
where $\Delta S_\text{ST}(t) \equiv S_\text{Sh}[p_s(t)] - S_\text{Sh}[p_s(t-dt)]$ turns out to be the (infinitesimal) 
change in Shannon entropy of the solution $p_s(t)$ of the rate master equation and 
$Q_\text{ST}(t) = \sum_s H(s,\lambda_t) [p_s(t)-p_s(t-dt)]$ is the average heat entering the system per time step $dt$.

Our goal is now to show the following: (1)~how a perfect, non-disturbing measurement arises in our context; (2)~that 
we obtain identical expressions for the stochastic heat, work and internal energy in this limit; (3)~that we obtain a 
different expression for stochastic entropy, which yields a different, but meaningful second law; (4)~how the entropy 
production of standard stochastic thermodynamic arises in our context when we change the definition of stochastic 
entropy. 

(1) To obtain a perfect measurement, we can basically use the same steps as in Sec.~\ref{sec projective measurements}. 
We start with a classical probability $p_U = \delta_{u,1}$ and view the unitary~(\ref{eq unitary proj meas}) as a 
permutation matrix. Then, the result is that the state of the system gets copied onto the state of the unit. Next, we 
consider the limit where we measure the system \emph{continuously}, i.e., in small time-steps $dt = t_n - t_{n-1}$ such 
that the probability for a jump in each interval is very small: $R_{s,s'}(\lambda_t)dt \ll 1$. Furthermore, we assume 
that all units are identical and uncorrelated initially. In this limit, the sequence of measurement outcomes $\bb r_n$ 
is \emph{identical} to the state of the units, which is \emph{identical} to the trajectory taken by the system. This is 
the essence of a perfect classical and continuous measurement. As a consequence, the state of the system at time 
$t\ge t_n^+$ only depends on the last measurement outcome $r_n$, but not  on any of the previous outcomes 
$\bb r_{n-1}$. Furthermore, the state of the system during the interval $(t_{n-1},t_n]$ changes from 
$\bb p(t^+_{n-1},r_{n-1}) = |r_{n-1}\rangle$ at the beginning to 
$\bb p(t_n^-,r_{n-1}) = |r_{n-1}\rangle + dt\sum_s R_{s,r_{n-1}}(\lambda_t) |s\rangle$ shortly before the control 
operation and to $\bb p(t_n^+,r_n) = |r_n\rangle$ at the end after the $n$'th control operation. Below we will 
identify $t_n = t$ and $t_{n-1} = t-dt$. 

(2) We now turn to the energetic description. As in standard stochastic thermodynamics, we neglect the energetics 
associated to the memory, that is we set $H_U \sim 1_U$ for all units. This implies that we can replace our 
stochastic energy $E_{SU(\bb n)}(t,r_n)$ by $E_S(t,r_n)$. Then, the stochastic energy at the beginning of the interval 
is simply $H(r_{n-1},\lambda_{t-dt})$ and at the end it reads $H(r_n,\lambda_t)$, which is identical to the 
definition used in classical stochastic thermodynamics. Furthermore, in absence of control, we obtain from 
Eq.~(\ref{eq W standard}) 
\begin{align}
 W^{(n)}(r_{n-1})      &=  \sum_s [H(s,\lambda_t) - H(s,\lambda_{t-dt})] p_s(t^+_{n-1},r_{n-1})   \nonumber  \\
                       &=  H(r_{n-1},\lambda_t) - H(r_{n-1},\lambda_{t-dt}),
\end{align}
which is identical to Eq.~(\ref{eq W classical stoch thermo}).\footnote{We remark that there is a certain degree of 
freedom involved in the evaluation of the integral in Eq.~(\ref{eq W standard}). However, this degree of freedom is also 
there in the identification~(\ref{eq W classical stoch thermo}) and~(\ref{eq Q classical stoch thermo}) and it is only 
important to stick consistently to one choice. } Furthermore, the work during the control step, Eq.~(\ref{eq W ctrl}), 
is zero because the marginal state of the system does not change, see also Sec.~\ref{sec projective measurements}. 
Thus, we conclude that the definition of the total work $W^{(n]}(r_{n-1})$ during one full 
interval is identical to the definition used in classical stochastic thermodynamics. 
It remains to look at the change of heat during one full interval $Q_S^{(n]}(r_n,r_{n-1})$. First of all, from 
Eq.~(\ref{eq Q standard}) the heat exchanged during the interval without control becomes 
\begin{equation}
 \begin{split}
  & Q_S^{(n)}(r_{n-1}) = \\
  & \sum_s H(s,\lambda_t) p_s(t_n^-,r_{n-1}) - H(r_{n-1},\lambda_t),
 \end{split}
\end{equation}
which is different from the definition~(\ref{eq Q classical stoch thermo}). However, it is now also important to take 
into account the heat exchanged during the control step, Eq.~(\ref{eq Q ctrl}), in which we update our knowlegde about 
possible system changes. It is simple to see that this quantity reduces to 
\begin{equation}
 \begin{split}
  & Q_S^\text{ctrl}(r_n,r_{n-1}) =  \\
  & H_S(r_n,\lambda_t) - \sum_s H(s,\lambda_t) p_s(t_n^-,r_{n-1}),
 \end{split}
\end{equation}
such that $Q_S^{(n]}(r_n,r_{n-1}) = Q_S^\text{ctrl}(r_n,r_{n-1}) + Q_S^{(n)}(r_{n-1})$ is identical to 
the standard definition in classical stochastic thermodynamics. To conclude, our definitions for stochastic internal 
energy, work and heat are identical to the ones used in classical stochastic thermodynamics. 

(3) We now take a look at the entropic balance. The change in stochastic entropy~(\ref{eq entropy}) over a full interval 
becomes 
\begin{equation}
 \Delta S_{SU(\bb n)}^{(n]}(r_n,r_{n-1}) = -\ln p(r_n|r_{n-1}),
\end{equation}
where we used that the system and units are after each measurement in a pure state and their entropy vanishes. 
Furthermore, we used that the system dynamics are Markovian and hence, $p(r_n|\bb r_{n-1}) = p(r_n|r_{n-1})$. The 
stochastic entropy production~(\ref{eq entropy production}) over one interval then becomes 
\begin{equation}
 \Sigma^{(n]}(r_n,r_{n-1}) = -\ln p(r_n|r_{n-1}) - \beta Q_S^{(n]}(r_n,r_{n-1}),
\end{equation}
which can have either sign. As deduced in Sec.~\ref{sec general stoch thermo}, it is positive after averaging 
over $p(r_n|r_{n-1})$: 
\begin{equation}
 \begin{split}\label{eq ent prod stoch thermo modified 2}
  \Sigma^{(n]}(r_{n-1})  &=  \sum_{r_n} p(r_n|r_{n-1}) \Sigma^{(n]}(r_n,r_{n-1})   \\
                         &=  S_\text{Sh}[p(r_n|r_{n-1})] - \beta Q_S^{(n]}(r_{n-1}) \ge 0.
 \end{split}
\end{equation}
Notice that this second law is identical to the conventional one of stochastic thermodynamics if we apply 
Eq.~(\ref{eq ent prod stoch thermo}) to an initially pure state $p_s(t-dt) = \delta_{s,r_{n-1}}$, which implies 
$\Delta S_\text{ST}(t) = S_\text{Sh}[p(r_n|r_{n-1})]$ and $Q_\text{ST}(t) = Q_S^{(n]}(r_{n-1})$. Unfortunately, 
although $S_\text{Sh}[p(r_n|r_{n-1})]$ is infinitesimal small, it is of order $\C O(dt^\nu)$ with $\nu<1$. Therefore, 
the \emph{rate} of entropy production diverges: 
\begin{equation}
 \lim_{dt\rightarrow0} \frac{\Sigma^{(n]}(r_{n-1})}{dt} = \infty.
\end{equation}
Although seldomly stated~\cite{SpohnJMP1978}, this is related to the fact that the Shannon entropy $S_\text{Sh}[p_s(t)]$ 
is not differentiable when the kernel of $p_s(t)$ changes. Furthermore, by averaging 
Eq.~(\ref{eq ent prod stoch thermo modified 2}) also over $p(r_{n-1})$, we obtain 
\begin{equation}\label{eq ent prod stoch thermo modified}
 \Sigma^{(n]} = S_\text{Sh}(r_n|r_{n-1}) - \beta Q_S^{(n]} \ge 0.
\end{equation}
Here, $S_\text{Sh}(r_n|r_{n-1}) = \sum_{r_{n-1}} p(r_{n-1})S_\text{Sh}[p(r_n|r_{n-1})]$ denotes the conditional 
Shannon entropy. This second law is different from the conventional one~(\ref{eq ent prod stoch thermo}). Instead 
of containing the change in Shannon entropy of the system state, it contains the conditional Shannon entropy, which is 
nothing else than the entropy rate of the stochastic process~\cite{CoverThomasBook1991}. Of course, if we devide 
Eq.~(\ref{eq ent prod stoch thermo modified}) by $dt$, it still diverges. Furthermore, the difference in the two 
entropy productions is precisely given by $\Sigma^{(n]} - \Sigma_\text{ST}(t) = S_\text{Sh}(r_{n-1}|r_n)$. Here, 
the `backward' conditional entropy $S_\text{Sh}(r_{n-1}|r_n)  = \sum_{r_n} p(r_n)S_\text{Sh}[p(r_{n-1}|r_n)]$ 
can be computed via Bayes' rule: $p(r_{n-1}|r_n) = p(r_n|r_{n-1}) p(r_{n-1})/p(r_n)$. 

We emphasize that our novel second law~(\ref{eq ent prod stoch thermo modified}) has a transparent physical 
interpretation. It consists of the entropic change in the bath quantified by the Clausius-like term $-\beta Q_S^{(n]}$ 
plus the change in entropy in our memory for the measurement outcomes. As we measure perfectly \emph{and} continuously, 
the \emph{rate} of information generation in the memory is infinite (in reality, every sampling rate is 
finite and no divergence arises). Therefore, even in equilibrium where $Q_S^{(n]} = 0$, we will have a positive 
entropy production $\Sigma^{(n]} > 0$ due to the fact that we measure the system and continuously generate information. 
In stochastic thermodynamics, one instead finds $\Sigma_\text{ST} = 0$ at equilibrium. 
The discrepancy of the two second laws is rooted in the fact that standard stochastic thermodynamics keeps the 
observer out of the contruction. This works well if one only perfectly monitors a classical system, but if one 
starts to apply feedback control one needs to \emph{modify} the theory~\cite{ParrondoHorowitzSagawaNatPhys2015}. 
By following the credo ``information is physical''~\cite{LandauerPhysTod1991} and by treating the measurement and the 
system on an equal footing, no modification is necessary in our framework. We remark that our novel second 
law~(\ref{eq ent prod stoch thermo modified}) was very recently already experimentally 
confirmed~\cite{RibezziCrivellariRetortNP2019}, see also the discussion in Ref.~\cite{StrasbergWinterArXiv2019}. 

(4) In addition, we can recover the conventional second law of stochastic thermodynamics, if we redefine entropy. 
Namely, if we replace our definition of entropy by the conventional one~(\ref{eq E S stoch thermo}), the stochastic 
entropy production becomes in our notation 
\begin{equation}
 -\ln p(r_n) + \ln p(r_{n-1}) - \beta[H(r_n,\lambda_t) - H(r_{n-1},\lambda_t)].
\end{equation}
If we average over $p(\bb r_n)$ and use that the measured probabilities are identical to the probabilities of the 
system, $p(r_n=s) = p_s(t)$ and $p(r_{n-1}=s) = p_s(t-dt)$, we obtain 
\begin{equation}
 \begin{split}
   & S_\text{Sh}[p_s(t)] - S_\text{Sh}[p_s(t-dt)]   \\
   & - \beta \sum_s H(s,\lambda_t) [p_s(t)-p_s(t-dt)].
 \end{split}
\end{equation}
This is identical to Eq.~(\ref{eq ent prod stoch thermo}). 

\subsection{Getting rid of the units in the thermodynamic description}
\label{sec getting rid of units}

We used the external stream of units to guide our thermodynamic analysis along the framework of repeated interactions. 
In many important realistic situations it is also clear how to model the units physically. This is, for 
instance, the case for the micromaser, the experimental setup studied in Sec.~\ref{sec example} or for certain 
mesoscopic devices where tunneling electrons and Cooper pairs could be identified as 
units~\cite{RodriguesImbersArmourPRL2007, WestigEtAlPRL2017}. Therefore, the 
framework of repeated interactions allows us to treat a larger class of physically relevant scenarios. 

Nevertheless, there are also scenarios where the exact microscopic nature of the units is not known or hard to model. 
Furthermore, as also the process tensor relies only on specifying CP maps $\C A(r_n|\bb r_{n-1})$ acting on the 
\emph{system}, it is worth to ask whether we can get rid of the sometimes rather artifical units in the thermodynamic 
description. Energetically, we have already seen that simply setting $H_{U(n)} \sim 1_U$ for all $n$ cancels out all 
unit contributions from the first law. To get rid of the units from the entropic considerations, we will need to 
restrict ourselves to \emph{efficient} control operations~\cite{WisemanMilburnBook2010, JacobsBook2014}. Efficient 
control operations are defined by the requirement that they can be written as 
\begin{equation}
 \tilde\rho_S(r) = \C A(r)\rho_S = A(r)\rho_S A(r)^\dagger
\end{equation}
as opposed to the more general form~(\ref{eq CP map}). They have the specific property that any initially pure state 
$\rho_S$ gets mapped to a pure state again. Mathematically, every efficient control operation can be modeled by 
an initially pure unit state $\rho_U = |\psi\rl\psi|_U$, which interacts unitarily via $V$ with the 
system and is finally projectively measured using $P(r) = |r\rangle\langle r|$. This implies 
\begin{equation}
 \tilde\rho_S(r) = \C A(r)\rho_S = \big\langle r\big|\C V[\rho_S\otimes|\psi\rl\psi|_U]\big|r\big\rangle_U.
\end{equation}
This construction extends to multiple operations conditioned on previous results $\bb r_{n-1}$ in the obvious way. 

To see that the units also do not enter the entropic balance in this case, notice that 
the unit state is pure and decorrelated from the system after every operation. This follows from the fact that 
we perform a rank 1 projective measurement on the units after each control operation. The joint state of the system and 
all units after obtaining the sequence of outcomes $\bb r_n$ is simply 
$\rho_{SU(\bb n)}(t,\bb r_n) = \rho_{S}(t,\bb r_n)\otimes|\bb r_n\rangle\langle\bb r_n|_{U(\bb n)}$ 
with 
$|\bb r_n\rangle\langle\bb r_n|_{U(\bb n)} \equiv |r_n\rangle\langle r_n|_{U(n)}\otimes\dots\otimes|r_1\rangle\langle r_1|_{U(1)}$. The joint entropy for this state becomes 
$S_\text{vN}[\rho_{SU(\bb n)}(t,\bb r_n)] = S_\text{vN}[\rho_S(t,\bb r_n)]$. 
Also before the interaction at time $t_n$, we have 
\begin{equation}
 S_\text{vN}[\rho_{SU(\bb n)}(t_n^-,\bb r_n)] = S_\text{vN}[\rho_S(t_n^-,\bb r_n)],
\end{equation}
where we used that the initial unit state is pure and hence, always decorrelated from the system. We note that the 
ensemble averaged system unit state $\sum_{\bb r_n} p(\bb r_n) \rho_{SU(\bb n)}(t,\bb r_n)$ is in general classically 
correlated. 

To summarize, in case of energetically neutral units and efficient control operations, the stochastic internal energy 
and entropy can be reduced to 
\begin{align}
 E_S(t,\bb r_n) &=  \mbox{tr}_S\{H_S(\lambda_t,\bb r_n) \rho_S(t,\bb r_n)\},    \\
 S_S(t,\bb r_n) &=  -\ln p(\bb r_n) + S_\text{vN}[\rho_S(t,\bb r_n)].   \label{eq S no units}
\end{align}
Note, however, that we are still using the external units to model the control operations dynamically. 
We will discuss in Sec.~\ref{sec final remarks} how far it is possible to get completely rid of the units.

\subsection{Quantum stochastic thermodynamics without theory input}
\label{sec quantum stoch thermo without theory}

To set up our framework of quantum stochastic thermodynamics, we needed to be able to know the 
work~(\ref{eq W standard}) and heat~(\ref{eq Q standard}) exchanged with the bath in between two control operations. 
Those are path dependent quantities [i.e., they are not determined alone by the state at the boundary 
$\rho_S(t^\pm_n,\bb r_n)$] and estimating them requires additional theoretical input. Albeit this is necessary to 
recover the average picture in general (see Sec.~\ref{sec standard quantum thermo}), it is instructive to discuss 
cases which do not require any additional theoretical modeling. 

Without changing any of our general conclusions, one way would be to consider only a specific subset of 
control protocol $\lambda_t$. These control protocols consist of a sudden switch of the Hamiltonian after each control 
operation, i.e., the protocol changes instantaneously from $\lambda_{n-1}$ to $\lambda_n$ at time $t_n^+$, and after 
the switch we keep the protocol constant until the next control operation. Note that the 
protocol is still allowed to depend on $\bb r_n$, which we have suppressed for notational convenience. Thus, in short we 
can write that $\lambda_t(\bb r_{n-1}) = \lambda_{n-1}(\bb r_{n-1})$ if $t\in(t_{n-1},t_n]$. Those sets of control 
protocols are characterized by the fact that the work~(\ref{eq W standard}) and heat~(\ref{eq Q standard}) can be 
computed without any knowledge about the system state in between two control operations: 
\begin{align}
 & W^{(n)}_S(\bb r_{n-1}) =	\\
 & \mbox{tr}_S\{[H_S(\lambda_{n-1},\bb r_{n-1}) - H_S(\lambda_{n-2},\bb r_{n-2})]\rho_S(t_{n-1}^+,\bb r_{n-1})\} \nonumber  \\
 & Q^{(n)}_S(\bb r_{n-1}) =	\\
 & \mbox{tr}_S\{H_S(\lambda_{n-1},\bb r_{n-1})[\rho_S(t_n^-,\bb r_{n-1}) - \rho_S(t_{n-1}^+,\bb r_{n-1})]\}. \nonumber
\end{align}

Another way to approach this problem is to try to set up an effective thermodynamic description based solely on 
knowledge of the dynamical map $\C E_n$ defined in Eq.~(\ref{eq dynamical map}). Note that the dynamical map can be 
inferred from knowledge of the process tensor. The very problem of this approach comes from the fact that different 
physical situations (with different thermodynamic values for $W^{(n)}_S$ and $Q^{(n)}_S$) can give rise to the same 
dynamical map $\C E_n$. Thus, if we try to pursue the second way, we will not be able to recover the results from 
Secs.~\ref{sec standard quantum thermo} and~\ref{sec repeated interactions ensemble level} in general. Nevertheless, 
the author believes that it could be worthwile to pursue this direction because the thermodynamic description of 
dynamical maps was already investigated before~\cite{AndersGiovannettiNJP2013, BinderEtAlPRE2015, 
ManzanoHorowitzParrondoPRE2015, BarraLledoPRE2017}. Especially, for dynamical maps which have additional properties, 
such as being Gibbs state-preserving, the present framework could be fruitfully combined with the resource 
theory approach to quantum thermodynamics~\cite{GooldEtAlJPA2016, LostaglioArXiv2018}. 

\section{Final remarks and outlook }
\label{sec final}

\subsection{Final remarks}
\label{sec final remarks}

We have presented a theoretical framework, which is able to cope with arbitrary quantum operations and arbitrary 
`unravelings' of them. It uses very natural definitions of internal energy~(\ref{eq internal energy}) and 
entropy~(\ref{eq entropy}), but in its most general form it can appear quite heavy. Especially, the framework of 
repeated interactions added another layer of complexity and it is worthwhile to ask whether we can get completely 
rid of it. For efficient control operations we have seen already in Sec.~\ref{sec getting rid of units} that the 
units do not enter the laws of thermodynamics anymore, albeit they still played a role dynamically. This was important 
in order to arrive at an unambiguous interpretation of heat and work during the control step. 
Let us look at an arbitrary efficient operation $\tilde\rho_S(r) = A(r)\rho_S A(r)^\dagger$ again. It is tempting to 
use the polar decomposition theorem $A(r) = U(r)P(r)$, where $U(r)$ is a unitary matrix and $P(r)$ a positive matrix, 
to define work and heat exchanges. One idea could be to associate changes in the energy caused by $P(r)$ [$U(r)$] as 
heat (work). Unfortunately, one then arrives at the conclusion that a projective measurement is on average a heat 
and not a work source and we have debated this problem already in Sec.~\ref{sec projective measurements}. Moreover, 
there is also a `reverse' polar decomposition theorem $A(r) = P'(r)U(r)$, where $P'(r)\neq P(r)$ in general. This 
would then give rise to a \emph{different} splitting into heat and work for the \emph{same} control operation. This is 
even true in the case $P'(r) = P(r)$ because in the reverse decomposition the positive matrix acts after the unitary. 
By using Stinespring's dilation theorem we have circumvented this difficulty in the repeated interaction framework. 
In this picture the unitary $V$ must always act first to correlate the system and the unit before it is followed by a 
measurement of the unit. This fixes the ambiguity of assigning heat and work, which can be conveniently computed by using 
the control operations only, see Eqs.~(\ref{eq W ctrl S only}) and~(\ref{eq Q ctrl S only}). Thus, for efficient control 
operations with energetically neutral units the explicit modeling of the units is no longer necessary.

Another subtle point concerns the definition of an `entropy production' via a time-reversed process. 
We have here decided to find a meaningful definition of heat and entropy at the first place and we have then checked 
that the entropy production $\Sigma = \Delta S_S - \beta Q$ as known from phenomenological nonequilibrium 
thermodynamics is positive on average. Remarkably, within the framework of classical stochastic thermodynamics there is 
an equivalent alternative approach by defining the stochastic entropy production as 
\begin{equation}\label{eq ent prod time reversed}
 \tilde\Sigma(\bb r_n) \equiv \ln\frac{p(\bb r_n)}{p^\dagger(\bb r_n^\dagger)}.
\end{equation}
Here, $p^\dagger(\bb r_n^\dagger)$ is the probability to observe the time-revered trajectory in a suitably chosen 
time-reversed experiment~\cite{SeifertRPP2012, VandenBroeckEspositoPhysA2015}. This stochastic entropy production 
fulfills a fluctuation theorem and a second law and it is linked to the (breaking of) time-reversal symmetry 
of the underlying microscopic Hamiltonian dynamics~\cite{EvansSearlesAdvPhy2002, EspositoHarbolaMukamelRMP2009, 
CampisiHaenggiTalknerRMP2011, JarzynskiAnnuRevCondMat2011}. 
It is tempting to apply a similar strategy also within our framework by defining a suitable `time-reversed' process 
to construct the `entropy production'~(\ref{eq ent prod time reversed}). Unfortunately, for a general quantum 
operation it is not clear what the corresponding time-reversed process should be. Various proposals have been put 
forward and used in the literature~\cite{CrooksPRA2008, ManikandanJordanQSMF2018, HorowitzPRE2012, 
HorowitzParrondoNJP2013, ManzanoHorowitzParrondoPRE2015, DresselEtAlPRL2017, ElouardEtAlQInf2017, BenoistEtAlCM2018, 
ManikandanElouardJordanPRA2019, ElouardMohammadyBook2018, ManzanoHorowitzParrondoPRX2018} resulting in \emph{multiple} 
possible second laws for the \emph{same physical situation}. It is 
an advantage of the present framework that we are able to derive a second law without taking the detour of 
defining a time-reversed process, which -- at least at the moment -- seems to entail an unwanted amount of ambiguity. 

As a final `final remark' we comment on the possibility to extend the present framework beyond the case of a single 
heat bath. In fact, this is even an open problem in classical stochastic thermodynamics from an experimental point of 
view: as soon as multiple heat baths induce transitions between the same system states, a local measurement of the 
system only will not reveal which bath has triggered the transition. Classically, a way out of this dilemma is to 
experimentally ensure that the transition between each pair of states is only caused by a single bath, for 
instance by geometrically separating the system into subsystems, where each subsystem interacts only with one bath. 
This is indeed what happens in transport experiments through quantum dots~\cite{UtsumiEtAlPRB2010, KungEtAlPRX2012}. 
Quantum mechanically, this separation is more difficult to achieve. At least within the standard approach based on a 
Born-Markov-secular approximation~\cite{SpohnLebowitzAdvChemPhys1979, AlickiJPA1979, LindbladBook1983, 
KosloffEntropy2013}, the system jumps between energy eigenstates of the composite system which are in general entangled. 
On the other hand, it was recently also argued that a `local' approach to the dynamics (where each dissipator in a 
quantum system acts only on a specific subsystem) is feasible from a thermodynamic point of view~\cite{BarraSciRep2015, 
TrushechkinVolovichEPL2016, HoferEtAlNJP2017}. If that is the case, it should be in principle possible to apply our 
framework to a situation with multiple baths in some limit. As the proper extension of quantum thermodynamics to the 
presence of multiple heat baths can already bear surprising difficulties at the average 
level~\cite{MitchisonPlenioNJP2018}, these investigations are left for the future. 

\subsection{Outlook}
\label{sec outlook}

In this last section we outline three promising future applications that allow us to answer in a general 
and rigorous way open problems in quantum thermodynamics. 

\subsubsection{Quantum coherence and Leggett-Garg inequalities}

One primary task of quantum thermodynamics is to unravel how quantum features (such as coherence or entanglement) 
influence the performance of quantum heat engines and other devices. An introduction to this topic was recently 
provided in Ref.~\cite{LevyGelbwaserKlimovskyBook2018}. While several interesting results have been found (showing that 
quantum effects can be both, beneficial and detrimental), one always has to be cautious when \emph{comparing} them with 
classical systems. In fact, it is far from obvious to which extend quantum and classical models can be compared and what 
are genuine quantum features. For instance, the mere presence of coherences (i.e., off-diagonal elements of the density 
matrix in the energy eigenbasis) is not sufficient to conclude that the heat engine operates in the 
`quantum regime'~\cite{OnamGonzalezEtAlPRE2019}. As we will show now, our framework allows us to rigorously answer 
whether a \emph{given} heat engine uses quantum coherence. Moreover, this is closely related to the violations of 
Leggett-Garg inequalities~\cite{EmaryLambertNoriRPP2014}. 

Our analysis is based on recent progress to understand genuine quantum effects in Markovian systems interrupted by 
projective measurements at a set of discrete times~\cite{SmirneEtAlQST2018, StrasbergDiazPRA2019, 
MilzEgloffEtAlArXiv2019}. In a nutshell, the authors of Ref.~\cite{SmirneEtAlQST2018} have proven that the results 
$\bb r_n$ obtained from the projective measurements in an arbitrary non-degenerate basis $\{|r_n\rangle\}$ cannot be
generated by a classical stochastic process if and only if the Markovian dynamics are 
``coherence-generating-and-detecting'' for an initially diagonal state in the measurement basis. The notion 
coherence-generating-and-detecting is defined by using the dephasing operator $\C D = \sum_{r_n} \C P(r_n)$, where 
$\C P(r_n)$ denotes the projection superoperator with respect to $|r_n\rl r_n|$, and by demanding that there exists times 
$t, \tau \ge 0$ such that 
\begin{equation}
 \C D\circ \C E(t)\circ \C D\circ \C E(\tau)\circ \C D \neq \C D\circ \C E(t+\tau)\circ\C D.
\end{equation}
where $\C E(t)$ denotes the dynamical map of the system in between the control operations (here assumed to be 
time-homogeneous for simplicity) and $\circ$ the composition of two maps. An extension to inhomogeneous maps and more 
general (i.e., non-Markovian) dynamics can be found in Refs.~\cite{StrasbergDiazPRA2019, MilzEgloffEtAlArXiv2019}. 

This framework fits perfectly into our language as we can deal with projective measurements at discrete 
times as well as dephasing operations. To give a simple and intuitive example how this framework could be used to 
detect quantum signatures in thermodynamics, we consider the quantum Otto cycle, which was recently also experimentally 
realized~\cite{RossnagelEtAlScience2016}. The Otto cycle is a four-step process 
$A\rightarrow B\rightarrow C\rightarrow D$ (see, e.g., Fig.~2 in Ref.~\cite{LevyGelbwaserKlimovskyBook2018}). 
In $A\rightarrow B$ the system undergoes isolated (unitary) Hamiltonian evolution, where the system Hamiltonian changes 
from $H_S(1)$ to $H_S(2)$. In $B\rightarrow C$ the system is coupled to a cold bath at temperature $T_C$ and undergoes 
pure relaxation dynamics, which we here assume to be modeled by a Lindblad master equation as often done. In 
$C\rightarrow D$ the system is again isolated and its Hamiltonian is changed from $H_S(2)$ back to $H_S(1)$ again. 
Finally, in $D\rightarrow A$ the system is coupled to a hot bath at temperature $T_H$ and undergoes again pure 
relaxation assumed to be described by a Lindblad master equation. 
After the unitary strokes at point $B$ and $D$ the system density matrix contains coherences in general. If the 
cycle is performed in finite time such that the heat baths do not fully erase the coherences, then it is possible 
that coherences are still present at point $A$ and $C$ and it becomes an interesing question whether they change the 
thermodynamic performance. 

To unambiguously answer this question, one could perform a dephasing operation $\C D$ in the energy eigenbasis at 
any of the four points. If this changes the work output or the thermodynamic efficiency\footnote{Note that we have 
not specified here how to actually infer the work output or efficiency. This could be done purely theoretically or 
purely experimentally, for instance, by doing quantum state tomography at the four points $A, B, C$ and $D$ after waiting long enough such that the system operates at steady state (actually, state tomography at two points suffices 
if we are able to accurately compute the effect of the unitary strokes). }, then the machine shows quantum 
effects. The dephasing operation $\C D$ is easily implemented, for instance, by performing a projective measurement 
of the energy without recording its outcome (see Sec.~\ref{sec projective measurements}). Importantly, the energetic 
cost of this control operation is zero (provided that the unit is energetically neutral) and therefore, it does not 
inject or extract any work into the engine. Hence, while the dephasing operation has an entropic cost, this does not 
play any role to compute the work and heat flows in the Otto cycle, which are essential to compute its performance. 

Conversely, by the theorem derived in Ref.~\cite{SmirneEtAlQST2018, StrasbergDiazPRA2019, MilzEgloffEtAlArXiv2019}, 
we know that coherences can only influence the 
dynamics, if the statistics associated with projective energy measurements at any subset of the four points in the Otto 
cycle shows non-classical signatures by not obeying the Kolmogorov consistency condition. For instance, if the dephasing 
operation at point $B$ has an influence on the thermodynamic performance, then also 
\begin{equation}\label{eq inequality Kolmogorov consistency}
 \sum_{r_B} p(r_A,r_B,r_C) \neq p(r_A,r_C).
\end{equation}
Here, we have denoted the outcome of the projective measurement at point $A$ by $r_A$ (and analogously for the other 
points) in spirit of our previous notation. Thus, instead of looking at the effect of the dephasing operation, we could 
also alternatively use the process tensor formalism to infer the statistics of the projective measurements directly. 

Remarkably, Eq.~(\ref{eq inequality Kolmogorov consistency}) is a necessary prerequisite to violate the Leggett-Garg 
inequality~\cite{EmaryLambertNoriRPP2014}. Thus, by probing the multitime correlations of a quantum 
stochastic process we can also learn something about its thermodynamic behaviour and unravel the regime where it has no 
analogous classical stochastic thermodynamic process. First results in this direction have been already obtained in 
Refs.~\cite{LostaglioPRL2018, MillerAndersEntropy2018}. In addition, there are also entropic Leggett-Garg 
inequalities~\cite{MorikoshiPRA2006, UshaDeviEtAlPRA2013, EmaryLambertNoriRPP2014}, which relate 
Eq.~(\ref{eq inequality Kolmogorov consistency}) to the entropy of the measurement result $H(r_C,r_B,r_A)$. As this 
quantity plays a crucial role in our second law, it would be interesting to investigate whether a Maxwell demon 
can extract more or less work from a system and measurement process able to violate the entropic Leggett-Garg 
inequalities. 

\subsubsection{Entanglement}

Closely related to the previous analysis is the question how far entanglement can boost the performance of a heat 
engine. There has been much theoretical progress on understanding the role of entanglement for work extraction 
(see, e.g., Refs.~\cite{OppenheimEtAlPRL2002, ZurekPRA2003, AlickiFannesPRE2013, HovhannisyanPRL2013, 
PerarnauLlobetEtAlPRX2015, ManzanoPlastinaZambriniPRL2018}), mostly, however, for extracting work in idealized 
protocols. To the best of the author's knowlegde, a realizable and continuously working heat engine using quantum 
entanglement has not yet been presented. In contrast, classical correlations are known to be indispensible for 
autonomous multipartite heat engines such as thermoelectric devices~\cite{SanchezBuettikerPRB2011, 
StrasbergEtAlPRL2013, HartmannEtAlPRL2015, ThierschmannEtAlNatNanotech2015, KoskiEtAlPRL2015}. 

To test whether a thermodynamic process is influcenced by entanglement, consider a bipartite system $AB$ living in the 
Hilbert space $\C H_A\otimes\C H_B$ as the working fluid. One could then follow a similar strategy as above, but this 
time -- instead of applying a dephasing operation -- one would apply an `entanglement-breaking' operation $\C B$, which 
keeps classical correlations. If the reduced state of system $A$ is given by 
$\rho_A = \sum_i \lambda_i |i\rl i|_A$, then the control operation 
\begin{equation}
 \C B\rho_{AB} = \sum_i |i\rl i|_A\rho_{AB}|i\rl i|_A
\end{equation}
would destroy any entanglement but keep all classical correlations. Monitoring the response of a multipartite system 
to such a control operations then allows the experimenter to infer how far quantum correlations play a role 
thermodynamically. As above, this procedure exemplifies how useful generalized control operation are, not only to control 
a thermodynamic process but also to unravel specific properties of it. 

\subsubsection{Non-Markovian signatures in heat engines}

The last part of this outlook probably requires the largest research effort, but it seems to be necessary in order 
to obtain a complete framework of stochastic thermodynamics for small quantum systems. Indeed, for sufficiently 
low temperatures and sufficiently small time-scales (i.e., where the standard Born-Markov secular master equation 
fails) it is expected that generic open quantum systems behave non-Markovian. Furthermore, even at room temperature 
there is evidence that non-Markovianity can drastically effect bio-chemical processes such as 
photosynthesis~\cite{LambertEtAlNatPhys2013, HuelgaPlenioCP2013} and there is evidence that non-Markovian 
effects can also boost the performance of heat engines~\cite{BylickaEtAlSciRep2016, StrasbergEtAlNJP2016, 
WertnikEtALJCP2018}. Despite the fact that there are several ways to rigorously quantify non-Markovianity in 
open quantum systems~\cite{RivasHuelgaPlenioRPP2014, BreuerEtAlRMP2016}, establishing a rigorous connection between 
thermodynamics and non-Markovianity has proven to be challenging so far~\cite{StrasbergEspositoPRE2019}. 

Notice that the present framework crucially hinges on the assumptions of a Markovian system evolution. However, 
it is not unlikely that it is possible to overcome the assumptions from Sec.~\ref{sec preliminary considerations}. 
One route could be to enlarge the system space by incorporating explicitly the most dominant degrees of freedom of the 
environment into the dynamics -- a strategy which was directly or indirectly proposed in 
Refs.~\cite{StrasbergEtAlNJP2016, KatzKosloffEnt2016, NewmanMintertNazirPRE2017, StrasbergEspositoPRE2017, 
PerarnauLlobetEtAlPRL2018, SchallerEtAlPRB2018, StrasbergEtAlPRB2018, RestrepoEtALNJP2018, WertnikEtALJCP2018}. 
Preliminary results also show that this is not even necessary if we do not consider real-time feedback 
control~\cite{StrasbergTBP}. 

To outline how it would be possible to rigorously detect non-Markovian effects in quantum thermodynamics, 
we make use of the notion of a `causal break'. This notion was recently introduced in Ref.~\cite{PollockEtAlPRL2018} 
to give a general and rigorous definition of non-Markovianity based on the process tensor, which generalizes previous 
attempts~\cite{RivasHuelgaPlenioRPP2014, BreuerEtAlRMP2016}. The basic idea is to apply a control operation to the 
system, which re-prepares it in a state 
independent of all past events. Any dependence of future events on past events then reveals non-Markovian effects. 

To have a particular application in quantum thermodynamics in mind, imagine a steadily working heat engine. The details 
of the machine -- i.e., whether it uses multiple heat baths or feedback control as a resource and whether it acts as a  refrigerator or thermoelectric device -- do not matter for the present consideration. Furthermore, let us denote the 
steady state of the machine by $\bar\rho_S$. Now, as a causal break we apply a control operation which replaces 
the current state of the system by the steady state $\bar\rho_S$. This is always possible: we could, for instance, 
projectively measure the state of the system and then prepare the state $\bar\rho_S$. Since $\bar\rho_S$ will be in 
general mixed, this preparation procedure will be probabilistic (i.e., described by multiple Kraus operators and not 
a single one). The crux is now to apply this control operation when the machine has already reached steady state, 
i.e., we effectively replace $\bar\rho_S$ by $\bar\rho_S$ on average. When the system behaves Markovian, the future 
statistics of all measurements will not depend on this re-preparation procedure, but if the system behaves 
non-Markovian, there will be observable consequences as our control operation has destroyed all time-correlations of 
the system with the past. To see whether such a causal break has an influence on the \emph{thermo}dynamics (which does 
not need to be the case even when the overall dynamics are non-Markovian), one could measure, e.g., the work output 
of the device or its efficiency. Since the system was assumed to operate at steady state, any change in its 
thermodynamic behaviour after the causal break described above unambigously reveals non-Markovian effects. 

Thus, to summarize, we are only beginning to explore quantum effects in thermodynamics. To access those quantum effects 
in a lab, it is important to be able to apply various control operations to the system. The present paper provides the 
toolbox to describe these control operations thermodynamically even along a single stochastic trajectory. 

\section*{Acknowledgements}

It is my pleasure to acknowledge useful discussions with and comments from Cyril Elouard, Massimiliano Esposito and 
Kavan Modi. This research was financially supported by the European Research Council project NanoThermo (ERC-2015-CoG 
Agreement No. 681456) and the DFG (project STR 1505/2-1). Also, it was supported by the Spanish MINECO FIS2016-80681-P 
(AEI-FEDER, UE) and in part by the National Science Foundation under Grant No. NSF PHY17-48958.


\bibliography{/home/philipp/Documents/references/books,/home/philipp/Documents/references/open_systems,/home/philipp/Documents/references/thermo,/home/philipp/Documents/references/info_thermo,/home/philipp/Documents/references/general_QM,/home/philipp/Documents/references/math_phys,/home/philipp/Documents/references/general_refs}

\begin{thebibliography}{158}%
\makeatletter
\providecommand \@ifxundefined [1]{%
 \@ifx{#1\undefined}
}%
\providecommand \@ifnum [1]{%
 \ifnum #1\expandafter \@firstoftwo
 \else \expandafter \@secondoftwo
 \fi
}%
\providecommand \@ifx [1]{%
 \ifx #1\expandafter \@firstoftwo
 \else \expandafter \@secondoftwo
 \fi
}%
\providecommand \natexlab [1]{#1}%
\providecommand \enquote  [1]{``#1''}%
\providecommand \bibnamefont  [1]{#1}%
\providecommand \bibfnamefont [1]{#1}%
\providecommand \citenamefont [1]{#1}%
\providecommand \href@noop [0]{\@secondoftwo}%
\providecommand \href [0]{\begingroup \@sanitize@url \@href}%
\providecommand \@href[1]{\@@startlink{#1}\@@href}%
\providecommand \@@href[1]{\endgroup#1\@@endlink}%
\providecommand \@sanitize@url [0]{\catcode `\\12\catcode `\$12\catcode
  `\&12\catcode `\#12\catcode `\^12\catcode `\_12\catcode `\%12\relax}%
\providecommand \@@startlink[1]{}%
\providecommand \@@endlink[0]{}%
\providecommand \url  [0]{\begingroup\@sanitize@url \@url }%
\providecommand \@url [1]{\endgroup\@href {#1}{\urlprefix }}%
\providecommand \urlprefix  [0]{URL }%
\providecommand \Eprint [0]{\href }%
\providecommand \doibase [0]{http://dx.doi.org/}%
\providecommand \selectlanguage [0]{\@gobble}%
\providecommand \bibinfo  [0]{\@secondoftwo}%
\providecommand \bibfield  [0]{\@secondoftwo}%
\providecommand \translation [1]{[#1]}%
\providecommand \BibitemOpen [0]{}%
\providecommand \bibitemStop [0]{}%
\providecommand \bibitemNoStop [0]{.\EOS\space}%
\providecommand \EOS [0]{\spacefactor3000\relax}%
\providecommand \BibitemShut  [1]{\csname bibitem#1\endcsname}%
\let\auto@bib@innerbib\@empty
\bibitem [{\citenamefont {Schnakenberg}(1976)}]{SchnakenbergRMP1976}%
  \BibitemOpen
  \bibfield  {author} {\bibinfo {author} {\bibfnamefont {J.}~\bibnamefont
  {Schnakenberg}},\ }\bibfield  {title} {\enquote {\bibinfo {title} {Network
  theory of microscopic and macroscopic behavior of master equation systems},}\
  }\href {https://journals.aps.org/rmp/abstract/10.1103/RevModPhys.48.571}
  {\bibfield  {journal} {\bibinfo  {journal} {Rev. Mod. Phys.}\ }\textbf
  {\bibinfo {volume} {48}},\ \bibinfo {pages} {571--585} (\bibinfo {year}
  {1976})}\BibitemShut {NoStop}%
\bibitem [{\citenamefont {Hill}(1977)}]{HillBook1977}%
  \BibitemOpen
  \bibfield  {author} {\bibinfo {author} {\bibfnamefont {T.~L.}\ \bibnamefont
  {Hill}},\ }\href@noop {} {\emph {\bibinfo {title} {Free {E}nergy
  {T}ransduction in {B}iology}}}\ (\bibinfo  {publisher} {Academic},\ \bibinfo
  {address} {New York},\ \bibinfo {year} {1977})\BibitemShut {NoStop}%
\bibitem [{\citenamefont {Spohn}\ and\ \citenamefont
  {Lebowitz}(1979)}]{SpohnLebowitzAdvChemPhys1979}%
  \BibitemOpen
  \bibfield  {author} {\bibinfo {author} {\bibfnamefont {H.}~\bibnamefont
  {Spohn}}\ and\ \bibinfo {author} {\bibfnamefont {J.~L.}\ \bibnamefont
  {Lebowitz}},\ }\bibfield  {title} {\enquote {\bibinfo {title} {Irreversible
  thermodynamics for quantum systems weakly coupled to thermal reservoirs},}\
  }\href {https://onlinelibrary.wiley.com/doi/10.1002/9780470142578.ch2}
  {\bibfield  {journal} {\bibinfo  {journal} {Adv. Chem. Phys.}\ }\textbf
  {\bibinfo {volume} {38}},\ \bibinfo {pages} {109--142} (\bibinfo {year}
  {1979})}\BibitemShut {NoStop}%
\bibitem [{\citenamefont {Alicki}(1979)}]{AlickiJPA1979}%
  \BibitemOpen
  \bibfield  {author} {\bibinfo {author} {\bibfnamefont {R.}~\bibnamefont
  {Alicki}},\ }\bibfield  {title} {\enquote {\bibinfo {title} {The quantum open
  system as a model of the heat engine},}\ }\href
  {http://iopscience.iop.org/article/10.1088/0305-4470/12/5/007/meta}
  {\bibfield  {journal} {\bibinfo  {journal} {J. Phys. A}\ }\textbf {\bibinfo
  {volume} {12}},\ \bibinfo {pages} {L103} (\bibinfo {year}
  {1979})}\BibitemShut {NoStop}%
\bibitem [{\citenamefont {Lindblad}(1983)}]{LindbladBook1983}%
  \BibitemOpen
  \bibfield  {author} {\bibinfo {author} {\bibfnamefont {G.}~\bibnamefont
  {Lindblad}},\ }\href@noop {} {\emph {\bibinfo {title} {Non-Equilibrium
  Entropy and Irreversibility}}}\ (\bibinfo  {publisher} {D. Reidel
  Publishing},\ \bibinfo {address} {Dordrecht, Holland},\ \bibinfo {year}
  {1983})\BibitemShut {NoStop}%
\bibitem [{\citenamefont {Kosloff}(2013)}]{KosloffEntropy2013}%
  \BibitemOpen
  \bibfield  {author} {\bibinfo {author} {\bibfnamefont {R.}~\bibnamefont
  {Kosloff}},\ }\bibfield  {title} {\enquote {\bibinfo {title} {Quantum
  thermodynamics: A dynamical viewpoint},}\ }\href
  {http://www.mdpi.com/1099-4300/15/6/2100/htm} {\bibfield  {journal} {\bibinfo
   {journal} {Entropy}\ }\textbf {\bibinfo {volume} {15}},\ \bibinfo {pages}
  {2100--2128} (\bibinfo {year} {2013})}\BibitemShut {NoStop}%
\bibitem [{\citenamefont {Evans}\ and\ \citenamefont
  {Searles}(2002)}]{EvansSearlesAdvPhy2002}%
  \BibitemOpen
  \bibfield  {author} {\bibinfo {author} {\bibfnamefont {D.~J.}\ \bibnamefont
  {Evans}}\ and\ \bibinfo {author} {\bibfnamefont {D.~J.}\ \bibnamefont
  {Searles}},\ }\bibfield  {title} {\enquote {\bibinfo {title} {The fluctuation
  theorem},}\ }\href {\doibase 10.1080/00018730210155133} {\bibfield  {journal}
  {\bibinfo  {journal} {Adv. Phys.}\ }\textbf {\bibinfo {volume} {51}},\
  \bibinfo {pages} {1529--1585} (\bibinfo {year} {2002})}\BibitemShut {NoStop}%
\bibitem [{\citenamefont {Jarzynski}(2011)}]{JarzynskiAnnuRevCondMat2011}%
  \BibitemOpen
  \bibfield  {author} {\bibinfo {author} {\bibfnamefont {C.}~\bibnamefont
  {Jarzynski}},\ }\bibfield  {title} {\enquote {\bibinfo {title} {Equalities
  and inequalities: irreversibility and the second law of thermodynamics at the
  nanoscale},}\ }\href
  {https://www.annualreviews.org/doi/abs/10.1146/annurev-conmatphys-062910-140506?casa_token=SOR8Wau3OpQAAAAA%3ArIUxCiewvVHN8Sp0m7zwvFwtwgq4Gu8ENqYgXUI9_l1nUQ7BOaYP-JIWGeIw5UWDCgUn1yA1OTwC}
  {\bibfield  {journal} {\bibinfo  {journal} {Annu. Rev. Condens. Matter
  Phys.}\ }\textbf {\bibinfo {volume} {2}},\ \bibinfo {pages} {329--351}
  (\bibinfo {year} {2011})}\BibitemShut {NoStop}%
\bibitem [{\citenamefont {Sekimoto}(1998)}]{SekimotoPTPS1998}%
  \BibitemOpen
  \bibfield  {author} {\bibinfo {author} {\bibfnamefont {K.}~\bibnamefont
  {Sekimoto}},\ }\bibfield  {title} {\enquote {\bibinfo {title} {Langevin
  equation and thermodynamics},}\ }\href {https://doi.org/10.1143/PTPS.130.17}
  {\bibfield  {journal} {\bibinfo  {journal} {Prog. Theor. Phys. Suppl.}\
  }\textbf {\bibinfo {volume} {130}},\ \bibinfo {pages} {17--27} (\bibinfo
  {year} {1998})}\BibitemShut {NoStop}%
\bibitem [{\citenamefont {Sekimoto}(2010)}]{SekimotoBook2010}%
  \BibitemOpen
  \bibfield  {author} {\bibinfo {author} {\bibfnamefont {K.}~\bibnamefont
  {Sekimoto}},\ }\href@noop {} {\emph {\bibinfo {title} {Stochastic
  Energetics}}},\ Vol.\ \bibinfo {volume} {799}\ (\bibinfo  {publisher} {Lect.
  Notes Phys., Springer},\ \bibinfo {address} {Berlin Heidelberg},\ \bibinfo
  {year} {2010})\BibitemShut {NoStop}%
\bibitem [{\citenamefont {Seifert}(2005)}]{SeifertPRL2005}%
  \BibitemOpen
  \bibfield  {author} {\bibinfo {author} {\bibfnamefont {U.}~\bibnamefont
  {Seifert}},\ }\bibfield  {title} {\enquote {\bibinfo {title} {Entropy
  production along a stochastic trajectory and an integral fluctuation
  theorem},}\ }\href
  {https://journals.aps.org/prl/abstract/10.1103/PhysRevLett.95.040602}
  {\bibfield  {journal} {\bibinfo  {journal} {Phys. Rev. Lett.}\ }\textbf
  {\bibinfo {volume} {95}},\ \bibinfo {pages} {040602} (\bibinfo {year}
  {2005})}\BibitemShut {NoStop}%
\bibitem [{\citenamefont {Seifert}(2012)}]{SeifertRPP2012}%
  \BibitemOpen
  \bibfield  {author} {\bibinfo {author} {\bibfnamefont {U.}~\bibnamefont
  {Seifert}},\ }\bibfield  {title} {\enquote {\bibinfo {title} {Stochastic
  thermodynamics, fluctuation theorems and molecular machines},}\ }\href
  {http://iopscience.iop.org/article/10.1088/0034-4885/75/12/126001/meta}
  {\bibfield  {journal} {\bibinfo  {journal} {Rep. Prog. Phys.}\ }\textbf
  {\bibinfo {volume} {75}},\ \bibinfo {pages} {126001} (\bibinfo {year}
  {2012})}\BibitemShut {NoStop}%
\bibitem [{\citenamefont {{Van den Broeck}}\ and\ \citenamefont
  {Esposito}(2015)}]{VandenBroeckEspositoPhysA2015}%
  \BibitemOpen
  \bibfield  {author} {\bibinfo {author} {\bibfnamefont {C.}~\bibnamefont {{Van
  den Broeck}}}\ and\ \bibinfo {author} {\bibfnamefont {M.}~\bibnamefont
  {Esposito}},\ }\bibfield  {title} {\enquote {\bibinfo {title} {Ensemble and
  trajectory thermodynamics: A brief introduction},}\ }\href
  {https://www.sciencedirect.com/science/article/pii/S037843711400346X}
  {\bibfield  {journal} {\bibinfo  {journal} {Physica (Amsterdam)}\ }\textbf
  {\bibinfo {volume} {418A}},\ \bibinfo {pages} {6--16} (\bibinfo {year}
  {2015})}\BibitemShut {NoStop}%
\bibitem [{\citenamefont {Bechhoefer}(2005)}]{BechhoeferRMP2005}%
  \BibitemOpen
  \bibfield  {author} {\bibinfo {author} {\bibfnamefont {John}\ \bibnamefont
  {Bechhoefer}},\ }\bibfield  {title} {\enquote {\bibinfo {title} {Feedback for
  physicists: A tutorial essay on control},}\ }\href
  {https://journals.aps.org/rmp/abstract/10.1103/RevModPhys.77.783} {\bibfield
  {journal} {\bibinfo  {journal} {Rev. Mod. Phys.}\ }\textbf {\bibinfo {volume}
  {77}},\ \bibinfo {pages} {783--836} (\bibinfo {year} {2005})}\BibitemShut
  {NoStop}%
\bibitem [{\citenamefont {Ribezzi-Crivellari}\ and\ \citenamefont
  {Ritort}(2014)}]{RibezziCrivellariRitortPNAS2014}%
  \BibitemOpen
  \bibfield  {author} {\bibinfo {author} {\bibfnamefont {M.}~\bibnamefont
  {Ribezzi-Crivellari}}\ and\ \bibinfo {author} {\bibfnamefont
  {F.}~\bibnamefont {Ritort}},\ }\bibfield  {title} {\enquote {\bibinfo {title}
  {Free-energy inference from partial work measurements in small systems},}\
  }\href {http://www.pnas.org/content/111/33/E3386.short} {\bibfield  {journal}
  {\bibinfo  {journal} {Proc. Natl. Acad. Sci.}\ }\textbf {\bibinfo {volume}
  {111}},\ \bibinfo {pages} {E3386} (\bibinfo {year} {2014})}\BibitemShut
  {NoStop}%
\bibitem [{\citenamefont {Alemany}\ \emph {et~al.}(2015)\citenamefont
  {Alemany}, \citenamefont {Ribezzi-Crivellari},\ and\ \citenamefont
  {Ritort}}]{AlemanyRibezziCrivellariRitortNJP2015}%
  \BibitemOpen
  \bibfield  {author} {\bibinfo {author} {\bibfnamefont {A.}~\bibnamefont
  {Alemany}}, \bibinfo {author} {\bibfnamefont {M.}~\bibnamefont
  {Ribezzi-Crivellari}}, \ and\ \bibinfo {author} {\bibfnamefont
  {F.}~\bibnamefont {Ritort}},\ }\bibfield  {title} {\enquote {\bibinfo {title}
  {From free energy measurements to thermodynamic inference in nonequilibrium
  small systems},}\ }\href
  {http://iopscience.iop.org/article/10.1088/1367-2630/17/7/075009/meta}
  {\bibfield  {journal} {\bibinfo  {journal} {New. J. Phys.}\ }\textbf
  {\bibinfo {volume} {17}},\ \bibinfo {pages} {075009} (\bibinfo {year}
  {2015})}\BibitemShut {NoStop}%
\bibitem [{\citenamefont {Bechhoefer}(2015)}]{BechhoeferNJP2015}%
  \BibitemOpen
  \bibfield  {author} {\bibinfo {author} {\bibfnamefont {J.}~\bibnamefont
  {Bechhoefer}},\ }\bibfield  {title} {\enquote {\bibinfo {title} {Hidden
  {M}arkov models for stochastic thermodynamics},}\ }\href
  {http://iopscience.iop.org/article/10.1088/1367-2630/17/7/075003/meta}
  {\bibfield  {journal} {\bibinfo  {journal} {New. J. Phys.}\ }\textbf
  {\bibinfo {volume} {17}},\ \bibinfo {pages} {075003} (\bibinfo {year}
  {2015})}\BibitemShut {NoStop}%
\bibitem [{\citenamefont {Garc\'{\i}a-Garc\'{\i}a}\ \emph
  {et~al.}(2016)\citenamefont {Garc\'{\i}a-Garc\'{\i}a}, \citenamefont
  {Sourabh},\ and\ \citenamefont {Lacoste}}]{GarciaGarciaLahiriLacostePRE2016}%
  \BibitemOpen
  \bibfield  {author} {\bibinfo {author} {\bibfnamefont {R.}~\bibnamefont
  {Garc\'{\i}a-Garc\'{\i}a}}, \bibinfo {author} {\bibfnamefont
  {L.}~\bibnamefont {Sourabh}}, \ and\ \bibinfo {author} {\bibfnamefont
  {D.}~\bibnamefont {Lacoste}},\ }\bibfield  {title} {\enquote {\bibinfo
  {title} {Thermodynamic inference based on coarse-grained data or noisy
  measurements},}\ }\href
  {https://journals.aps.org/pre/abstract/10.1103/PhysRevE.93.032103} {\bibfield
   {journal} {\bibinfo  {journal} {Phys. Rev. E}\ }\textbf {\bibinfo {volume}
  {93}},\ \bibinfo {pages} {032103} (\bibinfo {year} {2016})}\BibitemShut
  {NoStop}%
\bibitem [{\citenamefont {W\"achtler}\ \emph {et~al.}(2016)\citenamefont
  {W\"achtler}, \citenamefont {Strasberg},\ and\ \citenamefont
  {Brandes}}]{WaechtlerStrasbergBrandesNJP2016}%
  \BibitemOpen
  \bibfield  {author} {\bibinfo {author} {\bibfnamefont {C.~W.}\ \bibnamefont
  {W\"achtler}}, \bibinfo {author} {\bibfnamefont {P.}~\bibnamefont
  {Strasberg}}, \ and\ \bibinfo {author} {\bibfnamefont {T.}~\bibnamefont
  {Brandes}},\ }\bibfield  {title} {\enquote {\bibinfo {title} {Stochastic
  thermodynamics based on incomplete information: generalized {J}arzynski
  equality with measurement errors with or without feedback},}\ }\href
  {http://iopscience.iop.org/article/10.1088/1367-2630/18/11/113042/meta;jsessionid=7D0B7E35F376A0FF04E50DEF2F07CF62.c3.iopscience.cld.iop.org#njpaa4ce4bib8}
  {\bibfield  {journal} {\bibinfo  {journal} {New J. Phys.}\ }\textbf {\bibinfo
  {volume} {18}},\ \bibinfo {pages} {113042} (\bibinfo {year}
  {2016})}\BibitemShut {NoStop}%
\bibitem [{\citenamefont {Polettini}\ and\ \citenamefont
  {Esposito}(2017)}]{PolettiniEspositoPRL2017}%
  \BibitemOpen
  \bibfield  {author} {\bibinfo {author} {\bibfnamefont {M.}~\bibnamefont
  {Polettini}}\ and\ \bibinfo {author} {\bibfnamefont {M.}~\bibnamefont
  {Esposito}},\ }\bibfield  {title} {\enquote {\bibinfo {title} {Effective
  thermodynamics for a marginal observer},}\ }\href {\doibase
  10.1103/PhysRevLett.119.240601} {\bibfield  {journal} {\bibinfo  {journal}
  {Phys. Rev. Lett.}\ }\textbf {\bibinfo {volume} {119}},\ \bibinfo {pages}
  {240601} (\bibinfo {year} {2017})}\BibitemShut {NoStop}%
\bibitem [{\citenamefont {Polettini}\ and\ \citenamefont
  {Esposito}(2019)}]{PolettiniEspositoJSP2019}%
  \BibitemOpen
  \bibfield  {author} {\bibinfo {author} {\bibfnamefont {M.}~\bibnamefont
  {Polettini}}\ and\ \bibinfo {author} {\bibfnamefont {M.}~\bibnamefont
  {Esposito}},\ }\bibfield  {title} {\enquote {\bibinfo {title} {Effective
  fluctuation and response theory},}\ }\href
  {https://doi.org/10.1007/s10955-019-02291-7} {\bibfield  {journal} {\bibinfo
  {journal} {J. Stat. Phys.}\ } (\bibinfo {year} {2019})}\BibitemShut {NoStop}%
\bibitem [{\citenamefont {Esposito}\ \emph {et~al.}(2009)\citenamefont
  {Esposito}, \citenamefont {Harbola},\ and\ \citenamefont
  {Mukamel}}]{EspositoHarbolaMukamelRMP2009}%
  \BibitemOpen
  \bibfield  {author} {\bibinfo {author} {\bibfnamefont {M.}~\bibnamefont
  {Esposito}}, \bibinfo {author} {\bibfnamefont {U.}~\bibnamefont {Harbola}}, \
  and\ \bibinfo {author} {\bibfnamefont {S.}~\bibnamefont {Mukamel}},\
  }\bibfield  {title} {\enquote {\bibinfo {title} {Nonequilibrium fluctuations,
  fluctuation theorems and counting statistics in quantum systems},}\ }\href
  {https://journals.aps.org/rmp/abstract/10.1103/RevModPhys.81.1665} {\bibfield
   {journal} {\bibinfo  {journal} {Rev. Mod. Phys.}\ }\textbf {\bibinfo
  {volume} {81}},\ \bibinfo {pages} {1665} (\bibinfo {year}
  {2009})}\BibitemShut {NoStop}%
\bibitem [{\citenamefont {Campisi}\ \emph {et~al.}(2011)\citenamefont
  {Campisi}, \citenamefont {H\"anggi},\ and\ \citenamefont
  {Talkner}}]{CampisiHaenggiTalknerRMP2011}%
  \BibitemOpen
  \bibfield  {author} {\bibinfo {author} {\bibfnamefont {M.}~\bibnamefont
  {Campisi}}, \bibinfo {author} {\bibfnamefont {P.}~\bibnamefont {H\"anggi}}, \
  and\ \bibinfo {author} {\bibfnamefont {P.}~\bibnamefont {Talkner}},\
  }\bibfield  {title} {\enquote {\bibinfo {title} {Colloquium: Quantum
  fluctuation relations: Foundations and applications},}\ }\href
  {https://journals.aps.org/rmp/abstract/10.1103/RevModPhys.83.771} {\bibfield
  {journal} {\bibinfo  {journal} {Rev. Mod. Phys.}\ }\textbf {\bibinfo {volume}
  {83}},\ \bibinfo {pages} {771} (\bibinfo {year} {2011})}\BibitemShut
  {NoStop}%
\bibitem [{\citenamefont {Perarnau-Llobet}\ \emph {et~al.}(2017)\citenamefont
  {Perarnau-Llobet}, \citenamefont {B\"aumer}, \citenamefont {Hovhannisyan},
  \citenamefont {Huber},\ and\ \citenamefont
  {Acin}}]{PerarnauLlobetEtAlPRL2017}%
  \BibitemOpen
  \bibfield  {author} {\bibinfo {author} {\bibfnamefont {M.}~\bibnamefont
  {Perarnau-Llobet}}, \bibinfo {author} {\bibfnamefont {E.}~\bibnamefont
  {B\"aumer}}, \bibinfo {author} {\bibfnamefont {K.~V.}\ \bibnamefont
  {Hovhannisyan}}, \bibinfo {author} {\bibfnamefont {M.}~\bibnamefont {Huber}},
  \ and\ \bibinfo {author} {\bibfnamefont {A.}~\bibnamefont {Acin}},\
  }\bibfield  {title} {\enquote {\bibinfo {title} {No-go theorem for the
  characterization of work fluctuations in coherent quantum systems},}\ }\href
  {\doibase 10.1103/PhysRevLett.118.070601} {\bibfield  {journal} {\bibinfo
  {journal} {Phys. Rev. Lett.}\ }\textbf {\bibinfo {volume} {118}},\ \bibinfo
  {pages} {070601} (\bibinfo {year} {2017})}\BibitemShut {NoStop}%
\bibitem [{\citenamefont {Mazzola}\ \emph {et~al.}(2013)\citenamefont
  {Mazzola}, \citenamefont {De~Chiara},\ and\ \citenamefont
  {Paternostro}}]{MazzolaDeChiaraPaternostroPRL2013}%
  \BibitemOpen
  \bibfield  {author} {\bibinfo {author} {\bibfnamefont {L.}~\bibnamefont
  {Mazzola}}, \bibinfo {author} {\bibfnamefont {G.}~\bibnamefont {De~Chiara}},
  \ and\ \bibinfo {author} {\bibfnamefont {M.}~\bibnamefont {Paternostro}},\
  }\bibfield  {title} {\enquote {\bibinfo {title} {Measuring the characteristic
  function of the work distribution},}\ }\href {\doibase
  10.1103/PhysRevLett.110.230602} {\bibfield  {journal} {\bibinfo  {journal}
  {Phys. Rev. Lett.}\ }\textbf {\bibinfo {volume} {110}},\ \bibinfo {pages}
  {230602} (\bibinfo {year} {2013})}\BibitemShut {NoStop}%
\bibitem [{\citenamefont {Dorner}\ \emph {et~al.}(2013)\citenamefont {Dorner},
  \citenamefont {Clark}, \citenamefont {Heaney}, \citenamefont {Fazio},
  \citenamefont {Goold},\ and\ \citenamefont {Vedral}}]{DornerEtAlPRL2013}%
  \BibitemOpen
  \bibfield  {author} {\bibinfo {author} {\bibfnamefont {R.}~\bibnamefont
  {Dorner}}, \bibinfo {author} {\bibfnamefont {S.~R.}\ \bibnamefont {Clark}},
  \bibinfo {author} {\bibfnamefont {L.}~\bibnamefont {Heaney}}, \bibinfo
  {author} {\bibfnamefont {R.}~\bibnamefont {Fazio}}, \bibinfo {author}
  {\bibfnamefont {J.}~\bibnamefont {Goold}}, \ and\ \bibinfo {author}
  {\bibfnamefont {V.}~\bibnamefont {Vedral}},\ }\bibfield  {title} {\enquote
  {\bibinfo {title} {Extracting quantum work statistics and fluctuation
  theorems by single-qubit interferometry},}\ }\href {\doibase
  10.1103/PhysRevLett.110.230601} {\bibfield  {journal} {\bibinfo  {journal}
  {Phys. Rev. Lett.}\ }\textbf {\bibinfo {volume} {110}},\ \bibinfo {pages}
  {230601} (\bibinfo {year} {2013})}\BibitemShut {NoStop}%
\bibitem [{\citenamefont {Batalh\~ao}\ \emph {et~al.}(2014)\citenamefont
  {Batalh\~ao}, \citenamefont {Souza}, \citenamefont {Mazzola}, \citenamefont
  {Auccaise}, \citenamefont {Sarthour}, \citenamefont {Oliveira}, \citenamefont
  {Goold}, \citenamefont {De~Chiara}, \citenamefont {Paternostro},\ and\
  \citenamefont {Serra}}]{BatalhaoEtAlPRL2014}%
  \BibitemOpen
  \bibfield  {author} {\bibinfo {author} {\bibfnamefont {T.~B.}\ \bibnamefont
  {Batalh\~ao}}, \bibinfo {author} {\bibfnamefont {A.~M.}\ \bibnamefont
  {Souza}}, \bibinfo {author} {\bibfnamefont {L.}~\bibnamefont {Mazzola}},
  \bibinfo {author} {\bibfnamefont {R.}~\bibnamefont {Auccaise}}, \bibinfo
  {author} {\bibfnamefont {R.~S.}\ \bibnamefont {Sarthour}}, \bibinfo {author}
  {\bibfnamefont {I.~S.}\ \bibnamefont {Oliveira}}, \bibinfo {author}
  {\bibfnamefont {J.}~\bibnamefont {Goold}}, \bibinfo {author} {\bibfnamefont
  {G.}~\bibnamefont {De~Chiara}}, \bibinfo {author} {\bibfnamefont
  {M.}~\bibnamefont {Paternostro}}, \ and\ \bibinfo {author} {\bibfnamefont
  {R.~M.}\ \bibnamefont {Serra}},\ }\bibfield  {title} {\enquote {\bibinfo
  {title} {Experimental reconstruction of work distribution and study of
  fluctuation relations in a closed quantum system},}\ }\href {\doibase
  10.1103/PhysRevLett.113.140601} {\bibfield  {journal} {\bibinfo  {journal}
  {Phys. Rev. Lett.}\ }\textbf {\bibinfo {volume} {113}},\ \bibinfo {pages}
  {140601} (\bibinfo {year} {2014})}\BibitemShut {NoStop}%
\bibitem [{\citenamefont {Solinas}\ and\ \citenamefont
  {Gasparinetti}(2015)}]{SolinasGasparinettiPRE2015}%
  \BibitemOpen
  \bibfield  {author} {\bibinfo {author} {\bibfnamefont {P.}~\bibnamefont
  {Solinas}}\ and\ \bibinfo {author} {\bibfnamefont {S.}~\bibnamefont
  {Gasparinetti}},\ }\bibfield  {title} {\enquote {\bibinfo {title} {Full
  distribution of work done on a quantum system for arbitrary initial
  states},}\ }\href {\doibase 10.1103/PhysRevE.92.042150} {\bibfield  {journal}
  {\bibinfo  {journal} {Phys. Rev. E}\ }\textbf {\bibinfo {volume} {92}},\
  \bibinfo {pages} {042150} (\bibinfo {year} {2015})}\BibitemShut {NoStop}%
\bibitem [{\citenamefont {Solinas}\ and\ \citenamefont
  {Gasparinetti}(2016)}]{SolinasGasparinettiPRA2016}%
  \BibitemOpen
  \bibfield  {author} {\bibinfo {author} {\bibfnamefont {P.}~\bibnamefont
  {Solinas}}\ and\ \bibinfo {author} {\bibfnamefont {S.}~\bibnamefont
  {Gasparinetti}},\ }\bibfield  {title} {\enquote {\bibinfo {title} {Probing
  quantum interference effects in the work distribution},}\ }\href {\doibase
  10.1103/PhysRevA.94.052103} {\bibfield  {journal} {\bibinfo  {journal} {Phys.
  Rev. A}\ }\textbf {\bibinfo {volume} {94}},\ \bibinfo {pages} {052103}
  (\bibinfo {year} {2016})}\BibitemShut {NoStop}%
\bibitem [{\citenamefont {Cerrillo}\ \emph {et~al.}(2016)\citenamefont
  {Cerrillo}, \citenamefont {Buser},\ and\ \citenamefont
  {Brandes}}]{CerrilloBuserBrandesPRB2016}%
  \BibitemOpen
  \bibfield  {author} {\bibinfo {author} {\bibfnamefont {J.}~\bibnamefont
  {Cerrillo}}, \bibinfo {author} {\bibfnamefont {M.}~\bibnamefont {Buser}}, \
  and\ \bibinfo {author} {\bibfnamefont {T.}~\bibnamefont {Brandes}},\
  }\bibfield  {title} {\enquote {\bibinfo {title} {Nonequilibrium quantum
  transport coefficients and transient dynamics of full counting statistics in
  the strong-coupling and non-{M}arkovian regimes},}\ }\href {\doibase
  10.1103/PhysRevB.94.214308} {\bibfield  {journal} {\bibinfo  {journal} {Phys.
  Rev. B}\ }\textbf {\bibinfo {volume} {94}},\ \bibinfo {pages} {214308}
  (\bibinfo {year} {2016})}\BibitemShut {NoStop}%
\bibitem [{\citenamefont {Cerisola}\ \emph {et~al.}(2017)\citenamefont
  {Cerisola}, \citenamefont {Margalit}, \citenamefont {Machluf}, \citenamefont
  {Roncaglia}, \citenamefont {Paz},\ and\ \citenamefont
  {Folman}}]{CerisolaEtAlNatComm2017}%
  \BibitemOpen
  \bibfield  {author} {\bibinfo {author} {\bibfnamefont {F.}~\bibnamefont
  {Cerisola}}, \bibinfo {author} {\bibfnamefont {Y.}~\bibnamefont {Margalit}},
  \bibinfo {author} {\bibfnamefont {S.}~\bibnamefont {Machluf}}, \bibinfo
  {author} {\bibfnamefont {A.~J.}\ \bibnamefont {Roncaglia}}, \bibinfo {author}
  {\bibfnamefont {J.~P.}\ \bibnamefont {Paz}}, \ and\ \bibinfo {author}
  {\bibfnamefont {R.}~\bibnamefont {Folman}},\ }\bibfield  {title} {\enquote
  {\bibinfo {title} {Using a quantum work meter to test non-equilibrium
  fluctuation theorems},}\ }\href
  {https://www.nature.com/articles/s41467-017-01308-7} {\bibfield  {journal}
  {\bibinfo  {journal} {Nat. Comm.}\ }\textbf {\bibinfo {volume} {8}},\
  \bibinfo {pages} {1241} (\bibinfo {year} {2017})}\BibitemShut {NoStop}%
\bibitem [{\citenamefont {\AA{}berg}(2018)}]{AbergPRX2018}%
  \BibitemOpen
  \bibfield  {author} {\bibinfo {author} {\bibfnamefont {Johan}\ \bibnamefont
  {\AA{}berg}},\ }\bibfield  {title} {\enquote {\bibinfo {title} {Fully quantum
  fluctuation theorems},}\ }\href {\doibase 10.1103/PhysRevX.8.011019}
  {\bibfield  {journal} {\bibinfo  {journal} {Phys. Rev. X}\ }\textbf {\bibinfo
  {volume} {8}},\ \bibinfo {pages} {011019} (\bibinfo {year}
  {2018})}\BibitemShut {NoStop}%
\bibitem [{\citenamefont {Whitney}(2018)}]{WhitneyPRB2018}%
  \BibitemOpen
  \bibfield  {author} {\bibinfo {author} {\bibfnamefont {R.~S.}\ \bibnamefont
  {Whitney}},\ }\bibfield  {title} {\enquote {\bibinfo {title} {Non-{M}arkovian
  quantum thermodynamics: {L}aws and fluctuation theorems},}\ }\href {\doibase
  10.1103/PhysRevB.98.085415} {\bibfield  {journal} {\bibinfo  {journal} {Phys.
  Rev. B}\ }\textbf {\bibinfo {volume} {98}},\ \bibinfo {pages} {085415}
  (\bibinfo {year} {2018})}\BibitemShut {NoStop}%
\bibitem [{\citenamefont {Jarzynski}\ \emph {et~al.}(2015)\citenamefont
  {Jarzynski}, \citenamefont {Quan},\ and\ \citenamefont
  {Rahav}}]{JarzynskiQuanRahavPRX2015}%
  \BibitemOpen
  \bibfield  {author} {\bibinfo {author} {\bibfnamefont {C.}~\bibnamefont
  {Jarzynski}}, \bibinfo {author} {\bibfnamefont {H.~T.}\ \bibnamefont {Quan}},
  \ and\ \bibinfo {author} {\bibfnamefont {S.}~\bibnamefont {Rahav}},\
  }\bibfield  {title} {\enquote {\bibinfo {title} {The quantum-classical
  correspondence principle for work distributions},}\ }\href
  {https://journals.aps.org/prx/abstract/10.1103/PhysRevX.5.031038} {\bibfield
  {journal} {\bibinfo  {journal} {Phys. Rev. X}\ }\textbf {\bibinfo {volume}
  {5}},\ \bibinfo {pages} {031038} (\bibinfo {year} {2015})}\BibitemShut
  {NoStop}%
\bibitem [{\citenamefont {Zhu}\ \emph {et~al.}(2016)\citenamefont {Zhu},
  \citenamefont {Gong}, \citenamefont {Wu},\ and\ \citenamefont
  {Quan}}]{ZhuEtAlPRE2016}%
  \BibitemOpen
  \bibfield  {author} {\bibinfo {author} {\bibfnamefont {L.}~\bibnamefont
  {Zhu}}, \bibinfo {author} {\bibfnamefont {Z.}~\bibnamefont {Gong}}, \bibinfo
  {author} {\bibfnamefont {B.}~\bibnamefont {Wu}}, \ and\ \bibinfo {author}
  {\bibfnamefont {H.~T.}\ \bibnamefont {Quan}},\ }\bibfield  {title} {\enquote
  {\bibinfo {title} {Quantum-classical correspondence principle for work
  distributions in a chaotic system},}\ }\href {\doibase
  10.1103/PhysRevE.93.062108} {\bibfield  {journal} {\bibinfo  {journal} {Phys.
  Rev. E}\ }\textbf {\bibinfo {volume} {93}},\ \bibinfo {pages} {062108}
  (\bibinfo {year} {2016})}\BibitemShut {NoStop}%
\bibitem [{\citenamefont {Garc\'{\i}a-Mata}\ \emph {et~al.}(2017)\citenamefont
  {Garc\'{\i}a-Mata}, \citenamefont {Roncaglia},\ and\ \citenamefont
  {Wisniacki}}]{GarciaMataEtAlPRE2017}%
  \BibitemOpen
  \bibfield  {author} {\bibinfo {author} {\bibfnamefont {I.}~\bibnamefont
  {Garc\'{\i}a-Mata}}, \bibinfo {author} {\bibfnamefont {A.~J.}\ \bibnamefont
  {Roncaglia}}, \ and\ \bibinfo {author} {\bibfnamefont {D.~A.}\ \bibnamefont
  {Wisniacki}},\ }\bibfield  {title} {\enquote {\bibinfo {title}
  {Quantum-to-classical transition in the work distribution for chaotic
  systems},}\ }\href {\doibase 10.1103/PhysRevE.95.050102} {\bibfield
  {journal} {\bibinfo  {journal} {Phys. Rev. E}\ }\textbf {\bibinfo {volume}
  {95}},\ \bibinfo {pages} {050102} (\bibinfo {year} {2017})}\BibitemShut
  {NoStop}%
\bibitem [{\citenamefont {Utsumi}\ \emph {et~al.}(2010)\citenamefont {Utsumi},
  \citenamefont {Golubev}, \citenamefont {Marthaler}, \citenamefont {Saito},
  \citenamefont {Fujisawa},\ and\ \citenamefont {Sch\"on}}]{UtsumiEtAlPRB2010}%
  \BibitemOpen
  \bibfield  {author} {\bibinfo {author} {\bibfnamefont {Y.}~\bibnamefont
  {Utsumi}}, \bibinfo {author} {\bibfnamefont {D.~S.}\ \bibnamefont {Golubev}},
  \bibinfo {author} {\bibfnamefont {M.}~\bibnamefont {Marthaler}}, \bibinfo
  {author} {\bibfnamefont {K.}~\bibnamefont {Saito}}, \bibinfo {author}
  {\bibfnamefont {T.}~\bibnamefont {Fujisawa}}, \ and\ \bibinfo {author}
  {\bibfnamefont {G.}~\bibnamefont {Sch\"on}},\ }\bibfield  {title} {\enquote
  {\bibinfo {title} {Bidirectional single-electron counting and the fluctuation
  theorem},}\ }\href
  {https://journals.aps.org/prb/abstract/10.1103/PhysRevB.81.125331} {\bibfield
   {journal} {\bibinfo  {journal} {Phys. Rev. B}\ }\textbf {\bibinfo {volume}
  {81}},\ \bibinfo {pages} {125331} (\bibinfo {year} {2010})}\BibitemShut
  {NoStop}%
\bibitem [{\citenamefont {K\"ung}\ \emph {et~al.}(2012)\citenamefont {K\"ung},
  \citenamefont {R\"ossler}, \citenamefont {Beck}, \citenamefont {Marthaler},
  \citenamefont {Golubev}, \citenamefont {Utsumi}, \citenamefont {Ihn},\ and\
  \citenamefont {Ensslin}}]{KungEtAlPRX2012}%
  \BibitemOpen
  \bibfield  {author} {\bibinfo {author} {\bibfnamefont {B.}~\bibnamefont
  {K\"ung}}, \bibinfo {author} {\bibfnamefont {C.}~\bibnamefont {R\"ossler}},
  \bibinfo {author} {\bibfnamefont {M.}~\bibnamefont {Beck}}, \bibinfo {author}
  {\bibfnamefont {M.}~\bibnamefont {Marthaler}}, \bibinfo {author}
  {\bibfnamefont {D.~S.}\ \bibnamefont {Golubev}}, \bibinfo {author}
  {\bibfnamefont {Y.}~\bibnamefont {Utsumi}}, \bibinfo {author} {\bibfnamefont
  {T.}~\bibnamefont {Ihn}}, \ and\ \bibinfo {author} {\bibfnamefont
  {K.}~\bibnamefont {Ensslin}},\ }\bibfield  {title} {\enquote {\bibinfo
  {title} {Irreversibility on the level of single-electron tunneling},}\ }\href
  {https://journals.aps.org/prx/abstract/10.1103/PhysRevX.2.011001} {\bibfield
  {journal} {\bibinfo  {journal} {Phys. Rev. X}\ }\textbf {\bibinfo {volume}
  {2}},\ \bibinfo {pages} {011001} (\bibinfo {year} {2012})}\BibitemShut
  {NoStop}%
\bibitem [{\citenamefont {Saira}\ \emph {et~al.}(2012)\citenamefont {Saira},
  \citenamefont {Yoon}, \citenamefont {Tanttu}, \citenamefont {M\"ott\"onen},
  \citenamefont {Averin},\ and\ \citenamefont {Pekola}}]{SairaEtAlPRL2012}%
  \BibitemOpen
  \bibfield  {author} {\bibinfo {author} {\bibfnamefont {O.-P.}\ \bibnamefont
  {Saira}}, \bibinfo {author} {\bibfnamefont {Y.}~\bibnamefont {Yoon}},
  \bibinfo {author} {\bibfnamefont {T.}~\bibnamefont {Tanttu}}, \bibinfo
  {author} {\bibfnamefont {M.}~\bibnamefont {M\"ott\"onen}}, \bibinfo {author}
  {\bibfnamefont {D.~V.}\ \bibnamefont {Averin}}, \ and\ \bibinfo {author}
  {\bibfnamefont {J.~P.}\ \bibnamefont {Pekola}},\ }\bibfield  {title}
  {\enquote {\bibinfo {title} {Test of the {J}arzynski and {C}rooks fluctuation
  relations in an electronic system},}\ }\href {\doibase
  10.1103/PhysRevLett.109.180601} {\bibfield  {journal} {\bibinfo  {journal}
  {Phys. Rev. Lett.}\ }\textbf {\bibinfo {volume} {109}},\ \bibinfo {pages}
  {180601} (\bibinfo {year} {2012})}\BibitemShut {NoStop}%
\bibitem [{\citenamefont {Schaller}(2014)}]{SchallerBook2014}%
  \BibitemOpen
  \bibfield  {author} {\bibinfo {author} {\bibfnamefont {G.}~\bibnamefont
  {Schaller}},\ }\href@noop {} {\emph {\bibinfo {title} {Open Quantum Systems
  Far from Equilibrium}}}\ (\bibinfo  {publisher} {Lect. Notes Phys.,
  Springer},\ \bibinfo {address} {Cham},\ \bibinfo {year} {2014})\BibitemShut
  {NoStop}%
\bibitem [{\citenamefont {Esposito}\ and\ \citenamefont
  {Mukamel}(2006)}]{EspositoMukamelPRE2006}%
  \BibitemOpen
  \bibfield  {author} {\bibinfo {author} {\bibfnamefont {M.}~\bibnamefont
  {Esposito}}\ and\ \bibinfo {author} {\bibfnamefont {S.}~\bibnamefont
  {Mukamel}},\ }\bibfield  {title} {\enquote {\bibinfo {title} {Fluctuation
  theorems for quantum master equations},}\ }\href {\doibase
  10.1103/PhysRevE.73.046129} {\bibfield  {journal} {\bibinfo  {journal} {Phys.
  Rev. E}\ }\textbf {\bibinfo {volume} {73}},\ \bibinfo {pages} {046129}
  (\bibinfo {year} {2006})}\BibitemShut {NoStop}%
\bibitem [{\citenamefont {Horowitz}(2012)}]{HorowitzPRE2012}%
  \BibitemOpen
  \bibfield  {author} {\bibinfo {author} {\bibfnamefont {J.~M.}\ \bibnamefont
  {Horowitz}},\ }\bibfield  {title} {\enquote {\bibinfo {title}
  {Quantum-trajectory approach to the stochastic thermodynamics of a forced
  harmonic oscillator},}\ }\href
  {https://journals.aps.org/pre/abstract/10.1103/PhysRevE.85.031110} {\bibfield
   {journal} {\bibinfo  {journal} {Phys. Rev. E}\ }\textbf {\bibinfo {volume}
  {85}},\ \bibinfo {pages} {031110} (\bibinfo {year} {2012})}\BibitemShut
  {NoStop}%
\bibitem [{\citenamefont {Horowitz}\ and\ \citenamefont
  {Parrondo}(2013)}]{HorowitzParrondoNJP2013}%
  \BibitemOpen
  \bibfield  {author} {\bibinfo {author} {\bibfnamefont {J.~M.}\ \bibnamefont
  {Horowitz}}\ and\ \bibinfo {author} {\bibfnamefont {J.~M.~R.}\ \bibnamefont
  {Parrondo}},\ }\bibfield  {title} {\enquote {\bibinfo {title} {Entropy
  production along nonequilibrium quantum jump trajectories},}\ }\href
  {http://iopscience.iop.org/article/10.1088/1367-2630/15/8/085028} {\bibfield
  {journal} {\bibinfo  {journal} {New J. Phys.}\ }\textbf {\bibinfo {volume}
  {15}},\ \bibinfo {pages} {085028} (\bibinfo {year} {2013})}\BibitemShut
  {NoStop}%
\bibitem [{\citenamefont {Benoist}\ \emph {et~al.}(2018)\citenamefont
  {Benoist}, \citenamefont {Jak\u{s}i\'c}, \citenamefont {Pautrat},\ and\
  \citenamefont {Pillet}}]{BenoistEtAlCM2018}%
  \BibitemOpen
  \bibfield  {author} {\bibinfo {author} {\bibfnamefont {T.}~\bibnamefont
  {Benoist}}, \bibinfo {author} {\bibfnamefont {V.}~\bibnamefont
  {Jak\u{s}i\'c}}, \bibinfo {author} {\bibfnamefont {Y.}~\bibnamefont
  {Pautrat}}, \ and\ \bibinfo {author} {\bibfnamefont {C.-A.}\ \bibnamefont
  {Pillet}},\ }\bibfield  {title} {\enquote {\bibinfo {title} {On entropy
  production of repeated quantum measurements. {I}. {G}eneral theory},}\ }\href
  {https://link.springer.com/article/10.1007/s00220-017-2947-1} {\bibfield
  {journal} {\bibinfo  {journal} {Comm. Math. Phys.}\ }\textbf {\bibinfo
  {volume} {357}},\ \bibinfo {pages} {77--123} (\bibinfo {year}
  {2018})}\BibitemShut {NoStop}%
\bibitem [{\citenamefont {Manzano}\ \emph
  {et~al.}(2018{\natexlab{a}})\citenamefont {Manzano}, \citenamefont
  {Horowitz},\ and\ \citenamefont {Parrondo}}]{ManzanoHorowitzParrondoPRX2018}%
  \BibitemOpen
  \bibfield  {author} {\bibinfo {author} {\bibfnamefont {G.}~\bibnamefont
  {Manzano}}, \bibinfo {author} {\bibfnamefont {J.~M.}\ \bibnamefont
  {Horowitz}}, \ and\ \bibinfo {author} {\bibfnamefont {J.~M.~R.}\ \bibnamefont
  {Parrondo}},\ }\bibfield  {title} {\enquote {\bibinfo {title} {Quantum
  fluctuation theorems for arbitrary environments: {A}diabatic and nonadiabatic
  entropy production},}\ }\href
  {https://journals.aps.org/prx/abstract/10.1103/PhysRevX.8.031037} {\bibfield
  {journal} {\bibinfo  {journal} {Phys. Rev. X}\ }\textbf {\bibinfo {volume}
  {8}},\ \bibinfo {pages} {031037} (\bibinfo {year}
  {2018}{\natexlab{a}})}\BibitemShut {NoStop}%
\bibitem [{\citenamefont {Dalibard}\ \emph {et~al.}(1992)\citenamefont
  {Dalibard}, \citenamefont {Castin},\ and\ \citenamefont
  {M\o{}lmer}}]{DalibardCastinMolmerPRL1992}%
  \BibitemOpen
  \bibfield  {author} {\bibinfo {author} {\bibfnamefont {J.}~\bibnamefont
  {Dalibard}}, \bibinfo {author} {\bibfnamefont {Y.}~\bibnamefont {Castin}}, \
  and\ \bibinfo {author} {\bibfnamefont {K.}~\bibnamefont {M\o{}lmer}},\
  }\bibfield  {title} {\enquote {\bibinfo {title} {Wave-function approach to
  dissipative processes in quantum optics},}\ }\href {\doibase
  10.1103/PhysRevLett.68.580} {\bibfield  {journal} {\bibinfo  {journal} {Phys.
  Rev. Lett.}\ }\textbf {\bibinfo {volume} {68}},\ \bibinfo {pages} {580--583}
  (\bibinfo {year} {1992})}\BibitemShut {NoStop}%
\bibitem [{\citenamefont {Gardiner}\ \emph {et~al.}(1992)\citenamefont
  {Gardiner}, \citenamefont {Parkins},\ and\ \citenamefont
  {Zoller}}]{GardinerParkinsZollerPRA1992}%
  \BibitemOpen
  \bibfield  {author} {\bibinfo {author} {\bibfnamefont {C.~W.}\ \bibnamefont
  {Gardiner}}, \bibinfo {author} {\bibfnamefont {A.~S.}\ \bibnamefont
  {Parkins}}, \ and\ \bibinfo {author} {\bibfnamefont {P.}~\bibnamefont
  {Zoller}},\ }\bibfield  {title} {\enquote {\bibinfo {title} {Wave-function
  quantum stochastic differential equations and quantum-jump simulation
  methods},}\ }\href {\doibase 10.1103/PhysRevA.46.4363} {\bibfield  {journal}
  {\bibinfo  {journal} {Phys. Rev. A}\ }\textbf {\bibinfo {volume} {46}},\
  \bibinfo {pages} {4363--4381} (\bibinfo {year} {1992})}\BibitemShut {NoStop}%
\bibitem [{\citenamefont {Carmichael}(1993)}]{CarmichaelBook1993}%
  \BibitemOpen
  \bibfield  {author} {\bibinfo {author} {\bibfnamefont {H.~J.}\ \bibnamefont
  {Carmichael}},\ }\href@noop {} {\emph {\bibinfo {title} {An Open Systems
  Approach to Quantum Optics}}}\ (\bibinfo  {publisher} {Lecture Notes,
  Springer},\ \bibinfo {address} {Berlin},\ \bibinfo {year} {1993})\BibitemShut
  {NoStop}%
\bibitem [{\citenamefont {Hekking}\ and\ \citenamefont
  {Pekola}(2013)}]{HekkingPekolaPRL2013}%
  \BibitemOpen
  \bibfield  {author} {\bibinfo {author} {\bibfnamefont {F.~W.~J.}\
  \bibnamefont {Hekking}}\ and\ \bibinfo {author} {\bibfnamefont {J.~P.}\
  \bibnamefont {Pekola}},\ }\bibfield  {title} {\enquote {\bibinfo {title}
  {Quantum jump approach for work and dissipation in a two-level system},}\
  }\href {\doibase 10.1103/PhysRevLett.111.093602} {\bibfield  {journal}
  {\bibinfo  {journal} {Phys. Rev. Lett.}\ }\textbf {\bibinfo {volume} {111}},\
  \bibinfo {pages} {093602} (\bibinfo {year} {2013})}\BibitemShut {NoStop}%
\bibitem [{\citenamefont {Alonso}\ \emph {et~al.}(2016)\citenamefont {Alonso},
  \citenamefont {Lutz},\ and\ \citenamefont
  {Alessandro}}]{AlonsoLutzRomitoPRL2016}%
  \BibitemOpen
  \bibfield  {author} {\bibinfo {author} {\bibfnamefont {J.~J.}\ \bibnamefont
  {Alonso}}, \bibinfo {author} {\bibfnamefont {E.}~\bibnamefont {Lutz}}, \ and\
  \bibinfo {author} {\bibfnamefont {R.}~\bibnamefont {Alessandro}},\ }\bibfield
   {title} {\enquote {\bibinfo {title} {Thermodynamics of weakly measured
  quantum systems},}\ }\href
  {https://journals.aps.org/prl/abstract/10.1103/PhysRevLett.116.080403}
  {\bibfield  {journal} {\bibinfo  {journal} {Phys. Rev. Lett.}\ }\textbf
  {\bibinfo {volume} {116}},\ \bibinfo {pages} {080403} (\bibinfo {year}
  {2016})}\BibitemShut {NoStop}%
\bibitem [{\citenamefont {Elouard}\ \emph
  {et~al.}(2017{\natexlab{a}})\citenamefont {Elouard}, \citenamefont
  {Herrera-Marti\'i}, \citenamefont {Clusel},\ and\ \citenamefont
  {Auff\`eves}}]{ElouardEtAlQInf2017}%
  \BibitemOpen
  \bibfield  {author} {\bibinfo {author} {\bibfnamefont {C.}~\bibnamefont
  {Elouard}}, \bibinfo {author} {\bibfnamefont {D.~A.}\ \bibnamefont
  {Herrera-Marti\'i}}, \bibinfo {author} {\bibfnamefont {M.}~\bibnamefont
  {Clusel}}, \ and\ \bibinfo {author} {\bibfnamefont {A.}~\bibnamefont
  {Auff\`eves}},\ }\bibfield  {title} {\enquote {\bibinfo {title} {The role of
  quantum measurement in stochastic thermodynamics},}\ }\href
  {https://www.nature.com/articles/s41534-017-0008-4} {\bibfield  {journal}
  {\bibinfo  {journal} {npj Quantum Inf.}\ }\textbf {\bibinfo {volume} {3}},\
  \bibinfo {pages} {9} (\bibinfo {year} {2017}{\natexlab{a}})}\BibitemShut
  {NoStop}%
\bibitem [{\citenamefont {Dressel}\ \emph {et~al.}(2017)\citenamefont
  {Dressel}, \citenamefont {Chantasri}, \citenamefont {Jordan},\ and\
  \citenamefont {Korotkov}}]{DresselEtAlPRL2017}%
  \BibitemOpen
  \bibfield  {author} {\bibinfo {author} {\bibfnamefont {J.}~\bibnamefont
  {Dressel}}, \bibinfo {author} {\bibfnamefont {A.}~\bibnamefont {Chantasri}},
  \bibinfo {author} {\bibfnamefont {A.~N.}\ \bibnamefont {Jordan}}, \ and\
  \bibinfo {author} {\bibfnamefont {A.~N.}\ \bibnamefont {Korotkov}},\
  }\bibfield  {title} {\enquote {\bibinfo {title} {Arrow of time for continuous
  quantum measurement},}\ }\href {\doibase 10.1103/PhysRevLett.119.220507}
  {\bibfield  {journal} {\bibinfo  {journal} {Phys. Rev. Lett.}\ }\textbf
  {\bibinfo {volume} {119}},\ \bibinfo {pages} {220507} (\bibinfo {year}
  {2017})}\BibitemShut {NoStop}%
\bibitem [{\citenamefont {Elouard}\ \emph
  {et~al.}(2017{\natexlab{b}})\citenamefont {Elouard}, \citenamefont
  {Bernardes}, \citenamefont {Carvalho}, \citenamefont {Santos},\ and\
  \citenamefont {Auffèves}}]{ElouardEtAlNJP2017}%
  \BibitemOpen
  \bibfield  {author} {\bibinfo {author} {\bibfnamefont {C.}~\bibnamefont
  {Elouard}}, \bibinfo {author} {\bibfnamefont {N.~K.}\ \bibnamefont
  {Bernardes}}, \bibinfo {author} {\bibfnamefont {A.~R.~R.}\ \bibnamefont
  {Carvalho}}, \bibinfo {author} {\bibfnamefont {M.~F.}\ \bibnamefont
  {Santos}}, \ and\ \bibinfo {author} {\bibfnamefont {A.}~\bibnamefont
  {Auffèves}},\ }\bibfield  {title} {\enquote {\bibinfo {title} {Probing
  quantum fluctuation theorems in engineered reservoirs},}\ }\href
  {http://iopscience.iop.org/article/10.1088/1367-2630/aa7fa2/meta} {\bibfield
  {journal} {\bibinfo  {journal} {New J. Phys.}\ }\textbf {\bibinfo {volume}
  {19}},\ \bibinfo {pages} {103011} (\bibinfo {year}
  {2017}{\natexlab{b}})}\BibitemShut {NoStop}%
\bibitem [{\citenamefont {Manikandan}\ \emph {et~al.}(2019)\citenamefont
  {Manikandan}, \citenamefont {Elouard},\ and\ \citenamefont
  {Jordan}}]{ManikandanElouardJordanPRA2019}%
  \BibitemOpen
  \bibfield  {author} {\bibinfo {author} {\bibfnamefont {S.~K.}\ \bibnamefont
  {Manikandan}}, \bibinfo {author} {\bibfnamefont {C.}~\bibnamefont {Elouard}},
  \ and\ \bibinfo {author} {\bibfnamefont {A.~N.}\ \bibnamefont {Jordan}},\
  }\bibfield  {title} {\enquote {\bibinfo {title} {Fluctuation theorems for
  continuous quantum measurements and absolute irreversibility},}\ }\href
  {\doibase 10.1103/PhysRevA.99.022117} {\bibfield  {journal} {\bibinfo
  {journal} {Phys. Rev. A}\ }\textbf {\bibinfo {volume} {99}},\ \bibinfo
  {pages} {022117} (\bibinfo {year} {2019})}\BibitemShut {NoStop}%
\bibitem [{\citenamefont {Elouard}\ and\ \citenamefont
  {Mohammady}(2018)}]{ElouardMohammadyBook2018}%
  \BibitemOpen
  \bibfield  {author} {\bibinfo {author} {\bibfnamefont {C.}~\bibnamefont
  {Elouard}}\ and\ \bibinfo {author} {\bibfnamefont {H.}~\bibnamefont
  {Mohammady}},\ }\href@noop {} {\emph {\bibinfo {title} {Thermodynamics in the
  Quantum Regime}}},\ edited by\ \bibinfo {editor} {\bibfnamefont
  {F.}~\bibnamefont {Binder}}, \bibinfo {editor} {\bibfnamefont {L.~A.}\
  \bibnamefont {Correa}}, \bibinfo {editor} {\bibfnamefont {C.}~\bibnamefont
  {Gogolin}}, \bibinfo {editor} {\bibfnamefont {J.}~\bibnamefont {Anders}}, \
  and\ \bibinfo {editor} {\bibfnamefont {G.}~\bibnamefont {Adesso}}\ (\bibinfo
  {publisher} {Springer},\ \bibinfo {address} {Switzerland},\ \bibinfo {year}
  {2018})\ \bibinfo {note} {{T}itle: {W}ork, heat and entropy production along
  quantum trajectories}\BibitemShut {NoStop}%
\bibitem [{\citenamefont {Chiribella}\ \emph
  {et~al.}(2008{\natexlab{a}})\citenamefont {Chiribella}, \citenamefont
  {D'Ariano},\ and\ \citenamefont
  {Perinotti}}]{ChiribellaDArianoPerinottiPRL2008}%
  \BibitemOpen
  \bibfield  {author} {\bibinfo {author} {\bibfnamefont {G.}~\bibnamefont
  {Chiribella}}, \bibinfo {author} {\bibfnamefont {G.~M.}\ \bibnamefont
  {D'Ariano}}, \ and\ \bibinfo {author} {\bibfnamefont {P.}~\bibnamefont
  {Perinotti}},\ }\bibfield  {title} {\enquote {\bibinfo {title} {Quantum
  circuit architecture},}\ }\href {\doibase 10.1103/PhysRevLett.101.060401}
  {\bibfield  {journal} {\bibinfo  {journal} {Phys. Rev. Lett.}\ }\textbf
  {\bibinfo {volume} {101}},\ \bibinfo {pages} {060401} (\bibinfo {year}
  {2008}{\natexlab{a}})}\BibitemShut {NoStop}%
\bibitem [{\citenamefont {Chiribella}\ \emph {et~al.}(2009)\citenamefont
  {Chiribella}, \citenamefont {D'Ariano},\ and\ \citenamefont
  {Perinotti}}]{ChiribellaDArianoPerinottiPRA2009}%
  \BibitemOpen
  \bibfield  {author} {\bibinfo {author} {\bibfnamefont {G.}~\bibnamefont
  {Chiribella}}, \bibinfo {author} {\bibfnamefont {G.~M.}\ \bibnamefont
  {D'Ariano}}, \ and\ \bibinfo {author} {\bibfnamefont {P.}~\bibnamefont
  {Perinotti}},\ }\bibfield  {title} {\enquote {\bibinfo {title} {Theoretical
  framework for quantum networks},}\ }\href {\doibase
  10.1103/PhysRevA.80.022339} {\bibfield  {journal} {\bibinfo  {journal} {Phys.
  Rev. A}\ }\textbf {\bibinfo {volume} {80}},\ \bibinfo {pages} {022339}
  (\bibinfo {year} {2009})}\BibitemShut {NoStop}%
\bibitem [{\citenamefont {Costa}\ and\ \citenamefont
  {Shrapnel}(2016)}]{CostaShrapnelNJP2016}%
  \BibitemOpen
  \bibfield  {author} {\bibinfo {author} {\bibfnamefont {F.}~\bibnamefont
  {Costa}}\ and\ \bibinfo {author} {\bibfnamefont {S.}~\bibnamefont
  {Shrapnel}},\ }\bibfield  {title} {\enquote {\bibinfo {title} {Quantum causal
  modelling},}\ }\href
  {http://iopscience.iop.org/article/10.1088/1367-2630/18/6/063032/meta}
  {\bibfield  {journal} {\bibinfo  {journal} {New J. Phys.}\ }\textbf {\bibinfo
  {volume} {18}},\ \bibinfo {pages} {063032} (\bibinfo {year}
  {2016})}\BibitemShut {NoStop}%
\bibitem [{\citenamefont {Oreshkov}\ and\ \citenamefont
  {Giarmatzi}(2016)}]{OreshkovGiarmatziNJP2016}%
  \BibitemOpen
  \bibfield  {author} {\bibinfo {author} {\bibfnamefont {O.}~\bibnamefont
  {Oreshkov}}\ and\ \bibinfo {author} {\bibfnamefont {C.}~\bibnamefont
  {Giarmatzi}},\ }\bibfield  {title} {\enquote {\bibinfo {title} {Causal and
  causally separable processes},}\ }\href
  {http://iopscience.iop.org/article/10.1088/1367-2630/18/9/093020/meta}
  {\bibfield  {journal} {\bibinfo  {journal} {New J. Phys.}\ }\textbf {\bibinfo
  {volume} {18}},\ \bibinfo {pages} {093020} (\bibinfo {year}
  {2016})}\BibitemShut {NoStop}%
\bibitem [{\citenamefont {Allen}\ \emph {et~al.}(2017)\citenamefont {Allen},
  \citenamefont {Barrett}, \citenamefont {Horsman}, \citenamefont {Lee},\ and\
  \citenamefont {Spekkens}}]{AllenEtAlPRX2017}%
  \BibitemOpen
  \bibfield  {author} {\bibinfo {author} {\bibfnamefont {J.-M.~A.}\
  \bibnamefont {Allen}}, \bibinfo {author} {\bibfnamefont {J.}~\bibnamefont
  {Barrett}}, \bibinfo {author} {\bibfnamefont {D.~C.}\ \bibnamefont
  {Horsman}}, \bibinfo {author} {\bibfnamefont {C.~M.}\ \bibnamefont {Lee}}, \
  and\ \bibinfo {author} {\bibfnamefont {R.~W.}\ \bibnamefont {Spekkens}},\
  }\bibfield  {title} {\enquote {\bibinfo {title} {Quantum common causes and
  quantum causal models},}\ }\href {\doibase 10.1103/PhysRevX.7.031021}
  {\bibfield  {journal} {\bibinfo  {journal} {Phys. Rev. X}\ }\textbf {\bibinfo
  {volume} {7}},\ \bibinfo {pages} {031021} (\bibinfo {year}
  {2017})}\BibitemShut {NoStop}%
\bibitem [{\citenamefont {Pollock}\ \emph
  {et~al.}(2018{\natexlab{a}})\citenamefont {Pollock}, \citenamefont
  {Rodr\'{\i}guez-Rosario}, \citenamefont {Frauenheim}, \citenamefont
  {Paternostro},\ and\ \citenamefont {Modi}}]{PollockEtAlPRL2018}%
  \BibitemOpen
  \bibfield  {author} {\bibinfo {author} {\bibfnamefont {F.~A.}\ \bibnamefont
  {Pollock}}, \bibinfo {author} {\bibfnamefont {C.}~\bibnamefont
  {Rodr\'{\i}guez-Rosario}}, \bibinfo {author} {\bibfnamefont {T.}~\bibnamefont
  {Frauenheim}}, \bibinfo {author} {\bibfnamefont {M.}~\bibnamefont
  {Paternostro}}, \ and\ \bibinfo {author} {\bibfnamefont {K.}~\bibnamefont
  {Modi}},\ }\bibfield  {title} {\enquote {\bibinfo {title} {Operational
  {M}arkov condition for quantum processes},}\ }\href {\doibase
  10.1103/PhysRevLett.120.040405} {\bibfield  {journal} {\bibinfo  {journal}
  {Phys. Rev. Lett.}\ }\textbf {\bibinfo {volume} {120}},\ \bibinfo {pages}
  {040405} (\bibinfo {year} {2018}{\natexlab{a}})}\BibitemShut {NoStop}%
\bibitem [{\citenamefont {Pollock}\ \emph
  {et~al.}(2018{\natexlab{b}})\citenamefont {Pollock}, \citenamefont
  {Rodr\'{\i}guez-Rosario}, \citenamefont {Frauenheim}, \citenamefont
  {Paternostro},\ and\ \citenamefont {Modi}}]{PollockEtAlPRA2018}%
  \BibitemOpen
  \bibfield  {author} {\bibinfo {author} {\bibfnamefont {F.~A.}\ \bibnamefont
  {Pollock}}, \bibinfo {author} {\bibfnamefont {C.}~\bibnamefont
  {Rodr\'{\i}guez-Rosario}}, \bibinfo {author} {\bibfnamefont {T.}~\bibnamefont
  {Frauenheim}}, \bibinfo {author} {\bibfnamefont {M.}~\bibnamefont
  {Paternostro}}, \ and\ \bibinfo {author} {\bibfnamefont {K.}~\bibnamefont
  {Modi}},\ }\bibfield  {title} {\enquote {\bibinfo {title} {Non-{M}arkovian
  quantum processes: {C}omplete framework and efficient characterization},}\
  }\href {\doibase 10.1103/PhysRevA.97.012127} {\bibfield  {journal} {\bibinfo
  {journal} {Phys. Rev. A}\ }\textbf {\bibinfo {volume} {97}},\ \bibinfo
  {pages} {012127} (\bibinfo {year} {2018}{\natexlab{b}})}\BibitemShut
  {NoStop}%
\bibitem [{\citenamefont {Milz}\ \emph {et~al.}(2018)\citenamefont {Milz},
  \citenamefont {Pollock},\ and\ \citenamefont
  {Modi}}]{MilzPollockModiPRA2018}%
  \BibitemOpen
  \bibfield  {author} {\bibinfo {author} {\bibfnamefont {S.}~\bibnamefont
  {Milz}}, \bibinfo {author} {\bibfnamefont {F.~A.}\ \bibnamefont {Pollock}}, \
  and\ \bibinfo {author} {\bibfnamefont {K.}~\bibnamefont {Modi}},\ }\bibfield
  {title} {\enquote {\bibinfo {title} {Reconstructing non-{M}arkovian quantum
  dynamics with limited control},}\ }\href {\doibase
  10.1103/PhysRevA.98.012108} {\bibfield  {journal} {\bibinfo  {journal} {Phys.
  Rev. A}\ }\textbf {\bibinfo {volume} {98}},\ \bibinfo {pages} {012108}
  (\bibinfo {year} {2018})}\BibitemShut {NoStop}%
\bibitem [{\citenamefont {Milz}\ \emph {et~al.}(2017)\citenamefont {Milz},
  \citenamefont {Sakuldee}, \citenamefont {Pollock},\ and\ \citenamefont
  {Modi}}]{MilzEtAlArXiv2017}%
  \BibitemOpen
  \bibfield  {author} {\bibinfo {author} {\bibfnamefont {S.}~\bibnamefont
  {Milz}}, \bibinfo {author} {\bibfnamefont {F.}~\bibnamefont {Sakuldee}},
  \bibinfo {author} {\bibfnamefont {F.~A.}\ \bibnamefont {Pollock}}, \ and\
  \bibinfo {author} {\bibfnamefont {K.}~\bibnamefont {Modi}},\ }\bibfield
  {title} {\enquote {\bibinfo {title} {Kolmogorov extension theorem for
  (quantum) causal modelling and general probabilistic theories},}\ }\href
  {https://arxiv.org/abs/1712.02589} {\bibfield  {journal} {\bibinfo  {journal}
  {arXiv: 1712.02589}\ } (\bibinfo {year} {2017})}\BibitemShut {NoStop}%
\bibitem [{\citenamefont {Sakuldee}\ \emph {et~al.}(2018)\citenamefont
  {Sakuldee}, \citenamefont {Milz}, \citenamefont {Pollock},\ and\
  \citenamefont {Modi}}]{SakuldeeEtAlJPA2018}%
  \BibitemOpen
  \bibfield  {author} {\bibinfo {author} {\bibfnamefont {F.}~\bibnamefont
  {Sakuldee}}, \bibinfo {author} {\bibfnamefont {S.}~\bibnamefont {Milz}},
  \bibinfo {author} {\bibfnamefont {F.~A.}\ \bibnamefont {Pollock}}, \ and\
  \bibinfo {author} {\bibfnamefont {K.}~\bibnamefont {Modi}},\ }\bibfield
  {title} {\enquote {\bibinfo {title} {Non-{M}arkovian quantum control as
  coherent stochastic trajectories},}\ }\href
  {http://iopscience.iop.org/article/10.1088/1751-8121/aabb1e/meta} {\bibfield
  {journal} {\bibinfo  {journal} {J. Phys. A: Math. Theor.}\ }\textbf {\bibinfo
  {volume} {51}},\ \bibinfo {pages} {414014} (\bibinfo {year}
  {2018})}\BibitemShut {NoStop}%
\bibitem [{\citenamefont {Strasberg}\ \emph {et~al.}(2017)\citenamefont
  {Strasberg}, \citenamefont {Schaller}, \citenamefont {Brandes},\ and\
  \citenamefont {Esposito}}]{StrasbergEtAlPRX2017}%
  \BibitemOpen
  \bibfield  {author} {\bibinfo {author} {\bibfnamefont {P.}~\bibnamefont
  {Strasberg}}, \bibinfo {author} {\bibfnamefont {G.}~\bibnamefont {Schaller}},
  \bibinfo {author} {\bibfnamefont {T.}~\bibnamefont {Brandes}}, \ and\
  \bibinfo {author} {\bibfnamefont {M.}~\bibnamefont {Esposito}},\ }\bibfield
  {title} {\enquote {\bibinfo {title} {Quantum and information thermodynamics:
  A unifying framework based on repeated interactions},}\ }\href {\doibase
  10.1103/PhysRevX.7.021003} {\bibfield  {journal} {\bibinfo  {journal} {Phys.
  Rev. X}\ }\textbf {\bibinfo {volume} {7}},\ \bibinfo {pages} {021003}
  (\bibinfo {year} {2017})}\BibitemShut {NoStop}%
\bibitem [{\citenamefont {Landauer}(1991)}]{LandauerPhysTod1991}%
  \BibitemOpen
  \bibfield  {author} {\bibinfo {author} {\bibfnamefont {R.}~\bibnamefont
  {Landauer}},\ }\bibfield  {title} {\enquote {\bibinfo {title} {Information is
  physical},}\ }\href {https://physicstoday.scitation.org/doi/10.1063/1.881299}
  {\bibfield  {journal} {\bibinfo  {journal} {Phys. Today}\ }\textbf {\bibinfo
  {volume} {44}},\ \bibinfo {pages} {23} (\bibinfo {year} {1991})}\BibitemShut
  {NoStop}%
\bibitem [{\citenamefont {Parrondo}\ \emph {et~al.}(2015)\citenamefont
  {Parrondo}, \citenamefont {Horowitz},\ and\ \citenamefont
  {Sagawa}}]{ParrondoHorowitzSagawaNatPhys2015}%
  \BibitemOpen
  \bibfield  {author} {\bibinfo {author} {\bibfnamefont {J.~M.~R.}\
  \bibnamefont {Parrondo}}, \bibinfo {author} {\bibfnamefont {J.~M.}\
  \bibnamefont {Horowitz}}, \ and\ \bibinfo {author} {\bibfnamefont
  {T.}~\bibnamefont {Sagawa}},\ }\bibfield  {title} {\enquote {\bibinfo {title}
  {Thermodynamics of information},}\ }\href
  {https://www.nature.com/articles/nphys3230} {\bibfield  {journal} {\bibinfo
  {journal} {Nat. Phys.}\ }\textbf {\bibinfo {volume} {11}},\ \bibinfo {pages}
  {131--139} (\bibinfo {year} {2015})}\BibitemShut {NoStop}%
\bibitem [{\citenamefont {Brand{\~a}o}\ \emph {et~al.}(2015)\citenamefont
  {Brand{\~a}o}, \citenamefont {Horodecki}, \citenamefont {Ng}, \citenamefont
  {Oppenheim},\ and\ \citenamefont {Wehner}}]{BrandaoEtAlPNAS2014}%
  \BibitemOpen
  \bibfield  {author} {\bibinfo {author} {\bibfnamefont {F.}~\bibnamefont
  {Brand{\~a}o}}, \bibinfo {author} {\bibfnamefont {M.}~\bibnamefont
  {Horodecki}}, \bibinfo {author} {\bibfnamefont {N.}~\bibnamefont {Ng}},
  \bibinfo {author} {\bibfnamefont {J.}~\bibnamefont {Oppenheim}}, \ and\
  \bibinfo {author} {\bibfnamefont {S.}~\bibnamefont {Wehner}},\ }\bibfield
  {title} {\enquote {\bibinfo {title} {The second laws of quantum
  thermodynamics},}\ }\href {https://www.pnas.org/content/112/11/3275.short}
  {\bibfield  {journal} {\bibinfo  {journal} {Proc. Natl. Acad. Sci.}\ }\textbf
  {\bibinfo {volume} {112}},\ \bibinfo {pages} {3275--3279} (\bibinfo {year}
  {2015})}\BibitemShut {NoStop}%
\bibitem [{\citenamefont {Sayrin}\ \emph {et~al.}(2011)\citenamefont {Sayrin},
  \citenamefont {Dotsenko}, \citenamefont {Zhou}, \citenamefont {Peaudecerf},
  \citenamefont {Rybarczyk}, \citenamefont {Gleyzes}, \citenamefont {Rouchon},
  \citenamefont {Mirrahimi}, \citenamefont {Amini}, \citenamefont {Brune},
  \citenamefont {Raimond},\ and\ \citenamefont
  {Haroche}}]{SayrinEtAlNature2011}%
  \BibitemOpen
  \bibfield  {author} {\bibinfo {author} {\bibfnamefont {C.}~\bibnamefont
  {Sayrin}}, \bibinfo {author} {\bibfnamefont {I.}~\bibnamefont {Dotsenko}},
  \bibinfo {author} {\bibfnamefont {X.}~\bibnamefont {Zhou}}, \bibinfo {author}
  {\bibfnamefont {B.}~\bibnamefont {Peaudecerf}}, \bibinfo {author}
  {\bibfnamefont {T.}~\bibnamefont {Rybarczyk}}, \bibinfo {author}
  {\bibfnamefont {S.}~\bibnamefont {Gleyzes}}, \bibinfo {author} {\bibfnamefont
  {P.}~\bibnamefont {Rouchon}}, \bibinfo {author} {\bibfnamefont
  {M.}~\bibnamefont {Mirrahimi}}, \bibinfo {author} {\bibfnamefont
  {H.}~\bibnamefont {Amini}}, \bibinfo {author} {\bibfnamefont
  {M.}~\bibnamefont {Brune}}, \bibinfo {author} {\bibfnamefont {J.-M.}\
  \bibnamefont {Raimond}}, \ and\ \bibinfo {author} {\bibfnamefont
  {S.}~\bibnamefont {Haroche}},\ }\bibfield  {title} {\enquote {\bibinfo
  {title} {Real-time quantum feedback prepares and stabilizes photon number
  states},}\ }\href {https://www.nature.com/articles/nature10376} {\bibfield
  {journal} {\bibinfo  {journal} {Nature}\ }\textbf {\bibinfo {volume} {477}},\
  \bibinfo {pages} {73--77} (\bibinfo {year} {2011})}\BibitemShut {NoStop}%
\bibitem [{\citenamefont {Zhou}\ \emph {et~al.}(2012)\citenamefont {Zhou},
  \citenamefont {Dotsenko}, \citenamefont {Peaudecerf}, \citenamefont
  {Rybarczyk}, \citenamefont {Sayrin}, \citenamefont {Gleyzes}, \citenamefont
  {Raimond}, \citenamefont {Brune},\ and\ \citenamefont
  {Haroche}}]{ZhouEtAlPRL2012}%
  \BibitemOpen
  \bibfield  {author} {\bibinfo {author} {\bibfnamefont {X.}~\bibnamefont
  {Zhou}}, \bibinfo {author} {\bibfnamefont {I.}~\bibnamefont {Dotsenko}},
  \bibinfo {author} {\bibfnamefont {B.}~\bibnamefont {Peaudecerf}}, \bibinfo
  {author} {\bibfnamefont {T.}~\bibnamefont {Rybarczyk}}, \bibinfo {author}
  {\bibfnamefont {C.}~\bibnamefont {Sayrin}}, \bibinfo {author} {\bibfnamefont
  {S.}~\bibnamefont {Gleyzes}}, \bibinfo {author} {\bibfnamefont {J.~M.}\
  \bibnamefont {Raimond}}, \bibinfo {author} {\bibfnamefont {M.}~\bibnamefont
  {Brune}}, \ and\ \bibinfo {author} {\bibfnamefont {S.}~\bibnamefont
  {Haroche}},\ }\bibfield  {title} {\enquote {\bibinfo {title} {Field locked to
  a fock state by quantum feedback with single photon corrections},}\ }\href
  {\doibase 10.1103/PhysRevLett.108.243602} {\bibfield  {journal} {\bibinfo
  {journal} {Phys. Rev. Lett.}\ }\textbf {\bibinfo {volume} {108}},\ \bibinfo
  {pages} {243602} (\bibinfo {year} {2012})}\BibitemShut {NoStop}%
\bibitem [{\citenamefont {Chiribella}\ \emph
  {et~al.}(2008{\natexlab{b}})\citenamefont {Chiribella}, \citenamefont
  {D'Ariano},\ and\ \citenamefont
  {Perinotti}}]{ChiribellaDArianoPerinottiEPL2008}%
  \BibitemOpen
  \bibfield  {author} {\bibinfo {author} {\bibfnamefont {G.}~\bibnamefont
  {Chiribella}}, \bibinfo {author} {\bibfnamefont {G.~M.}\ \bibnamefont
  {D'Ariano}}, \ and\ \bibinfo {author} {\bibfnamefont {P.}~\bibnamefont
  {Perinotti}},\ }\bibfield  {title} {\enquote {\bibinfo {title} {Transforming
  quantum operations: {Q}uantum supermaps},}\ }\href
  {http://iopscience.iop.org/article/10.1209/0295-5075/83/30004/meta}
  {\bibfield  {journal} {\bibinfo  {journal} {Europhys. Lett.}\ }\textbf
  {\bibinfo {volume} {83}},\ \bibinfo {pages} {30004} (\bibinfo {year}
  {2008}{\natexlab{b}})}\BibitemShut {NoStop}%
\bibitem [{\citenamefont {Modi}(2012)}]{ModiSR2012}%
  \BibitemOpen
  \bibfield  {author} {\bibinfo {author} {\bibfnamefont {K.}~\bibnamefont
  {Modi}},\ }\bibfield  {title} {\enquote {\bibinfo {title} {Operational
  approach to open dynamics and quantifying initial correlations},}\ }\href
  {https://www.nature.com/articles/srep00581} {\bibfield  {journal} {\bibinfo
  {journal} {Sci. Rep.}\ }\textbf {\bibinfo {volume} {2}},\ \bibinfo {pages}
  {581} (\bibinfo {year} {2012})}\BibitemShut {NoStop}%
\bibitem [{\citenamefont {Lindblad}(1979)}]{LindbladCMP1979}%
  \BibitemOpen
  \bibfield  {author} {\bibinfo {author} {\bibfnamefont {G.}~\bibnamefont
  {Lindblad}},\ }\bibfield  {title} {\enquote {\bibinfo {title}
  {Non-{M}arkovian quantum stochastic processes and their entropy},}\ }\href
  {https://link.springer.com/article/10.1007/BF01197883} {\bibfield  {journal}
  {\bibinfo  {journal} {Commun. Math. Phys.}\ }\textbf {\bibinfo {volume}
  {65}},\ \bibinfo {pages} {281--294} (\bibinfo {year} {1979})}\BibitemShut
  {NoStop}%
\bibitem [{\citenamefont {Accardi}\ \emph {et~al.}(1982)\citenamefont
  {Accardi}, \citenamefont {Frigerio},\ and\ \citenamefont
  {Lewis}}]{AccardiFrigerioLewis1982}%
  \BibitemOpen
  \bibfield  {author} {\bibinfo {author} {\bibfnamefont {L.}~\bibnamefont
  {Accardi}}, \bibinfo {author} {\bibfnamefont {A.}~\bibnamefont {Frigerio}}, \
  and\ \bibinfo {author} {\bibfnamefont {J.~T.}\ \bibnamefont {Lewis}},\
  }\bibfield  {title} {\enquote {\bibinfo {title} {Quantum stochastic
  processes},}\ }\href
  {https://www.jstage.jst.go.jp/article/kyotoms1969/18/1/18_1_97/_article/-char/ja/}
  {\bibfield  {journal} {\bibinfo  {journal} {Publ. RIMS Kyoto Univ.}\ }\textbf
  {\bibinfo {volume} {18}},\ \bibinfo {pages} {97} (\bibinfo {year}
  {1982})}\BibitemShut {NoStop}%
\bibitem [{\citenamefont {Kraus}(1983)}]{KrausBook1983}%
  \BibitemOpen
  \bibfield  {author} {\bibinfo {author} {\bibfnamefont {K.}~\bibnamefont
  {Kraus}},\ }\href@noop {} {\emph {\bibinfo {title} {States, Effects and
  Operations: Fundamental Notions of Quantum Theory}}}\ (\bibinfo  {publisher}
  {Springer-Verlag},\ \bibinfo {address} {Berlin Heidelberg},\ \bibinfo {year}
  {1983})\BibitemShut {NoStop}%
\bibitem [{\citenamefont {Nielsen}\ and\ \citenamefont
  {Chuang}(2000)}]{NielsenChuangBook2000}%
  \BibitemOpen
  \bibfield  {author} {\bibinfo {author} {\bibfnamefont {M.~A.}\ \bibnamefont
  {Nielsen}}\ and\ \bibinfo {author} {\bibfnamefont {I.~L.}\ \bibnamefont
  {Chuang}},\ }\href@noop {} {\emph {\bibinfo {title} {Quantum Computation and
  Quantum Information}}}\ (\bibinfo  {publisher} {Cambridge University Press},\
  \bibinfo {address} {Cambridge},\ \bibinfo {year} {2000})\BibitemShut
  {NoStop}%
\bibitem [{\citenamefont {Holevo}(2001)}]{HolevoBook2001b}%
  \BibitemOpen
  \bibfield  {author} {\bibinfo {author} {\bibfnamefont {A.~S.}\ \bibnamefont
  {Holevo}},\ }\href@noop {} {\emph {\bibinfo {title} {Statistical Structure of
  Quantum Theory}}}\ (\bibinfo  {publisher} {Springer-Verlag},\ \bibinfo
  {address} {Berlin Heidelberg},\ \bibinfo {year} {2001})\BibitemShut {NoStop}%
\bibitem [{\citenamefont {Wiseman}\ and\ \citenamefont
  {Milburn}(2010)}]{WisemanMilburnBook2010}%
  \BibitemOpen
  \bibfield  {author} {\bibinfo {author} {\bibfnamefont {H.~M.}\ \bibnamefont
  {Wiseman}}\ and\ \bibinfo {author} {\bibfnamefont {G.~J.}\ \bibnamefont
  {Milburn}},\ }\href@noop {} {\emph {\bibinfo {title} {Quantum Measurement and
  Control}}}\ (\bibinfo  {publisher} {Cambridge University Press},\ \bibinfo
  {address} {Cambridge},\ \bibinfo {year} {2010})\BibitemShut {NoStop}%
\bibitem [{\citenamefont {Jacobs}(2014)}]{JacobsBook2014}%
  \BibitemOpen
  \bibfield  {author} {\bibinfo {author} {\bibfnamefont {K.}~\bibnamefont
  {Jacobs}},\ }\href@noop {} {\emph {\bibinfo {title} {Quantum Measurement
  Theory and its Applications}}}\ (\bibinfo  {publisher} {Cambridge University
  Press},\ \bibinfo {address} {Cambridge},\ \bibinfo {year} {2014})\BibitemShut
  {NoStop}%
\bibitem [{\citenamefont {Stinespring}(1955)}]{StinespringPAMS1955}%
  \BibitemOpen
  \bibfield  {author} {\bibinfo {author} {\bibfnamefont {W.~F.}\ \bibnamefont
  {Stinespring}},\ }\bibfield  {title} {\enquote {\bibinfo {title} {Positive
  functions on ${C}^*$-algebras},}\ }\href
  {http://www.jstor.org/stable/2032342} {\bibfield  {journal} {\bibinfo
  {journal} {Proc. Am. Math. Soc.}\ }\textbf {\bibinfo {volume} {6}},\ \bibinfo
  {pages} {211--216} (\bibinfo {year} {1955})}\BibitemShut {NoStop}%
\bibitem [{\citenamefont {Strasberg}\ and\ \citenamefont
  {Winter}(2019)}]{StrasbergWinterArXiv2019}%
  \BibitemOpen
  \bibfield  {author} {\bibinfo {author} {\bibfnamefont {P.}~\bibnamefont
  {Strasberg}}\ and\ \bibinfo {author} {\bibfnamefont {A.}~\bibnamefont
  {Winter}},\ }\bibfield  {title} {\enquote {\bibinfo {title} {Stochastic
  thermodynamics with arbitrary interventions},}\ }\href
  {https://arxiv.org/abs/1905.07990} {\bibfield  {journal} {\bibinfo  {journal}
  {arXiv: 1905.07990}\ } (\bibinfo {year} {2019})}\BibitemShut {NoStop}%
\bibitem [{\citenamefont {Zurek}(2003{\natexlab{a}})}]{ZurekRMP2003}%
  \BibitemOpen
  \bibfield  {author} {\bibinfo {author} {\bibfnamefont {W.~H.}\ \bibnamefont
  {Zurek}},\ }\bibfield  {title} {\enquote {\bibinfo {title} {Decoherence,
  einselection, and the quantum origins of the classical},}\ }\href {\doibase
  10.1103/RevModPhys.75.715} {\bibfield  {journal} {\bibinfo  {journal} {Rev.
  Mod. Phys.}\ }\textbf {\bibinfo {volume} {75}},\ \bibinfo {pages} {715--775}
  (\bibinfo {year} {2003}{\natexlab{a}})}\BibitemShut {NoStop}%
\bibitem [{\citenamefont {Uhlmann}(1977)}]{UhlmannCMP1977}%
  \BibitemOpen
  \bibfield  {author} {\bibinfo {author} {\bibfnamefont {A.}~\bibnamefont
  {Uhlmann}},\ }\bibfield  {title} {\enquote {\bibinfo {title} {Relative
  entropy and the {W}igner-{Y}anase-{D}yson-{L}ieb concavity in an
  interpolation theory},}\ }\href
  {https://link.springer.com/article/10.1007/BF01609834} {\bibfield  {journal}
  {\bibinfo  {journal} {Commun. Math. Phys.}\ }\textbf {\bibinfo {volume}
  {54}},\ \bibinfo {pages} {21--32} (\bibinfo {year} {1977})}\BibitemShut
  {NoStop}%
\bibitem [{\citenamefont {Ohya}\ and\ \citenamefont
  {Petz}(1993)}]{OhyaPetzBook1993}%
  \BibitemOpen
  \bibfield  {author} {\bibinfo {author} {\bibfnamefont {M.}~\bibnamefont
  {Ohya}}\ and\ \bibinfo {author} {\bibfnamefont {D.}~\bibnamefont {Petz}},\
  }\href@noop {} {\emph {\bibinfo {title} {Quantum {E}ntropy and {I}ts
  {U}se}}}\ (\bibinfo  {publisher} {Springer-Verlag},\ \bibinfo {address}
  {Heidelberg},\ \bibinfo {year} {1993})\BibitemShut {NoStop}%
\bibitem [{\citenamefont {Ando}(1989)}]{Ando1989}%
  \BibitemOpen
  \bibfield  {author} {\bibinfo {author} {\bibfnamefont {T.}~\bibnamefont
  {Ando}},\ }\bibfield  {title} {\enquote {\bibinfo {title} {Majorization,
  doubly stochastic matrices, and comparison of eigenvalues},}\ }\href
  {https://www.sciencedirect.com/science/article/pii/0024379589905806}
  {\bibfield  {journal} {\bibinfo  {journal} {Linear Algebr. Appl.}\ }\textbf
  {\bibinfo {volume} {118}},\ \bibinfo {pages} {163--248} (\bibinfo {year}
  {1989})}\BibitemShut {NoStop}%
\bibitem [{\citenamefont {Lanford}\ and\ \citenamefont
  {Robinson}(1968)}]{LanfordRobinson1968}%
  \BibitemOpen
  \bibfield  {author} {\bibinfo {author} {\bibfnamefont {O.~E.}\ \bibnamefont
  {Lanford}}\ and\ \bibinfo {author} {\bibfnamefont {D.}~\bibnamefont
  {Robinson}},\ }\bibfield  {title} {\enquote {\bibinfo {title} {Mean entropy
  of states in quantum statistical mechanics},}\ }\href
  {https://aip.scitation.org/doi/abs/10.1063/1.1664685} {\bibfield  {journal}
  {\bibinfo  {journal} {J. Math. Phys.}\ }\textbf {\bibinfo {volume} {9}},\
  \bibinfo {pages} {1120–1125} (\bibinfo {year} {1968})}\BibitemShut
  {NoStop}%
\bibitem [{\citenamefont {Nielsen}(2001)}]{NielsenPRA2001}%
  \BibitemOpen
  \bibfield  {author} {\bibinfo {author} {\bibfnamefont {M.~A.}\ \bibnamefont
  {Nielsen}},\ }\bibfield  {title} {\enquote {\bibinfo {title} {Characterizing
  mixing and measurement in quantum mechanics},}\ }\href {\doibase
  10.1103/PhysRevA.63.022114} {\bibfield  {journal} {\bibinfo  {journal} {Phys.
  Rev. A}\ }\textbf {\bibinfo {volume} {63}},\ \bibinfo {pages} {022114}
  (\bibinfo {year} {2001})}\BibitemShut {NoStop}%
\bibitem [{\citenamefont {Jacobs}(2009)}]{JacobsPRA2009}%
  \BibitemOpen
  \bibfield  {author} {\bibinfo {author} {\bibfnamefont {K.}~\bibnamefont
  {Jacobs}},\ }\bibfield  {title} {\enquote {\bibinfo {title} {Second law of
  thermodynamics and quantum feedback control: {M}axwell's demon with weak
  measurements},}\ }\href
  {https://journals.aps.org/pra/abstract/10.1103/PhysRevA.80.012322} {\bibfield
   {journal} {\bibinfo  {journal} {Phys. Rev. A}\ }\textbf {\bibinfo {volume}
  {80}},\ \bibinfo {pages} {012322} (\bibinfo {year} {2009})}\BibitemShut
  {NoStop}%
\bibitem [{\citenamefont {Dotsenko}\ \emph {et~al.}(2009)\citenamefont
  {Dotsenko}, \citenamefont {Mirrahimi}, \citenamefont {Brune}, \citenamefont
  {Haroche}, \citenamefont {Raimond},\ and\ \citenamefont
  {Rouchon}}]{DotsenkoEtAlPRA2009}%
  \BibitemOpen
  \bibfield  {author} {\bibinfo {author} {\bibfnamefont {I.}~\bibnamefont
  {Dotsenko}}, \bibinfo {author} {\bibfnamefont {M.}~\bibnamefont {Mirrahimi}},
  \bibinfo {author} {\bibfnamefont {M.}~\bibnamefont {Brune}}, \bibinfo
  {author} {\bibfnamefont {S.}~\bibnamefont {Haroche}}, \bibinfo {author}
  {\bibfnamefont {J.-M.}\ \bibnamefont {Raimond}}, \ and\ \bibinfo {author}
  {\bibfnamefont {P.}~\bibnamefont {Rouchon}},\ }\bibfield  {title} {\enquote
  {\bibinfo {title} {Quantum feedback by discrete quantum nondemolition
  measurements: {T}owards on-demand generation of photon-number states},}\
  }\href {\doibase 10.1103/PhysRevA.80.013805} {\bibfield  {journal} {\bibinfo
  {journal} {Phys. Rev. A}\ }\textbf {\bibinfo {volume} {80}},\ \bibinfo
  {pages} {013805} (\bibinfo {year} {2009})}\BibitemShut {NoStop}%
\bibitem [{\citenamefont {Haroche}(2013)}]{HarocheRMP2013}%
  \BibitemOpen
  \bibfield  {author} {\bibinfo {author} {\bibfnamefont {Serge}\ \bibnamefont
  {Haroche}},\ }\bibfield  {title} {\enquote {\bibinfo {title} {Nobel
  {L}ecture: {C}ontrolling photons in a box and exploring the quantum to
  classical boundary},}\ }\href {\doibase 10.1103/RevModPhys.85.1083}
  {\bibfield  {journal} {\bibinfo  {journal} {Rev. Mod. Phys.}\ }\textbf
  {\bibinfo {volume} {85}},\ \bibinfo {pages} {1083--1102} (\bibinfo {year}
  {2013})}\BibitemShut {NoStop}%
\bibitem [{\citenamefont {Scully}\ and\ \citenamefont
  {Zubairy}(1997)}]{ScullyZubairyBook1997}%
  \BibitemOpen
  \bibfield  {author} {\bibinfo {author} {\bibfnamefont {M.~O.}\ \bibnamefont
  {Scully}}\ and\ \bibinfo {author} {\bibfnamefont {M.~Suhail}\ \bibnamefont
  {Zubairy}},\ }\href@noop {} {\emph {\bibinfo {title} {Quantum Optics}}}\
  (\bibinfo  {publisher} {Cambridge University Press},\ \bibinfo {address}
  {Cambridge},\ \bibinfo {year} {1997})\BibitemShut {NoStop}%
\bibitem [{\citenamefont {Verley}\ \emph {et~al.}(2014)\citenamefont {Verley},
  \citenamefont {Esposito}, \citenamefont {Willaert},\ and\ \citenamefont {den
  Broeck}}]{VerleyEtAlNC2014}%
  \BibitemOpen
  \bibfield  {author} {\bibinfo {author} {\bibfnamefont {G.}~\bibnamefont
  {Verley}}, \bibinfo {author} {\bibfnamefont {M.}~\bibnamefont {Esposito}},
  \bibinfo {author} {\bibfnamefont {T.}~\bibnamefont {Willaert}}, \ and\
  \bibinfo {author} {\bibfnamefont {C.~Van}\ \bibnamefont {den Broeck}},\
  }\bibfield  {title} {\enquote {\bibinfo {title} {The unlikely {C}arnot
  efficiency},}\ }\href {https://www.nature.com/articles/ncomms5721} {\bibfield
   {journal} {\bibinfo  {journal} {Nat. Comm.}\ }\textbf {\bibinfo {volume}
  {5}},\ \bibinfo {pages} {4721} (\bibinfo {year} {2014})}\BibitemShut
  {NoStop}%
\bibitem [{\citenamefont {Sagawa}\ and\ \citenamefont
  {Ueda}(2009)}]{SagawaUedaPRL2009}%
  \BibitemOpen
  \bibfield  {author} {\bibinfo {author} {\bibfnamefont {T.}~\bibnamefont
  {Sagawa}}\ and\ \bibinfo {author} {\bibfnamefont {M.}~\bibnamefont {Ueda}},\
  }\bibfield  {title} {\enquote {\bibinfo {title} {Minimal energy cost for
  thermodynamic information processing: measurement and information erasure},}\
  }\href {https://journals.aps.org/prl/abstract/10.1103/PhysRevLett.102.250602}
  {\bibfield  {journal} {\bibinfo  {journal} {Phys. Rev. Lett.}\ }\textbf
  {\bibinfo {volume} {102}},\ \bibinfo {pages} {250602} (\bibinfo {year}
  {2009})}\BibitemShut {NoStop}%
\bibitem [{\citenamefont {Jacobs}(2012)}]{JacobsPRE2012}%
  \BibitemOpen
  \bibfield  {author} {\bibinfo {author} {\bibfnamefont {K.}~\bibnamefont
  {Jacobs}},\ }\bibfield  {title} {\enquote {\bibinfo {title} {Quantum
  measurement and the first law of thermodynamics: {T}he energy cost of
  measurement is the work value of the acquired information},}\ }\href
  {\doibase 10.1103/PhysRevE.86.040106} {\bibfield  {journal} {\bibinfo
  {journal} {Phys. Rev. E}\ }\textbf {\bibinfo {volume} {86}},\ \bibinfo
  {pages} {040106} (\bibinfo {year} {2012})}\BibitemShut {NoStop}%
\bibitem [{\citenamefont {Kammerlander}\ and\ \citenamefont
  {Anders}(2016)}]{KammerlanderAndersSciRep2016}%
  \BibitemOpen
  \bibfield  {author} {\bibinfo {author} {\bibfnamefont {P.}~\bibnamefont
  {Kammerlander}}\ and\ \bibinfo {author} {\bibfnamefont {J.}~\bibnamefont
  {Anders}},\ }\bibfield  {title} {\enquote {\bibinfo {title} {Coherence and
  measurement in quantum thermodynamics},}\ }\href
  {https://www.nature.com/articles/srep22174} {\bibfield  {journal} {\bibinfo
  {journal} {Sci. Rep.}\ }\textbf {\bibinfo {volume} {6}},\ \bibinfo {pages}
  {22174} (\bibinfo {year} {2016})}\BibitemShut {NoStop}%
\bibitem [{\citenamefont {Deffner}\ \emph {et~al.}(2016)\citenamefont
  {Deffner}, \citenamefont {Paz},\ and\ \citenamefont
  {Zurek}}]{DeffnerPazZurekPRE2016}%
  \BibitemOpen
  \bibfield  {author} {\bibinfo {author} {\bibfnamefont {S.}~\bibnamefont
  {Deffner}}, \bibinfo {author} {\bibfnamefont {J.~P.}\ \bibnamefont {Paz}}, \
  and\ \bibinfo {author} {\bibfnamefont {W.~H.}\ \bibnamefont {Zurek}},\
  }\bibfield  {title} {\enquote {\bibinfo {title} {Quantum work and the
  thermodynamic cost of quantum measurements},}\ }\href {\doibase
  10.1103/PhysRevE.94.010103} {\bibfield  {journal} {\bibinfo  {journal} {Phys.
  Rev. E}\ }\textbf {\bibinfo {volume} {94}},\ \bibinfo {pages} {010103}
  (\bibinfo {year} {2016})}\BibitemShut {NoStop}%
\bibitem [{\citenamefont {Abdelkhalek}\ \emph {et~al.}(2016)\citenamefont
  {Abdelkhalek}, \citenamefont {Nakata},\ and\ \citenamefont
  {Reeb}}]{AbdelkhalekNakataReebArXiv2016}%
  \BibitemOpen
  \bibfield  {author} {\bibinfo {author} {\bibfnamefont {K.}~\bibnamefont
  {Abdelkhalek}}, \bibinfo {author} {\bibfnamefont {Y.}~\bibnamefont {Nakata}},
  \ and\ \bibinfo {author} {\bibfnamefont {D.}~\bibnamefont {Reeb}},\
  }\bibfield  {title} {\enquote {\bibinfo {title} {Fundamental energy cost for
  quantum measurement},}\ }\href {https://arxiv.org/abs/1609.06981} {\bibfield
  {journal} {\bibinfo  {journal} {arXiv:1609.06981}\ } (\bibinfo {year}
  {2016})}\BibitemShut {NoStop}%
\bibitem [{\citenamefont {Elouard}\ \emph
  {et~al.}(2017{\natexlab{c}})\citenamefont {Elouard}, \citenamefont
  {Herrera-Mart\'i}, \citenamefont {Huard},\ and\ \citenamefont
  {Auff\`eves}}]{ElouardEtAlPRL2017}%
  \BibitemOpen
  \bibfield  {author} {\bibinfo {author} {\bibfnamefont {C.}~\bibnamefont
  {Elouard}}, \bibinfo {author} {\bibfnamefont {D.}~\bibnamefont
  {Herrera-Mart\'i}}, \bibinfo {author} {\bibfnamefont {B.}~\bibnamefont
  {Huard}}, \ and\ \bibinfo {author} {\bibfnamefont {A.}~\bibnamefont
  {Auff\`eves}},\ }\bibfield  {title} {\enquote {\bibinfo {title} {Extracting
  work from quantum measurement in {M}axwell’s demon engines},}\ }\href
  {https://journals.aps.org/prl/abstract/10.1103/PhysRevLett.118.260603}
  {\bibfield  {journal} {\bibinfo  {journal} {Phys. Rev. Lett.}\ }\textbf
  {\bibinfo {volume} {118}},\ \bibinfo {pages} {260603} (\bibinfo {year}
  {2017}{\natexlab{c}})}\BibitemShut {NoStop}%
\bibitem [{\citenamefont {An}\ \emph {et~al.}(2015)\citenamefont {An},
  \citenamefont {Zhang}, \citenamefont {Um}, \citenamefont {Lv}, \citenamefont
  {Lu}, \citenamefont {Zhang}, \citenamefont {Yin}, \citenamefont {Quan},\ and\
  \citenamefont {Kim}}]{AnEtALNatPhys2015}%
  \BibitemOpen
  \bibfield  {author} {\bibinfo {author} {\bibfnamefont {S.}~\bibnamefont
  {An}}, \bibinfo {author} {\bibfnamefont {J.~N.}\ \bibnamefont {Zhang}},
  \bibinfo {author} {\bibfnamefont {M.}~\bibnamefont {Um}}, \bibinfo {author}
  {\bibfnamefont {D.}~\bibnamefont {Lv}}, \bibinfo {author} {\bibfnamefont
  {Y.}~\bibnamefont {Lu}}, \bibinfo {author} {\bibfnamefont {J.}~\bibnamefont
  {Zhang}}, \bibinfo {author} {\bibfnamefont {Z.-Q.}\ \bibnamefont {Yin}},
  \bibinfo {author} {\bibfnamefont {H.~T.}\ \bibnamefont {Quan}}, \ and\
  \bibinfo {author} {\bibfnamefont {K.}~\bibnamefont {Kim}},\ }\bibfield
  {title} {\enquote {\bibinfo {title} {Experimental test of the quantum
  {J}arzynski equality with a trapped-ion system},}\ }\href
  {https://www.nature.com/articles/nphys3197} {\bibfield  {journal} {\bibinfo
  {journal} {Nat. Phys.}\ }\textbf {\bibinfo {volume} {11}},\ \bibinfo {pages}
  {193--199} (\bibinfo {year} {2015})}\BibitemShut {NoStop}%
\bibitem [{\citenamefont {Piechocinska}(2000)}]{PiechocinskaPRA2000}%
  \BibitemOpen
  \bibfield  {author} {\bibinfo {author} {\bibfnamefont {B.}~\bibnamefont
  {Piechocinska}},\ }\bibfield  {title} {\enquote {\bibinfo {title}
  {Information erasure},}\ }\href
  {https://journals.aps.org/pra/abstract/10.1103/PhysRevA.61.062314} {\bibfield
   {journal} {\bibinfo  {journal} {Phys. Rev. A}\ }\textbf {\bibinfo {volume}
  {61}},\ \bibinfo {pages} {062314} (\bibinfo {year} {2000})}\BibitemShut
  {NoStop}%
\bibitem [{\citenamefont {Kurchan}(2000)}]{KurchanArXiv2000}%
  \BibitemOpen
  \bibfield  {author} {\bibinfo {author} {\bibfnamefont {J.}~\bibnamefont
  {Kurchan}},\ }\bibfield  {title} {\enquote {\bibinfo {title} {A quantum
  fluctuation theorem},}\ }\href {https://arxiv.org/abs/cond-mat/0007360}
  {\bibfield  {journal} {\bibinfo  {journal} {arXiv: cond-mat/0007360}\ }
  (\bibinfo {year} {2000})}\BibitemShut {NoStop}%
\bibitem [{\citenamefont {Tasaki}(2000)}]{TasakiArXiv2000}%
  \BibitemOpen
  \bibfield  {author} {\bibinfo {author} {\bibfnamefont {H.}~\bibnamefont
  {Tasaki}},\ }\bibfield  {title} {\enquote {\bibinfo {title} {Jarzynski
  relations for quantum systems and some applications},}\ }\href
  {https://arxiv.org/abs/cond-mat/0009244} {\bibfield  {journal} {\bibinfo
  {journal} {arXiv: cond-mat/0009244}\ } (\bibinfo {year} {2000})}\BibitemShut
  {NoStop}%
\bibitem [{\citenamefont {Barra}(2015)}]{BarraSciRep2015}%
  \BibitemOpen
  \bibfield  {author} {\bibinfo {author} {\bibfnamefont {F.}~\bibnamefont
  {Barra}},\ }\bibfield  {title} {\enquote {\bibinfo {title} {The thermodynamic
  cost of driving quantum systems by their boundaries},}\ }\href
  {https://www.nature.com/articles/srep14873} {\bibfield  {journal} {\bibinfo
  {journal} {Sci. Rep.}\ }\textbf {\bibinfo {volume} {5}},\ \bibinfo {pages}
  {14873} (\bibinfo {year} {2015})}\BibitemShut {NoStop}%
\bibitem [{\citenamefont {Spohn}(1978)}]{SpohnJMP1978}%
  \BibitemOpen
  \bibfield  {author} {\bibinfo {author} {\bibfnamefont {H.}~\bibnamefont
  {Spohn}},\ }\bibfield  {title} {\enquote {\bibinfo {title} {Entropy
  production for quantum dynamical semigroups},}\ }\href
  {https://aip.scitation.org/doi/10.1063/1.523789} {\bibfield  {journal}
  {\bibinfo  {journal} {J. Math. Phys.}\ }\textbf {\bibinfo {volume} {19}},\
  \bibinfo {pages} {1227--1230} (\bibinfo {year} {1978})}\BibitemShut {NoStop}%
\bibitem [{\citenamefont {Cover}\ and\ \citenamefont
  {Thomas}(1991)}]{CoverThomasBook1991}%
  \BibitemOpen
  \bibfield  {author} {\bibinfo {author} {\bibfnamefont {T.~M.}\ \bibnamefont
  {Cover}}\ and\ \bibinfo {author} {\bibfnamefont {J.~A.}\ \bibnamefont
  {Thomas}},\ }\href@noop {} {\emph {\bibinfo {title} {Elements of Information
  Theory}}}\ (\bibinfo  {publisher} {John Wiley \& Sons},\ \bibinfo {address}
  {New York},\ \bibinfo {year} {1991})\BibitemShut {NoStop}%
\bibitem [{\citenamefont {Ribezzi-Crivellari}\ and\ \citenamefont
  {Ritort}(2019)}]{RibezziCrivellariRetortNP2019}%
  \BibitemOpen
  \bibfield  {author} {\bibinfo {author} {\bibfnamefont {M.}~\bibnamefont
  {Ribezzi-Crivellari}}\ and\ \bibinfo {author} {\bibfnamefont
  {F.}~\bibnamefont {Ritort}},\ }\bibfield  {title} {\enquote {\bibinfo {title}
  {Large work extraction and the {L}andauer limit in a continuous {M}axwell
  demon},}\ }\href {https://www.nature.com/articles/s41567-019-0481-0}
  {\bibfield  {journal} {\bibinfo  {journal} {Nat. Phys.}\ }\textbf {\bibinfo
  {volume} {15}},\ \bibinfo {pages} {660 -- 664} (\bibinfo {year}
  {2019})}\BibitemShut {NoStop}%
\bibitem [{\citenamefont {Rodrigues}\ \emph {et~al.}(2007)\citenamefont
  {Rodrigues}, \citenamefont {Imbers},\ and\ \citenamefont
  {Armour}}]{RodriguesImbersArmourPRL2007}%
  \BibitemOpen
  \bibfield  {author} {\bibinfo {author} {\bibfnamefont {D.~A.}\ \bibnamefont
  {Rodrigues}}, \bibinfo {author} {\bibfnamefont {J.}~\bibnamefont {Imbers}}, \
  and\ \bibinfo {author} {\bibfnamefont {A.~D.}\ \bibnamefont {Armour}},\
  }\bibfield  {title} {\enquote {\bibinfo {title} {Quantum dynamics of a
  resonator driven by a superconducting single-electron transistor: A
  solid-state analogue of the micromaser},}\ }\href
  {https://journals.aps.org/prl/abstract/10.1103/PhysRevLett.98.067204}
  {\bibfield  {journal} {\bibinfo  {journal} {Phys. Rev. Lett.}\ }\textbf
  {\bibinfo {volume} {98}},\ \bibinfo {pages} {067204} (\bibinfo {year}
  {2007})}\BibitemShut {NoStop}%
\bibitem [{\citenamefont {Westig}\ \emph {et~al.}(2017)\citenamefont {Westig},
  \citenamefont {Kubala}, \citenamefont {Parlavecchio}, \citenamefont
  {Mukharsky}, \citenamefont {Altimiras}, \citenamefont {Joyez}, \citenamefont
  {Vion}, \citenamefont {Roche}, \citenamefont {Esteve}, \citenamefont
  {Hofheinz}, \citenamefont {Trif}, \citenamefont {Simon}, \citenamefont
  {Ankerhold},\ and\ \citenamefont {Portier}}]{WestigEtAlPRL2017}%
  \BibitemOpen
  \bibfield  {author} {\bibinfo {author} {\bibfnamefont {M.}~\bibnamefont
  {Westig}}, \bibinfo {author} {\bibfnamefont {B.}~\bibnamefont {Kubala}},
  \bibinfo {author} {\bibfnamefont {O.}~\bibnamefont {Parlavecchio}}, \bibinfo
  {author} {\bibfnamefont {Y.}~\bibnamefont {Mukharsky}}, \bibinfo {author}
  {\bibfnamefont {C.}~\bibnamefont {Altimiras}}, \bibinfo {author}
  {\bibfnamefont {P.}~\bibnamefont {Joyez}}, \bibinfo {author} {\bibfnamefont
  {D.}~\bibnamefont {Vion}}, \bibinfo {author} {\bibfnamefont {P.}~\bibnamefont
  {Roche}}, \bibinfo {author} {\bibfnamefont {D.}~\bibnamefont {Esteve}},
  \bibinfo {author} {\bibfnamefont {M.}~\bibnamefont {Hofheinz}}, \bibinfo
  {author} {\bibfnamefont {M.}~\bibnamefont {Trif}}, \bibinfo {author}
  {\bibfnamefont {P.}~\bibnamefont {Simon}}, \bibinfo {author} {\bibfnamefont
  {J.}~\bibnamefont {Ankerhold}}, \ and\ \bibinfo {author} {\bibfnamefont
  {F.}~\bibnamefont {Portier}},\ }\bibfield  {title} {\enquote {\bibinfo
  {title} {Emission of nonclassical radiation by inelastic {C}ooper pair
  tunneling},}\ }\href {\doibase 10.1103/PhysRevLett.119.137001} {\bibfield
  {journal} {\bibinfo  {journal} {Phys. Rev. Lett.}\ }\textbf {\bibinfo
  {volume} {119}},\ \bibinfo {pages} {137001} (\bibinfo {year}
  {2017})}\BibitemShut {NoStop}%
\bibitem [{\citenamefont {Anders}\ and\ \citenamefont
  {Giovannetti}(2013)}]{AndersGiovannettiNJP2013}%
  \BibitemOpen
  \bibfield  {author} {\bibinfo {author} {\bibfnamefont {J.}~\bibnamefont
  {Anders}}\ and\ \bibinfo {author} {\bibfnamefont {V.}~\bibnamefont
  {Giovannetti}},\ }\bibfield  {title} {\enquote {\bibinfo {title}
  {Thermodynamics of discrete quantum processes},}\ }\href
  {http://iopscience.iop.org/article/10.1088/1367-2630/15/3/033022/meta}
  {\bibfield  {journal} {\bibinfo  {journal} {New J. Phys.}\ }\textbf {\bibinfo
  {volume} {15}},\ \bibinfo {pages} {033022} (\bibinfo {year}
  {2013})}\BibitemShut {NoStop}%
\bibitem [{\citenamefont {Binder}\ \emph {et~al.}(2015)\citenamefont {Binder},
  \citenamefont {Vinjanampathy}, \citenamefont {Modi},\ and\ \citenamefont
  {Goold}}]{BinderEtAlPRE2015}%
  \BibitemOpen
  \bibfield  {author} {\bibinfo {author} {\bibfnamefont {F.}~\bibnamefont
  {Binder}}, \bibinfo {author} {\bibfnamefont {S.}~\bibnamefont
  {Vinjanampathy}}, \bibinfo {author} {\bibfnamefont {K.}~\bibnamefont {Modi}},
  \ and\ \bibinfo {author} {\bibfnamefont {J.}~\bibnamefont {Goold}},\
  }\bibfield  {title} {\enquote {\bibinfo {title} {Quantum thermodynamics of
  general quantum processes},}\ }\href {\doibase 10.1103/PhysRevE.91.032119}
  {\bibfield  {journal} {\bibinfo  {journal} {Phys. Rev. E}\ }\textbf {\bibinfo
  {volume} {91}},\ \bibinfo {pages} {032119} (\bibinfo {year}
  {2015})}\BibitemShut {NoStop}%
\bibitem [{\citenamefont {Manzano}\ \emph {et~al.}(2015)\citenamefont
  {Manzano}, \citenamefont {Horowitz},\ and\ \citenamefont
  {Parrondo}}]{ManzanoHorowitzParrondoPRE2015}%
  \BibitemOpen
  \bibfield  {author} {\bibinfo {author} {\bibfnamefont {G.}~\bibnamefont
  {Manzano}}, \bibinfo {author} {\bibfnamefont {J.~M.}\ \bibnamefont
  {Horowitz}}, \ and\ \bibinfo {author} {\bibfnamefont {J.~M.~R.}\ \bibnamefont
  {Parrondo}},\ }\bibfield  {title} {\enquote {\bibinfo {title} {Nonequilibrium
  potential and fluctuation theorems for quantum maps},}\ }\href {\doibase
  10.1103/PhysRevE.92.032129} {\bibfield  {journal} {\bibinfo  {journal} {Phys.
  Rev. E}\ }\textbf {\bibinfo {volume} {92}},\ \bibinfo {pages} {032129}
  (\bibinfo {year} {2015})}\BibitemShut {NoStop}%
\bibitem [{\citenamefont {Barra}\ and\ \citenamefont
  {Lled\'o}(2017)}]{BarraLledoPRE2017}%
  \BibitemOpen
  \bibfield  {author} {\bibinfo {author} {\bibfnamefont {F.}~\bibnamefont
  {Barra}}\ and\ \bibinfo {author} {\bibfnamefont {C.}~\bibnamefont
  {Lled\'o}},\ }\bibfield  {title} {\enquote {\bibinfo {title} {Stochastic
  thermodynamics of quantum maps with and without equilibrium},}\ }\href
  {\doibase 10.1103/PhysRevE.96.052114} {\bibfield  {journal} {\bibinfo
  {journal} {Phys. Rev. E}\ }\textbf {\bibinfo {volume} {96}},\ \bibinfo
  {pages} {052114} (\bibinfo {year} {2017})}\BibitemShut {NoStop}%
\bibitem [{\citenamefont {Goold}\ \emph {et~al.}(2016)\citenamefont {Goold},
  \citenamefont {Huber}, \citenamefont {Riera}, \citenamefont {{del Rio}},\
  and\ \citenamefont {Skrzypzyk}}]{GooldEtAlJPA2016}%
  \BibitemOpen
  \bibfield  {author} {\bibinfo {author} {\bibfnamefont {J.}~\bibnamefont
  {Goold}}, \bibinfo {author} {\bibfnamefont {M.}~\bibnamefont {Huber}},
  \bibinfo {author} {\bibfnamefont {A.}~\bibnamefont {Riera}}, \bibinfo
  {author} {\bibfnamefont {L.}~\bibnamefont {{del Rio}}}, \ and\ \bibinfo
  {author} {\bibfnamefont {P.}~\bibnamefont {Skrzypzyk}},\ }\bibfield  {title}
  {\enquote {\bibinfo {title} {The role of quantum information in
  thermodynamics -- a topical review},}\ }\href
  {iopscience.iop.org/article/10.1088/1751-8113/49/14/143001/meta} {\bibfield
  {journal} {\bibinfo  {journal} {J. Phys. A}\ }\textbf {\bibinfo {volume}
  {49}},\ \bibinfo {pages} {143001} (\bibinfo {year} {2016})}\BibitemShut
  {NoStop}%
\bibitem [{\citenamefont {Lostaglio}(2018{\natexlab{a}})}]{LostaglioArXiv2018}%
  \BibitemOpen
  \bibfield  {author} {\bibinfo {author} {\bibfnamefont {M.}~\bibnamefont
  {Lostaglio}},\ }\bibfield  {title} {\enquote {\bibinfo {title} {Thermodynamic
  laws for populations and quantum coherence: {A} self-contained introduction
  to the resource theory approach to thermodynamics},}\ }\href
  {https://arxiv.org/abs/1807.11549} {\bibfield  {journal} {\bibinfo  {journal}
  {arXiv: 1807.11549}\ } (\bibinfo {year} {2018}{\natexlab{a}})}\BibitemShut
  {NoStop}%
\bibitem [{\citenamefont {Crooks}(2008)}]{CrooksPRA2008}%
  \BibitemOpen
  \bibfield  {author} {\bibinfo {author} {\bibfnamefont {G.~E.}\ \bibnamefont
  {Crooks}},\ }\bibfield  {title} {\enquote {\bibinfo {title} {Quantum
  operation time reversal},}\ }\href {\doibase 10.1103/PhysRevA.77.034101}
  {\bibfield  {journal} {\bibinfo  {journal} {Phys. Rev. A}\ }\textbf {\bibinfo
  {volume} {77}},\ \bibinfo {pages} {034101} (\bibinfo {year}
  {2008})}\BibitemShut {NoStop}%
\bibitem [{\citenamefont {Manikandan}\ and\ \citenamefont
  {Jordan}(2019)}]{ManikandanJordanQSMF2018}%
  \BibitemOpen
  \bibfield  {author} {\bibinfo {author} {\bibfnamefont {S.~K.}\ \bibnamefont
  {Manikandan}}\ and\ \bibinfo {author} {\bibfnamefont {A.~N.}\ \bibnamefont
  {Jordan}},\ }\bibfield  {title} {\enquote {\bibinfo {title} {Time reversal
  symmetry of generalized quantum measurements with past and future boundary
  conditions},}\ }\href
  {https://link.springer.com/article/10.1007%2Fs40509-019-00182-w#citeas}
  {\bibfield  {journal} {\bibinfo  {journal} {Quantum Stud.: Math. Found.}\ }
  (\bibinfo {year} {2019})}\BibitemShut {NoStop}%
\bibitem [{\citenamefont {Trushechkin}\ and\ \citenamefont
  {Volovich}(2016)}]{TrushechkinVolovichEPL2016}%
  \BibitemOpen
  \bibfield  {author} {\bibinfo {author} {\bibfnamefont {A.~S.}\ \bibnamefont
  {Trushechkin}}\ and\ \bibinfo {author} {\bibfnamefont {I.~V.}\ \bibnamefont
  {Volovich}},\ }\bibfield  {title} {\enquote {\bibinfo {title} {Perturbative
  treatment of inter-site couplings in the local description of open quantum
  networks},}\ }\href {http://stacks.iop.org/0295-5075/113/i=3/a=30005}
  {\bibfield  {journal} {\bibinfo  {journal} {Europhys. Lett.}\ }\textbf
  {\bibinfo {volume} {113}},\ \bibinfo {pages} {30005} (\bibinfo {year}
  {2016})}\BibitemShut {NoStop}%
\bibitem [{\citenamefont {Hofer}\ \emph {et~al.}(2017)\citenamefont {Hofer},
  \citenamefont {Perarnau-Llobet}, \citenamefont {Miranda}, \citenamefont
  {Haack}, \citenamefont {Silva}, \citenamefont {Brask},\ and\ \citenamefont
  {Brunner}}]{HoferEtAlNJP2017}%
  \BibitemOpen
  \bibfield  {author} {\bibinfo {author} {\bibfnamefont {P.~P.}\ \bibnamefont
  {Hofer}}, \bibinfo {author} {\bibfnamefont {M.}~\bibnamefont
  {Perarnau-Llobet}}, \bibinfo {author} {\bibfnamefont {L.~D.~M.}\ \bibnamefont
  {Miranda}}, \bibinfo {author} {\bibfnamefont {G.}~\bibnamefont {Haack}},
  \bibinfo {author} {\bibfnamefont {R.}~\bibnamefont {Silva}}, \bibinfo
  {author} {\bibfnamefont {J.~Bohr}\ \bibnamefont {Brask}}, \ and\ \bibinfo
  {author} {\bibfnamefont {N.}~\bibnamefont {Brunner}},\ }\bibfield  {title}
  {\enquote {\bibinfo {title} {Markovian master equations for quantum thermal
  machines: local versus global approach},}\ }\href
  {http://iopscience.iop.org/article/10.1088/1367-2630/aa964f/meta} {\bibfield
  {journal} {\bibinfo  {journal} {New J. Phys.}\ }\textbf {\bibinfo {volume}
  {19}},\ \bibinfo {pages} {123037} (\bibinfo {year} {2017})}\BibitemShut
  {NoStop}%
\bibitem [{\citenamefont {Mitchison}\ and\ \citenamefont
  {Plenio}(2018)}]{MitchisonPlenioNJP2018}%
  \BibitemOpen
  \bibfield  {author} {\bibinfo {author} {\bibfnamefont {M.~T.}\ \bibnamefont
  {Mitchison}}\ and\ \bibinfo {author} {\bibfnamefont {M.~B.}\ \bibnamefont
  {Plenio}},\ }\bibfield  {title} {\enquote {\bibinfo {title} {Non-additive
  dissipation in open quantum networks out of equilibrium},}\ }\href
  {http://stacks.iop.org/1367-2630/20/i=3/a=033005} {\bibfield  {journal}
  {\bibinfo  {journal} {New J. Phys.}\ }\textbf {\bibinfo {volume} {20}},\
  \bibinfo {pages} {033005} (\bibinfo {year} {2018})}\BibitemShut {NoStop}%
\bibitem [{\citenamefont {Levy}\ and\ \citenamefont
  {Gelbwaser-Klimovsky}(2018)}]{LevyGelbwaserKlimovskyBook2018}%
  \BibitemOpen
  \bibfield  {author} {\bibinfo {author} {\bibfnamefont {A.}~\bibnamefont
  {Levy}}\ and\ \bibinfo {author} {\bibfnamefont {D.}~\bibnamefont
  {Gelbwaser-Klimovsky}},\ }\href@noop {} {\emph {\bibinfo {title}
  {Thermodynamics in the Quantum Regime}}},\ edited by\ \bibinfo {editor}
  {\bibfnamefont {F.}~\bibnamefont {Binder}}, \bibinfo {editor} {\bibfnamefont
  {L.~A.}\ \bibnamefont {Correa}}, \bibinfo {editor} {\bibfnamefont
  {C.}~\bibnamefont {Gogolin}}, \bibinfo {editor} {\bibfnamefont
  {J.}~\bibnamefont {Anders}}, \ and\ \bibinfo {editor} {\bibfnamefont
  {G.}~\bibnamefont {Adesso}}\ (\bibinfo  {publisher} {Springer},\ \bibinfo
  {address} {Switzerland},\ \bibinfo {year} {2018})\ \bibinfo {note} {{T}itle:
  {Q}uantum features and signatures of quantum-thermal machines}\BibitemShut
  {NoStop}%
\bibitem [{\citenamefont {Gonz\'alez}\ \emph {et~al.}(2019)\citenamefont
  {Gonz\'alez}, \citenamefont {Palao}, \citenamefont {Alonso},\ and\
  \citenamefont {Correa}}]{OnamGonzalezEtAlPRE2019}%
  \BibitemOpen
  \bibfield  {author} {\bibinfo {author} {\bibfnamefont {J.~Onam}\ \bibnamefont
  {Gonz\'alez}}, \bibinfo {author} {\bibfnamefont {J.~P.}\ \bibnamefont
  {Palao}}, \bibinfo {author} {\bibfnamefont {D.}~\bibnamefont {Alonso}}, \
  and\ \bibinfo {author} {\bibfnamefont {L.~A.}\ \bibnamefont {Correa}},\
  }\bibfield  {title} {\enquote {\bibinfo {title} {Classical emulation of
  quantum-coherent thermal machines},}\ }\href {\doibase
  10.1103/PhysRevE.99.062102} {\bibfield  {journal} {\bibinfo  {journal} {Phys.
  Rev. E}\ }\textbf {\bibinfo {volume} {99}},\ \bibinfo {pages} {062102}
  (\bibinfo {year} {2019})}\BibitemShut {NoStop}%
\bibitem [{\citenamefont {Emary}\ \emph {et~al.}(2014)\citenamefont {Emary},
  \citenamefont {Lambert},\ and\ \citenamefont
  {Nori}}]{EmaryLambertNoriRPP2014}%
  \BibitemOpen
  \bibfield  {author} {\bibinfo {author} {\bibfnamefont {C.}~\bibnamefont
  {Emary}}, \bibinfo {author} {\bibfnamefont {N.}~\bibnamefont {Lambert}}, \
  and\ \bibinfo {author} {\bibfnamefont {F.}~\bibnamefont {Nori}},\ }\bibfield
  {title} {\enquote {\bibinfo {title} {Leggett-{G}arg inequalities},}\ }\href
  {http://iopscience.iop.org/article/10.1088/0034-4885/77/1/016001/meta}
  {\bibfield  {journal} {\bibinfo  {journal} {Rep. Prog. Phys.}\ }\textbf
  {\bibinfo {volume} {77}},\ \bibinfo {pages} {039501} (\bibinfo {year}
  {2014})}\BibitemShut {NoStop}%
\bibitem [{\citenamefont {Smirne}\ \emph {et~al.}(2018)\citenamefont {Smirne},
  \citenamefont {Egloff}, \citenamefont {D\'iaz}, \citenamefont {Plenio},\ and\
  \citenamefont {Hulega}}]{SmirneEtAlQST2018}%
  \BibitemOpen
  \bibfield  {author} {\bibinfo {author} {\bibfnamefont {A.}~\bibnamefont
  {Smirne}}, \bibinfo {author} {\bibfnamefont {D.}~\bibnamefont {Egloff}},
  \bibinfo {author} {\bibfnamefont {M.~G.}\ \bibnamefont {D\'iaz}}, \bibinfo
  {author} {\bibfnamefont {M.~B.}\ \bibnamefont {Plenio}}, \ and\ \bibinfo
  {author} {\bibfnamefont {S.~F.}\ \bibnamefont {Hulega}},\ }\bibfield  {title}
  {\enquote {\bibinfo {title} {Coherence and non-classicality of quantum
  {M}arkov processes},}\ }\href
  {http://iopscience.iop.org/article/10.1088/2058-9565/aaebd5/meta} {\bibfield
  {journal} {\bibinfo  {journal} {Quantum Sci. Technol.}\ }\textbf {\bibinfo
  {volume} {4}},\ \bibinfo {pages} {01LT01} (\bibinfo {year}
  {2018})}\BibitemShut {NoStop}%
\bibitem [{\citenamefont {Strasberg}\ and\ \citenamefont
  {D\'{\i}az}(2019)}]{StrasbergDiazPRA2019}%
  \BibitemOpen
  \bibfield  {author} {\bibinfo {author} {\bibfnamefont {P.}~\bibnamefont
  {Strasberg}}\ and\ \bibinfo {author} {\bibfnamefont {M.~G.}\ \bibnamefont
  {D\'{\i}az}},\ }\bibfield  {title} {\enquote {\bibinfo {title} {Classical
  quantum stochastic processes},}\ }\href {\doibase
  10.1103/PhysRevA.100.022120} {\bibfield  {journal} {\bibinfo  {journal}
  {Phys. Rev. A}\ }\textbf {\bibinfo {volume} {100}},\ \bibinfo {pages}
  {022120} (\bibinfo {year} {2019})}\BibitemShut {NoStop}%
\bibitem [{\citenamefont {Milz}\ \emph {et~al.}(2019)\citenamefont {Milz},
  \citenamefont {Egloff}, \citenamefont {Taranto}, \citenamefont {Theurer},
  \citenamefont {Plenio}, \citenamefont {Smirne},\ and\ \citenamefont
  {Huelga}}]{MilzEgloffEtAlArXiv2019}%
  \BibitemOpen
  \bibfield  {author} {\bibinfo {author} {\bibfnamefont {S.}~\bibnamefont
  {Milz}}, \bibinfo {author} {\bibfnamefont {D.}~\bibnamefont {Egloff}},
  \bibinfo {author} {\bibfnamefont {P.}~\bibnamefont {Taranto}}, \bibinfo
  {author} {\bibfnamefont {T.}~\bibnamefont {Theurer}}, \bibinfo {author}
  {\bibfnamefont {M.~B.}\ \bibnamefont {Plenio}}, \bibinfo {author}
  {\bibfnamefont {A.}~\bibnamefont {Smirne}}, \ and\ \bibinfo {author}
  {\bibfnamefont {S.~F.}\ \bibnamefont {Huelga}},\ }\bibfield  {title}
  {\enquote {\bibinfo {title} {When is a non-{M}arkovian quantum process
  classical?}}\ }\href {https://arxiv.org/abs/1907.05807} {\bibfield  {journal}
  {\bibinfo  {journal} {arXiv: 1907.05807}\ } (\bibinfo {year}
  {2019})}\BibitemShut {NoStop}%
\bibitem [{\citenamefont {Ro\ss{}nagel}\ \emph {et~al.}(2016)\citenamefont
  {Ro\ss{}nagel}, \citenamefont {Dawkins}, \citenamefont {Tolazzi},
  \citenamefont {Abah}, \citenamefont {Lutz}, \citenamefont {Schmidt-Kaler},\
  and\ \citenamefont {Singer}}]{RossnagelEtAlScience2016}%
  \BibitemOpen
  \bibfield  {author} {\bibinfo {author} {\bibfnamefont {J.}~\bibnamefont
  {Ro\ss{}nagel}}, \bibinfo {author} {\bibfnamefont {S.~T.}\ \bibnamefont
  {Dawkins}}, \bibinfo {author} {\bibfnamefont {K.~N.}\ \bibnamefont
  {Tolazzi}}, \bibinfo {author} {\bibfnamefont {O.}~\bibnamefont {Abah}},
  \bibinfo {author} {\bibfnamefont {E.}~\bibnamefont {Lutz}}, \bibinfo {author}
  {\bibfnamefont {F.}~\bibnamefont {Schmidt-Kaler}}, \ and\ \bibinfo {author}
  {\bibfnamefont {K.}~\bibnamefont {Singer}},\ }\bibfield  {title} {\enquote
  {\bibinfo {title} {A single-atom heat engine},}\ }\href
  {http://science.sciencemag.org/content/352/6283/325} {\bibfield  {journal}
  {\bibinfo  {journal} {Science}\ }\textbf {\bibinfo {volume} {352}},\ \bibinfo
  {pages} {325--329} (\bibinfo {year} {2016})}\BibitemShut {NoStop}%
\bibitem [{\citenamefont {Lostaglio}(2018{\natexlab{b}})}]{LostaglioPRL2018}%
  \BibitemOpen
  \bibfield  {author} {\bibinfo {author} {\bibfnamefont {M.}~\bibnamefont
  {Lostaglio}},\ }\bibfield  {title} {\enquote {\bibinfo {title} {Quantum
  fluctuation theorems, contextuality, and work quasiprobabilities},}\ }\href
  {\doibase 10.1103/PhysRevLett.120.040602} {\bibfield  {journal} {\bibinfo
  {journal} {Phys. Rev. Lett.}\ }\textbf {\bibinfo {volume} {120}},\ \bibinfo
  {pages} {040602} (\bibinfo {year} {2018}{\natexlab{b}})}\BibitemShut
  {NoStop}%
\bibitem [{\citenamefont {Miller}\ and\ \citenamefont
  {Anders}(2018)}]{MillerAndersEntropy2018}%
  \BibitemOpen
  \bibfield  {author} {\bibinfo {author} {\bibfnamefont {H.~J.~D.}\
  \bibnamefont {Miller}}\ and\ \bibinfo {author} {\bibfnamefont
  {J.}~\bibnamefont {Anders}},\ }\bibfield  {title} {\enquote {\bibinfo {title}
  {Leggett-{G}arg inequalities for quantum fluctuating work},}\ }\href
  {https://www.mdpi.com/1099-4300/20/3/200} {\bibfield  {journal} {\bibinfo
  {journal} {Entropy}\ }\textbf {\bibinfo {volume} {20}},\ \bibinfo {pages}
  {200} (\bibinfo {year} {2018})}\BibitemShut {NoStop}%
\bibitem [{\citenamefont {Morikoshi}(2006)}]{MorikoshiPRA2006}%
  \BibitemOpen
  \bibfield  {author} {\bibinfo {author} {\bibfnamefont {F.}~\bibnamefont
  {Morikoshi}},\ }\bibfield  {title} {\enquote {\bibinfo {title}
  {Information-theoretic temporal {B}ell inequality and quantum computation},}\
  }\href {\doibase 10.1103/PhysRevA.73.052308} {\bibfield  {journal} {\bibinfo
  {journal} {Phys. Rev. A}\ }\textbf {\bibinfo {volume} {73}},\ \bibinfo
  {pages} {052308} (\bibinfo {year} {2006})}\BibitemShut {NoStop}%
\bibitem [{\citenamefont {Devi}\ \emph {et~al.}(2013)\citenamefont {Devi},
  \citenamefont {Karthik}, \citenamefont {Sudha},\ and\ \citenamefont
  {Rajagopal}}]{UshaDeviEtAlPRA2013}%
  \BibitemOpen
  \bibfield  {author} {\bibinfo {author} {\bibfnamefont {A.~R.~Usha}\
  \bibnamefont {Devi}}, \bibinfo {author} {\bibfnamefont {H.~S.}\ \bibnamefont
  {Karthik}}, \bibinfo {author} {\bibnamefont {Sudha}}, \ and\ \bibinfo
  {author} {\bibfnamefont {A.~K.}\ \bibnamefont {Rajagopal}},\ }\bibfield
  {title} {\enquote {\bibinfo {title} {Macrorealism from entropic
  {L}eggett-{G}arg inequalities},}\ }\href {\doibase
  10.1103/PhysRevA.87.052103} {\bibfield  {journal} {\bibinfo  {journal} {Phys.
  Rev. A}\ }\textbf {\bibinfo {volume} {87}},\ \bibinfo {pages} {052103}
  (\bibinfo {year} {2013})}\BibitemShut {NoStop}%
\bibitem [{\citenamefont {Oppenheim}\ \emph {et~al.}(2002)\citenamefont
  {Oppenheim}, \citenamefont {Horodecki}, \citenamefont {Horodecki},\ and\
  \citenamefont {Horodecki}}]{OppenheimEtAlPRL2002}%
  \BibitemOpen
  \bibfield  {author} {\bibinfo {author} {\bibfnamefont {J.}~\bibnamefont
  {Oppenheim}}, \bibinfo {author} {\bibfnamefont {M.}~\bibnamefont
  {Horodecki}}, \bibinfo {author} {\bibfnamefont {P.}~\bibnamefont
  {Horodecki}}, \ and\ \bibinfo {author} {\bibfnamefont {R.}~\bibnamefont
  {Horodecki}},\ }\bibfield  {title} {\enquote {\bibinfo {title}
  {Thermodynamical approach to quantifying quantum correlations},}\ }\href
  {\doibase 10.1103/PhysRevLett.89.180402} {\bibfield  {journal} {\bibinfo
  {journal} {Phys. Rev. Lett.}\ }\textbf {\bibinfo {volume} {89}},\ \bibinfo
  {pages} {180402} (\bibinfo {year} {2002})}\BibitemShut {NoStop}%
\bibitem [{\citenamefont {Zurek}(2003{\natexlab{b}})}]{ZurekPRA2003}%
  \BibitemOpen
  \bibfield  {author} {\bibinfo {author} {\bibfnamefont {W.~H.}\ \bibnamefont
  {Zurek}},\ }\bibfield  {title} {\enquote {\bibinfo {title} {Quantum discord
  and {M}axwell's demons},}\ }\href {\doibase 10.1103/PhysRevA.67.012320}
  {\bibfield  {journal} {\bibinfo  {journal} {Phys. Rev. A}\ }\textbf {\bibinfo
  {volume} {67}},\ \bibinfo {pages} {012320} (\bibinfo {year}
  {2003}{\natexlab{b}})}\BibitemShut {NoStop}%
\bibitem [{\citenamefont {Alicki}\ and\ \citenamefont
  {Fannes}(2013)}]{AlickiFannesPRE2013}%
  \BibitemOpen
  \bibfield  {author} {\bibinfo {author} {\bibfnamefont {R.}~\bibnamefont
  {Alicki}}\ and\ \bibinfo {author} {\bibfnamefont {M.}~\bibnamefont
  {Fannes}},\ }\bibfield  {title} {\enquote {\bibinfo {title} {Entanglement
  boost for extractable work from ensembles of quantum batteries},}\ }\href
  {\doibase 10.1103/PhysRevE.87.042123} {\bibfield  {journal} {\bibinfo
  {journal} {Phys. Rev. E}\ }\textbf {\bibinfo {volume} {87}},\ \bibinfo
  {pages} {042123} (\bibinfo {year} {2013})}\BibitemShut {NoStop}%
\bibitem [{\citenamefont {Hovhannisyan}\ \emph {et~al.}(2013)\citenamefont
  {Hovhannisyan}, \citenamefont {Perarnau-Llobet}, \citenamefont {Huber},\ and\
  \citenamefont {Ac\'{\i}n}}]{HovhannisyanPRL2013}%
  \BibitemOpen
  \bibfield  {author} {\bibinfo {author} {\bibfnamefont {K.~V.}\ \bibnamefont
  {Hovhannisyan}}, \bibinfo {author} {\bibfnamefont {M.}~\bibnamefont
  {Perarnau-Llobet}}, \bibinfo {author} {\bibfnamefont {M.}~\bibnamefont
  {Huber}}, \ and\ \bibinfo {author} {\bibfnamefont {A.}~\bibnamefont
  {Ac\'{\i}n}},\ }\bibfield  {title} {\enquote {\bibinfo {title} {Entanglement
  generation is not necessary for optimal work extraction},}\ }\href {\doibase
  10.1103/PhysRevLett.111.240401} {\bibfield  {journal} {\bibinfo  {journal}
  {Phys. Rev. Lett.}\ }\textbf {\bibinfo {volume} {111}},\ \bibinfo {pages}
  {240401} (\bibinfo {year} {2013})}\BibitemShut {NoStop}%
\bibitem [{\citenamefont {Perarnau-Llobet}\ \emph {et~al.}(2015)\citenamefont
  {Perarnau-Llobet}, \citenamefont {Hovhannisyan}, \citenamefont {Huber},
  \citenamefont {Skrzypczyk}, \citenamefont {Brunner},\ and\ \citenamefont
  {Ac\'{\i}n}}]{PerarnauLlobetEtAlPRX2015}%
  \BibitemOpen
  \bibfield  {author} {\bibinfo {author} {\bibfnamefont {M.}~\bibnamefont
  {Perarnau-Llobet}}, \bibinfo {author} {\bibfnamefont {K.~V.}\ \bibnamefont
  {Hovhannisyan}}, \bibinfo {author} {\bibfnamefont {M.}~\bibnamefont {Huber}},
  \bibinfo {author} {\bibfnamefont {P.}~\bibnamefont {Skrzypczyk}}, \bibinfo
  {author} {\bibfnamefont {N.}~\bibnamefont {Brunner}}, \ and\ \bibinfo
  {author} {\bibfnamefont {A.}~\bibnamefont {Ac\'{\i}n}},\ }\bibfield  {title}
  {\enquote {\bibinfo {title} {Extractable work from correlations},}\ }\href
  {\doibase 10.1103/PhysRevX.5.041011} {\bibfield  {journal} {\bibinfo
  {journal} {Phys. Rev. X}\ }\textbf {\bibinfo {volume} {5}},\ \bibinfo {pages}
  {041011} (\bibinfo {year} {2015})}\BibitemShut {NoStop}%
\bibitem [{\citenamefont {Manzano}\ \emph
  {et~al.}(2018{\natexlab{b}})\citenamefont {Manzano}, \citenamefont
  {Plastina},\ and\ \citenamefont {Zambrini}}]{ManzanoPlastinaZambriniPRL2018}%
  \BibitemOpen
  \bibfield  {author} {\bibinfo {author} {\bibfnamefont {G.}~\bibnamefont
  {Manzano}}, \bibinfo {author} {\bibfnamefont {F.}~\bibnamefont {Plastina}}, \
  and\ \bibinfo {author} {\bibfnamefont {R.}~\bibnamefont {Zambrini}},\
  }\bibfield  {title} {\enquote {\bibinfo {title} {Optimal work extraction and
  thermodynamics of quantum measurements and correlations},}\ }\href {\doibase
  10.1103/PhysRevLett.121.120602} {\bibfield  {journal} {\bibinfo  {journal}
  {Phys. Rev. Lett.}\ }\textbf {\bibinfo {volume} {121}},\ \bibinfo {pages}
  {120602} (\bibinfo {year} {2018}{\natexlab{b}})}\BibitemShut {NoStop}%
\bibitem [{\citenamefont {S\'anchez}\ and\ \citenamefont
  {B\"uttiker}(2011)}]{SanchezBuettikerPRB2011}%
  \BibitemOpen
  \bibfield  {author} {\bibinfo {author} {\bibfnamefont {Rafael}\ \bibnamefont
  {S\'anchez}}\ and\ \bibinfo {author} {\bibfnamefont {Markus}\ \bibnamefont
  {B\"uttiker}},\ }\bibfield  {title} {\enquote {\bibinfo {title} {Optimal
  energy quanta to current conversion},}\ }\href {\doibase
  10.1103/PhysRevB.83.085428} {\bibfield  {journal} {\bibinfo  {journal} {Phys.
  Rev. B}\ }\textbf {\bibinfo {volume} {83}},\ \bibinfo {pages} {085428}
  (\bibinfo {year} {2011})}\BibitemShut {NoStop}%
\bibitem [{\citenamefont {Strasberg}\ \emph {et~al.}(2013)\citenamefont
  {Strasberg}, \citenamefont {Schaller}, \citenamefont {Brandes},\ and\
  \citenamefont {Esposito}}]{StrasbergEtAlPRL2013}%
  \BibitemOpen
  \bibfield  {author} {\bibinfo {author} {\bibfnamefont {P.}~\bibnamefont
  {Strasberg}}, \bibinfo {author} {\bibfnamefont {G.}~\bibnamefont {Schaller}},
  \bibinfo {author} {\bibfnamefont {T.}~\bibnamefont {Brandes}}, \ and\
  \bibinfo {author} {\bibfnamefont {M.}~\bibnamefont {Esposito}},\ }\bibfield
  {title} {\enquote {\bibinfo {title} {Thermodynamics of a physical model
  implementing a {M}axwell demon},}\ }\href {\doibase
  10.1103/PhysRevLett.110.040601} {\bibfield  {journal} {\bibinfo  {journal}
  {Phys. Rev. Lett.}\ }\textbf {\bibinfo {volume} {110}},\ \bibinfo {pages}
  {040601} (\bibinfo {year} {2013})}\BibitemShut {NoStop}%
\bibitem [{\citenamefont {Hartmann}\ \emph {et~al.}(2015)\citenamefont
  {Hartmann}, \citenamefont {Pfeffer}, \citenamefont {H\"ofling}, \citenamefont
  {Kamp},\ and\ \citenamefont {Worschech}}]{HartmannEtAlPRL2015}%
  \BibitemOpen
  \bibfield  {author} {\bibinfo {author} {\bibfnamefont {F.}~\bibnamefont
  {Hartmann}}, \bibinfo {author} {\bibfnamefont {P.}~\bibnamefont {Pfeffer}},
  \bibinfo {author} {\bibfnamefont {S.}~\bibnamefont {H\"ofling}}, \bibinfo
  {author} {\bibfnamefont {M.}~\bibnamefont {Kamp}}, \ and\ \bibinfo {author}
  {\bibfnamefont {L.}~\bibnamefont {Worschech}},\ }\bibfield  {title} {\enquote
  {\bibinfo {title} {Voltage fluctuation to current converter with
  {C}oulomb-coupled quantum dots},}\ }\href {\doibase
  10.1103/PhysRevLett.114.146805} {\bibfield  {journal} {\bibinfo  {journal}
  {Phys. Rev. Lett.}\ }\textbf {\bibinfo {volume} {114}},\ \bibinfo {pages}
  {146805} (\bibinfo {year} {2015})}\BibitemShut {NoStop}%
\bibitem [{\citenamefont {Thierschmann}\ \emph {et~al.}(2015)\citenamefont
  {Thierschmann}, \citenamefont {S\'anchez}, \citenamefont {Sothmann},
  \citenamefont {Arnold}, \citenamefont {Heyn}, \citenamefont {Hansen},
  \citenamefont {Buhmann},\ and\ \citenamefont
  {Molenkamp}}]{ThierschmannEtAlNatNanotech2015}%
  \BibitemOpen
  \bibfield  {author} {\bibinfo {author} {\bibfnamefont {H.}~\bibnamefont
  {Thierschmann}}, \bibinfo {author} {\bibfnamefont {R.}~\bibnamefont
  {S\'anchez}}, \bibinfo {author} {\bibfnamefont {B.}~\bibnamefont {Sothmann}},
  \bibinfo {author} {\bibfnamefont {F.}~\bibnamefont {Arnold}}, \bibinfo
  {author} {\bibfnamefont {C.}~\bibnamefont {Heyn}}, \bibinfo {author}
  {\bibfnamefont {W.}~\bibnamefont {Hansen}}, \bibinfo {author} {\bibfnamefont
  {H.}~\bibnamefont {Buhmann}}, \ and\ \bibinfo {author} {\bibfnamefont
  {L.~W.}\ \bibnamefont {Molenkamp}},\ }\bibfield  {title} {\enquote {\bibinfo
  {title} {Three-terminal energy harvester with coupled quantum dots},}\ }\href
  {https://www.nature.com/articles/nnano.2015.176} {\bibfield  {journal}
  {\bibinfo  {journal} {Nat. Nanotechnol.}\ }\textbf {\bibinfo {volume} {10}},\
  \bibinfo {pages} {854--858} (\bibinfo {year} {2015})}\BibitemShut {NoStop}%
\bibitem [{\citenamefont {Koski}\ \emph {et~al.}(2015)\citenamefont {Koski},
  \citenamefont {Kutvonen}, \citenamefont {Khaymovich}, \citenamefont
  {Ala-Nissila},\ and\ \citenamefont {Pekola}}]{KoskiEtAlPRL2015}%
  \BibitemOpen
  \bibfield  {author} {\bibinfo {author} {\bibfnamefont {J.~V.}\ \bibnamefont
  {Koski}}, \bibinfo {author} {\bibfnamefont {A.}~\bibnamefont {Kutvonen}},
  \bibinfo {author} {\bibfnamefont {I.~M.}\ \bibnamefont {Khaymovich}},
  \bibinfo {author} {\bibfnamefont {T.}~\bibnamefont {Ala-Nissila}}, \ and\
  \bibinfo {author} {\bibfnamefont {J.~P.}\ \bibnamefont {Pekola}},\ }\bibfield
   {title} {\enquote {\bibinfo {title} {On-chip {M}axwell's demon as an
  information-powered refrigerator},}\ }\href
  {https://journals.aps.org/prl/abstract/10.1103/PhysRevLett.115.260602}
  {\bibfield  {journal} {\bibinfo  {journal} {Phys. Rev. Lett.}\ }\textbf
  {\bibinfo {volume} {115}},\ \bibinfo {pages} {260602} (\bibinfo {year}
  {2015})}\BibitemShut {NoStop}%
\bibitem [{\citenamefont {Lambert}\ \emph {et~al.}(2013)\citenamefont
  {Lambert}, \citenamefont {Chen}, \citenamefont {Cheng}, \citenamefont {Li},
  \citenamefont {Chen},\ and\ \citenamefont {Nori}}]{LambertEtAlNatPhys2013}%
  \BibitemOpen
  \bibfield  {author} {\bibinfo {author} {\bibfnamefont {N.}~\bibnamefont
  {Lambert}}, \bibinfo {author} {\bibfnamefont {Y.~N.}\ \bibnamefont {Chen}},
  \bibinfo {author} {\bibfnamefont {Y.~C.}\ \bibnamefont {Cheng}}, \bibinfo
  {author} {\bibfnamefont {C.~M.}\ \bibnamefont {Li}}, \bibinfo {author}
  {\bibfnamefont {G.~Y.}\ \bibnamefont {Chen}}, \ and\ \bibinfo {author}
  {\bibfnamefont {F.}~\bibnamefont {Nori}},\ }\bibfield  {title} {\enquote
  {\bibinfo {title} {Quantum biology},}\ }\href
  {https://www.nature.com/articles/nphys2474} {\bibfield  {journal} {\bibinfo
  {journal} {Nat. Phys.}\ }\textbf {\bibinfo {volume} {9}},\ \bibinfo {pages}
  {10--18} (\bibinfo {year} {2013})}\BibitemShut {NoStop}%
\bibitem [{\citenamefont {Huelga}\ and\ \citenamefont
  {Plenio}(2013)}]{HuelgaPlenioCP2013}%
  \BibitemOpen
  \bibfield  {author} {\bibinfo {author} {\bibfnamefont {S.~F.}\ \bibnamefont
  {Huelga}}\ and\ \bibinfo {author} {\bibfnamefont {M.~B.}\ \bibnamefont
  {Plenio}},\ }\bibfield  {title} {\enquote {\bibinfo {title} {Vibrations,
  quanta and biology},}\ }\href
  {https://www.tandfonline.com/doi/full/10.1080/00405000.2013.829687}
  {\bibfield  {journal} {\bibinfo  {journal} {Contemp. Phys.}\ }\textbf
  {\bibinfo {volume} {54}},\ \bibinfo {pages} {181--207} (\bibinfo {year}
  {2013})}\BibitemShut {NoStop}%
\bibitem [{\citenamefont {Bylicka}\ \emph {et~al.}(2016)\citenamefont
  {Bylicka}, \citenamefont {Tukiainen}, \citenamefont {Piilo}, \citenamefont
  {Chruscinski},\ and\ \citenamefont {Maniscalco}}]{BylickaEtAlSciRep2016}%
  \BibitemOpen
  \bibfield  {author} {\bibinfo {author} {\bibfnamefont {B.}~\bibnamefont
  {Bylicka}}, \bibinfo {author} {\bibfnamefont {M.}~\bibnamefont {Tukiainen}},
  \bibinfo {author} {\bibfnamefont {J.}~\bibnamefont {Piilo}}, \bibinfo
  {author} {\bibfnamefont {D.}~\bibnamefont {Chruscinski}}, \ and\ \bibinfo
  {author} {\bibfnamefont {S.}~\bibnamefont {Maniscalco}},\ }\bibfield  {title}
  {\enquote {\bibinfo {title} {Thermodynamic meaning and power of
  non-{M}arkovianity},}\ }\href {https://www.nature.com/articles/srep27989}
  {\bibfield  {journal} {\bibinfo  {journal} {Sci. Rep.}\ }\textbf {\bibinfo
  {volume} {6}},\ \bibinfo {pages} {27989} (\bibinfo {year}
  {2016})}\BibitemShut {NoStop}%
\bibitem [{\citenamefont {Strasberg}\ \emph {et~al.}(2016)\citenamefont
  {Strasberg}, \citenamefont {Schaller}, \citenamefont {Lambert},\ and\
  \citenamefont {Brandes}}]{StrasbergEtAlNJP2016}%
  \BibitemOpen
  \bibfield  {author} {\bibinfo {author} {\bibfnamefont {P.}~\bibnamefont
  {Strasberg}}, \bibinfo {author} {\bibfnamefont {G.}~\bibnamefont {Schaller}},
  \bibinfo {author} {\bibfnamefont {N.}~\bibnamefont {Lambert}}, \ and\
  \bibinfo {author} {\bibfnamefont {T.}~\bibnamefont {Brandes}},\ }\bibfield
  {title} {\enquote {\bibinfo {title} {Nonequilibrium thermodynamics in the
  strong coupling and non-{M}arkovian regime based on a reaction coordinate
  mapping},}\ }\href
  {http://iopscience.iop.org/article/10.1088/1367-2630/18/7/073007/meta}
  {\bibfield  {journal} {\bibinfo  {journal} {New. J. Phys.}\ }\textbf
  {\bibinfo {volume} {18}},\ \bibinfo {pages} {073007} (\bibinfo {year}
  {2016})}\BibitemShut {NoStop}%
\bibitem [{\citenamefont {Wertnik}\ \emph {et~al.}(2018)\citenamefont
  {Wertnik}, \citenamefont {Chin}, \citenamefont {Nori},\ and\ \citenamefont
  {Lambert}}]{WertnikEtALJCP2018}%
  \BibitemOpen
  \bibfield  {author} {\bibinfo {author} {\bibfnamefont {M.}~\bibnamefont
  {Wertnik}}, \bibinfo {author} {\bibfnamefont {A.}~\bibnamefont {Chin}},
  \bibinfo {author} {\bibfnamefont {F.}~\bibnamefont {Nori}}, \ and\ \bibinfo
  {author} {\bibfnamefont {N.}~\bibnamefont {Lambert}},\ }\bibfield  {title}
  {\enquote {\bibinfo {title} {Optimizing co-operative multi-environment
  dynamics in a dark-state-enhanced photosynthetic heat engine},}\ }\href
  {https://aip.scitation.org/doi/10.1063/1.5040898} {\bibfield  {journal}
  {\bibinfo  {journal} {J. Chem. Phys.}\ }\textbf {\bibinfo {volume} {149}},\
  \bibinfo {pages} {084112} (\bibinfo {year} {2018})}\BibitemShut {NoStop}%
\bibitem [{\citenamefont {Rivas}\ \emph {et~al.}(2014)\citenamefont {Rivas},
  \citenamefont {Huelga},\ and\ \citenamefont
  {Plenio}}]{RivasHuelgaPlenioRPP2014}%
  \BibitemOpen
  \bibfield  {author} {\bibinfo {author} {\bibfnamefont {A.}~\bibnamefont
  {Rivas}}, \bibinfo {author} {\bibfnamefont {S.~F.}\ \bibnamefont {Huelga}}, \
  and\ \bibinfo {author} {\bibfnamefont {M.~B.}\ \bibnamefont {Plenio}},\
  }\bibfield  {title} {\enquote {\bibinfo {title} {Quantum non-{M}arkovianity:
  {C}haracterization, quantification and detection},}\ }\href
  {http://iopscience.iop.org/article/10.1088/0034-4885/77/9/094001/meta}
  {\bibfield  {journal} {\bibinfo  {journal} {Rep. Prog. Phys.}\ }\textbf
  {\bibinfo {volume} {77}},\ \bibinfo {pages} {094001} (\bibinfo {year}
  {2014})}\BibitemShut {NoStop}%
\bibitem [{\citenamefont {Breuer}\ \emph {et~al.}(2016)\citenamefont {Breuer},
  \citenamefont {Laine}, \citenamefont {Piilo},\ and\ \citenamefont
  {Vacchini}}]{BreuerEtAlRMP2016}%
  \BibitemOpen
  \bibfield  {author} {\bibinfo {author} {\bibfnamefont {H.-P.}\ \bibnamefont
  {Breuer}}, \bibinfo {author} {\bibfnamefont {E.-M.}\ \bibnamefont {Laine}},
  \bibinfo {author} {\bibfnamefont {J.}~\bibnamefont {Piilo}}, \ and\ \bibinfo
  {author} {\bibfnamefont {B.}~\bibnamefont {Vacchini}},\ }\bibfield  {title}
  {\enquote {\bibinfo {title} {Colloquium: {N}on-{M}arkovian dynamics in open
  quantum systems},}\ }\href {\doibase 10.1103/RevModPhys.88.021002} {\bibfield
   {journal} {\bibinfo  {journal} {Rev. Mod. Phys.}\ }\textbf {\bibinfo
  {volume} {88}},\ \bibinfo {pages} {021002} (\bibinfo {year}
  {2016})}\BibitemShut {NoStop}%
\bibitem [{\citenamefont {Strasberg}\ and\ \citenamefont
  {Esposito}(2019)}]{StrasbergEspositoPRE2019}%
  \BibitemOpen
  \bibfield  {author} {\bibinfo {author} {\bibfnamefont {P.}~\bibnamefont
  {Strasberg}}\ and\ \bibinfo {author} {\bibfnamefont {M.}~\bibnamefont
  {Esposito}},\ }\bibfield  {title} {\enquote {\bibinfo {title}
  {Non-{M}arkovianity and negative entropy production rates},}\ }\href
  {\doibase 10.1103/PhysRevE.99.012120} {\bibfield  {journal} {\bibinfo
  {journal} {Phys. Rev. E}\ }\textbf {\bibinfo {volume} {99}},\ \bibinfo
  {pages} {012120} (\bibinfo {year} {2019})}\BibitemShut {NoStop}%
\bibitem [{\citenamefont {Katz}\ and\ \citenamefont
  {Kosloff}(2016)}]{KatzKosloffEnt2016}%
  \BibitemOpen
  \bibfield  {author} {\bibinfo {author} {\bibfnamefont {G.}~\bibnamefont
  {Katz}}\ and\ \bibinfo {author} {\bibfnamefont {R.}~\bibnamefont {Kosloff}},\
  }\bibfield  {title} {\enquote {\bibinfo {title} {Quantum thermodynamics in
  strong coupling: {H}eat transport and refrigeration},}\ }\href
  {http://www.mdpi.com/1099-4300/18/5/186/htm} {\bibfield  {journal} {\bibinfo
  {journal} {Entropy}\ }\textbf {\bibinfo {volume} {18}},\ \bibinfo {pages}
  {186} (\bibinfo {year} {2016})}\BibitemShut {NoStop}%
\bibitem [{\citenamefont {Newman}\ \emph {et~al.}(2017)\citenamefont {Newman},
  \citenamefont {Mintert},\ and\ \citenamefont
  {Nazir}}]{NewmanMintertNazirPRE2017}%
  \BibitemOpen
  \bibfield  {author} {\bibinfo {author} {\bibfnamefont {D.}~\bibnamefont
  {Newman}}, \bibinfo {author} {\bibfnamefont {F.}~\bibnamefont {Mintert}}, \
  and\ \bibinfo {author} {\bibfnamefont {A.}~\bibnamefont {Nazir}},\ }\bibfield
   {title} {\enquote {\bibinfo {title} {Performance of a quantum heat engine at
  strong reservoir coupling},}\ }\href {\doibase 10.1103/PhysRevE.95.032139}
  {\bibfield  {journal} {\bibinfo  {journal} {Phys. Rev. E}\ }\textbf {\bibinfo
  {volume} {95}},\ \bibinfo {pages} {032139} (\bibinfo {year}
  {2017})}\BibitemShut {NoStop}%
\bibitem [{\citenamefont {Strasberg}\ and\ \citenamefont
  {Esposito}(2017)}]{StrasbergEspositoPRE2017}%
  \BibitemOpen
  \bibfield  {author} {\bibinfo {author} {\bibfnamefont {P.}~\bibnamefont
  {Strasberg}}\ and\ \bibinfo {author} {\bibfnamefont {M.}~\bibnamefont
  {Esposito}},\ }\bibfield  {title} {\enquote {\bibinfo {title} {Stochastic
  thermodynamics in the strong coupling regime: An unambiguous approach based
  on coarse graining},}\ }\href {\doibase 10.1103/PhysRevE.95.062101}
  {\bibfield  {journal} {\bibinfo  {journal} {Phys. Rev. E}\ }\textbf {\bibinfo
  {volume} {95}},\ \bibinfo {pages} {062101} (\bibinfo {year}
  {2017})}\BibitemShut {NoStop}%
\bibitem [{\citenamefont {Perarnau-Llobet}\ \emph {et~al.}(2018)\citenamefont
  {Perarnau-Llobet}, \citenamefont {Wilming}, \citenamefont {Riera},
  \citenamefont {Gallego},\ and\ \citenamefont
  {Eisert}}]{PerarnauLlobetEtAlPRL2018}%
  \BibitemOpen
  \bibfield  {author} {\bibinfo {author} {\bibfnamefont {M.}~\bibnamefont
  {Perarnau-Llobet}}, \bibinfo {author} {\bibfnamefont {H.}~\bibnamefont
  {Wilming}}, \bibinfo {author} {\bibfnamefont {A.}~\bibnamefont {Riera}},
  \bibinfo {author} {\bibfnamefont {R.}~\bibnamefont {Gallego}}, \ and\
  \bibinfo {author} {\bibfnamefont {J.}~\bibnamefont {Eisert}},\ }\bibfield
  {title} {\enquote {\bibinfo {title} {Strong coupling corrections in quantum
  thermodynamics},}\ }\href {\doibase 10.1103/PhysRevLett.120.120602}
  {\bibfield  {journal} {\bibinfo  {journal} {Phys. Rev. Lett.}\ }\textbf
  {\bibinfo {volume} {120}},\ \bibinfo {pages} {120602} (\bibinfo {year}
  {2018})}\BibitemShut {NoStop}%
\bibitem [{\citenamefont {Schaller}\ \emph {et~al.}(2018)\citenamefont
  {Schaller}, \citenamefont {Cerrillo}, \citenamefont {Engelhardt},\ and\
  \citenamefont {Strasberg}}]{SchallerEtAlPRB2018}%
  \BibitemOpen
  \bibfield  {author} {\bibinfo {author} {\bibfnamefont {G.}~\bibnamefont
  {Schaller}}, \bibinfo {author} {\bibfnamefont {J.}~\bibnamefont {Cerrillo}},
  \bibinfo {author} {\bibfnamefont {G.}~\bibnamefont {Engelhardt}}, \ and\
  \bibinfo {author} {\bibfnamefont {P.}~\bibnamefont {Strasberg}},\ }\bibfield
  {title} {\enquote {\bibinfo {title} {Electronic {M}axwell demon in the
  coherent strong-coupling regime},}\ }\href {\doibase
  10.1103/PhysRevB.97.195104} {\bibfield  {journal} {\bibinfo  {journal} {Phys.
  Rev. B}\ }\textbf {\bibinfo {volume} {97}},\ \bibinfo {pages} {195104}
  (\bibinfo {year} {2018})}\BibitemShut {NoStop}%
\bibitem [{\citenamefont {Strasberg}\ \emph {et~al.}(2018)\citenamefont
  {Strasberg}, \citenamefont {Schaller}, \citenamefont {Schmidt},\ and\
  \citenamefont {Esposito}}]{StrasbergEtAlPRB2018}%
  \BibitemOpen
  \bibfield  {author} {\bibinfo {author} {\bibfnamefont {P.}~\bibnamefont
  {Strasberg}}, \bibinfo {author} {\bibfnamefont {G.}~\bibnamefont {Schaller}},
  \bibinfo {author} {\bibfnamefont {T.~L.}\ \bibnamefont {Schmidt}}, \ and\
  \bibinfo {author} {\bibfnamefont {M.}~\bibnamefont {Esposito}},\ }\bibfield
  {title} {\enquote {\bibinfo {title} {Fermionic reaction coordinates and their
  application to an autonomous {M}axwell demon in the strong-coupling
  regime},}\ }\href {\doibase 10.1103/PhysRevB.97.205405} {\bibfield  {journal}
  {\bibinfo  {journal} {Phys. Rev. B}\ }\textbf {\bibinfo {volume} {97}},\
  \bibinfo {pages} {205405} (\bibinfo {year} {2018})}\BibitemShut {NoStop}%
\bibitem [{\citenamefont {Restrepo}\ \emph {et~al.}(2018)\citenamefont
  {Restrepo}, \citenamefont {Cerrillo}, \citenamefont {Strasberg},\ and\
  \citenamefont {Schaller}}]{RestrepoEtALNJP2018}%
  \BibitemOpen
  \bibfield  {author} {\bibinfo {author} {\bibfnamefont {S.}~\bibnamefont
  {Restrepo}}, \bibinfo {author} {\bibfnamefont {J.}~\bibnamefont {Cerrillo}},
  \bibinfo {author} {\bibfnamefont {P.}~\bibnamefont {Strasberg}}, \ and\
  \bibinfo {author} {\bibfnamefont {G.}~\bibnamefont {Schaller}},\ }\bibfield
  {title} {\enquote {\bibinfo {title} {From quantum heat engines to laser
  cooling: {F}loquet theory beyond the {B}orn-{M}arkov approximation},}\ }\href
  {http://iopscience.iop.org/article/10.1088/1367-2630/aac583} {\bibfield
  {journal} {\bibinfo  {journal} {New J. Phys.}\ }\textbf {\bibinfo {volume}
  {20}},\ \bibinfo {pages} {053063} (\bibinfo {year} {2018})}\BibitemShut
  {NoStop}%
\bibitem [{\citenamefont {Strasberg}(2019)}]{StrasbergTBP}%
  \BibitemOpen
  \bibfield  {author} {\bibinfo {author} {\bibfnamefont {P.}~\bibnamefont
  {Strasberg}},\ }\bibfield  {title} {\enquote {\bibinfo {title} {Repeated
  interactions and quantum stochastic thermodynamics at strong coupling},}\
  }\href {https://arxiv.org/abs/1907.01804} {\bibfield  {journal} {\bibinfo
  {journal} {arXiv: 1907.01804}\ } (\bibinfo {year} {2019})}\BibitemShut
  {NoStop}%
\end{thebibliography}%

\end{document}